\theoremstyle{plain}
\newtheorem{theorem}{Theorem}[section]
\newtheorem{lemma}[theorem]{Lemma}
\newtheorem{corollary}[theorem]{Corollary}
\theoremstyle{definition}
\newtheorem{example}[theorem]{Example}
\theoremstyle{remark}
\newtheorem{remark}[theorem]{Remark}
\newcommand{\cS}{\ensuremath{\mathcal{S}}\xspace}
\renewcommand{\phi}{\varphi}
\newcommand{\alp}{\alpha}
\newcommand{\bet}{\beta}
\newcommand{\gam}{\gamma}
\newcommand{\del}{\delta}
\newcommand{\sig}{\sigma}
\newcommand{\Sig}{\Sigma}
\newcommand{\sse}{\subseteq}
\newcommand{\ssne}{\subsetneq}
\newcommand{\sm}{\setminus}
\newcommand{\ie}{i.\,e.,\xspace}
\newcommand{\eg}{e.\,g.,\xspace}
\newcommand{\resp}{resp.\xspace}
\newcommand\ti[1]{\tilde{#1}}
\newcommand{\set}[2]{\left\{#1\mathrel{\left|\vphantom{#1}\vphantom{#2}\right.}#2\right\}}
\newcommand{\oneset}[1]{\left\{\mathinner{#1}\right\}}
\newcommand{\abs}[1]{\left|\mathinner{#1}\right|}
\newcommand{\N}{\mathbb{N}}
\newcommand{\Oh}{\mathcal{O}}
\newcommand{\e}{\varepsilon}
\newcommand{\PSPACE}{\ensuremath{\mathrm{PSPACE}}\xspace}
\newcommand{\eLL}{\le_{\ell\ell}}
\newcommand{\LL}{<_{\ell\ell}}
\newcommand\pos[1]{{[#1]}}
\newcommand{\sfA}{\ensuremath{\mathsf{A}}\xspace}
\newcommand{\sfC}{\ensuremath{\mathsf{C}}\xspace}
\newcommand{\sfG}{\ensuremath{\mathsf{G}}\xspace}
\newcommand{\sfT}{\ensuremath{\mathsf{T}}\xspace}
\newcommand{\bA}{\sfA}
\newcommand{\bC}{\sfC}
\newcommand{\bG}{\sfG}
\newcommand{\bT}{\sfT}
\newcommand{\smalloverline}[1]
{{\mspace{1.5mu}\overline{\mspace{-1.5mu}#1\mspace{-1.5mu}}\mspace{1.5mu}}}
\newcommand{\ov}[1]{\smalloverline{#1}\vphantom{#1}}
\newcommand\splice[1]{\mathrel{\vdash_{#1}}}
\newcommand\abgd{\alp\bet\gam\del}
\newcommand\abg{\alp\bet\gam}
\newcommand\abjgd{\alp\bet^j\gam\del}
\newcommand\abjg{\alp\bet^j\gam}
\newcommand\I{(i)\xspace}
\newcommand\II{(ii)\xspace}
\newcommand\A{(a)\xspace}
\newcommand\B{(b)\xspace}
\colorlet{eins}{blue!90!black}
\colorlet{zwei}{red!90!black}
\colorlet{drei}{green!70!black}
\colorlet{vier}{orange!90!black}
\newcommand{\pumping}[1]{%
	\begin{enumerate}
		\item $\ti z := z$;
		\item if $\ti z\pos{k;k+#1} = \abg$ for some $k$ such that neither
		\begin{enumerate}[(a)]
			\item $\alp\bet^{j\slash 2}$ is a factor of $\ti z$ starting at position
				$\ti z \pos k$ nor
			\item $\bet^{j\slash 2}\gam$ is a factor of $\ti z$ ending at position
				$\ti z \pos{k+#1}$,
		\end{enumerate}
		then let $\ti z := \ti z\pos{0;k} \cdot\abjg \cdot\ti z\pos{k+#1;\abs{\ti z}}$;
		(replace the factor $\ti z\pos{k;k+#1} = \abg$ in $\ti z$ by $\abjg$)
		\item repeat step 2 until there is no such factor $\abg$ in $\ti z$ left.
	\end{enumerate}%
}
\begin{document}

\title{Deciding Whether a Regular Language is\\
	Generated by a Splicing System}

\author{Lila Kari \and Steffen Kopecki}

\date{}

\maketitle

{\small\centering
	Department of Computer Science \\
	The University of Western Ontario \\
	Middlesex College, London ON N6A 5B7 Canada \\
	{\tt\{lila,steffen\}@csd.uwo.ca} \par
}

\begin{abstract}
Splicing as a binary word/language operation is inspired by the
DNA recombination under the action of restriction enzymes and ligases, and
was first introduced by Tom Head in 1987. Shortly thereafter, it was proven
that the languages generated by (finite) splicing systems form a proper
subclass of the class of regular languages. However, the question of
whether or not one can decide if a given regular language is generated by
a splicing system remained open. In this paper we give a positive answer
to this question. Namely, we prove that, if a language is generated by a
splicing system, then it is also generated by a splicing system whose size
is a function of the size of the syntactic monoid of the input language,
and which can be effectively constructed.
\end{abstract}

%%%%%%%%%%%%%%%%%%%%%%%%%%%%%%%%%%%%%%%%%%%%%%%%%%%%%%%%%%%%%%%%%%%%%%%%%%%%%%%%
%%%%%%%%%%%%%%%%%%%%%%%%%%%%%%%%%%%%%%%%%%%%%%%%%%%%%%%%%%%%%%%%%%%%%%%%%%%%%%%%
\section{Introduction}

%example strands
\newcommand{\strandA}{
	\node at (.15,-.2) {$\bC$};
	\node at (-.15,-.2) {$\bG$};
	\draw (-.45,-.2) node [fill=white] {$\bA$} --
		++(-.75,0) node [fill=white] {$\ov\alp$} --
		++(-.75,0) node [fill=white] {$3'$};
	\draw (-.45,.2) node [fill=white] {$\bT$} --
		++(-.75,0) node [fill=white] {$\alp$} --
		++(-.75,0) node [fill=white] {$5'$};
}

\newcommand{\strandB}{
	\node at (-.15,.2) {$\bC$};
	\node at (.15,.2) {$\bG$};
	\draw (.45,-.2) node [fill=white] {$\bT$} --
		++(.75,0) node [fill=white] {$\ov\bet$} --
		++(.75,0) node [fill=white] {$5'$};
	\draw (.45,.2) node [fill=white] {$\bA$} --
		++(.75,0) node [fill=white] {$\bet$} --
		++(.75,0) node [fill=white] {$3'$};
	\draw [dotted] (-.3,.35) -- (-.3,0) -- (.3,0) -- (.3,-.35);
}

\newcommand{\strandC}{
	\node at (.15,-.2) {$\bC$};
	\node at (-.15,-.2) {$\bG$};
	\draw (-.45,-.2) node [fill=white] {$\bC$} --
		++(-.75,0) node [fill=white] {$\ov\gam$} --
		++(-.75,0) node [fill=white] {$3'$};
	\draw (-.45,.2) node [fill=white] {$\bG$} --
		++(-.75,0) node [fill=white] {$\gam$} --
		++(-.75,0) node [fill=white] {$5'$};
	\draw [dotted] (-.3,.35) -- (-.3,0) -- (.3,0) -- (.3,-.35);
}

\newcommand{\strandD}{
	\node at (-.15,.2) {$\bC$};
	\node at (.15,.2) {$\bG$};
	\draw (.45,-.2) node [fill=white] {$\bG$} --
		++(.75,0) node [fill=white] {$\ov\del$} --
		++(.75,0) node [fill=white] {$5'$};
	\draw (.45,.2) node [fill=white] {$\bC$} --
		++(.75,0) node [fill=white] {$\del$} --
		++(.75,0) node [fill=white] {$3'$};
}

In \cite{Head87} Head described an language-theoretic operation, called
{\em splicing}, which models DNA
recombination, a cut-and-paste operation on DNA double-strands.
%, under the action of restriction enzymes and ligases.
Recall that a {\em DNA single-strand} is a polymer consisting of a series of the nucleobases Adenine (\bA), Cytosine (\bC), Guanine (\bG), and Thymine (\bT) attached to a linear, directed backbone.
Due to the chemical structure of the backbone, the ends of a single-strand are called $3'$-end and $5'$-end.
Abstractly, a DNA single-strand can be viewed as a string over the four letter alphabet $\oneset{\bA,\bC,\bG,\bT}$.
The bases \bA and \bT, respectively \bC and \bG, are {\em Watson-Crick-complementary}, or simply {\em complementary}, which means they can attach
to each other via hydrogen bonds.
The {\em complement} of a DNA single-strand 
$\alp = 5'\text-a_1\cdots a_n\text-3'$ is the strand 
$\ov\alp = 3'\text-\ov{a_1}\cdots \ov{a_n}\text-5'$ where $a_1,\ldots,a_n$ are bases and $\ov{a_1},\ldots,\ov{a_n}$ denote their complementary bases, respectively; note that $\alp$ and $\ov\alp$ have opposite orientation.
A strand $\alp$ and its complement $\ov\alp$ can bond to each other to form a {\em DNA (double-)strand}.

Splicing is meant to abstract the action of two compatible restriction enzymes and the ligase enzyme on two DNA double-strands.
The first restriction enzyme recognizes a base-sequence $u_1 v_1$, called its {\it restriction site}, in any DNA string, and cuts the string containing this factor between $u_1$ and $v_1$.
The second restriction enzyme, with restriction site $u_2 v_2$, acts similarly.  Assuming that the {\em sticky ends} obtained after these cuts are complementary, the enzyme ligase aids then the recombination (catenation) of the first segment of
one cut string with the second segment of another cut string.
For example, the enzyme {\it Taq}\,I has restriction site $\bT\bC\bG\bA$, and the enzyme {\it Sci}\,NI has restriction site $\bG\bC\bG\bC$.
The enzymes cut double-strands
\begin{align*}
	\tikz[baseline=-.15cm,text height=1.75ex,text depth=.25ex,inner sep=1pt]{\strandA\strandB} &&
	\text{and} &&
	\tikz[baseline=-.15cm,text height=1.75ex,text depth=.25ex,inner sep=1pt]{\strandD\strandC}
\end{align*}
along the dotted lines, respectively, leaving the first segment of the left strand with a sticky end $\bG\bC$ which is compatible to the sticky end $\bC\bG$ of the second segment of the right strand.
The segments can be recombined to form either the original strands or the new strand
\begin{equation*}
	\tikz[baseline=-.15cm,text height=1.75ex,text depth=.25ex,inner sep=1pt]{\strandA\strandD}\ .
\]

A {\em splicing system} is a formal language model which consists of a set of {\em initial words} or {\em axioms} $I$ and a set of {\em splicing rules} $R$.
The most commonly used definition for a splicing rule
is a quadruple of words $r = (u_1,v_1;u_2,v_2)$.
This rule splices two words 
$x_1u_1v_1y_1$ and $x_2u_2v_2y_2$:
the words are cut between the factors $u_1,v_1$, respectively $u_2,v_2$, and
the prefix (the left segment) of the first word is 
recombined by catenation with the suffix
(the right segment) of the second word, see Figure~\ref{fig:splice:intro}
and also  \cite{Paun96}.
A splicing system generates a language which contains every word that can be obtained by successively applying rules to axioms and the intermediately produced words.

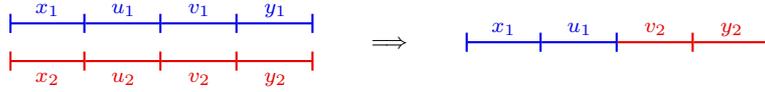
\begin{figure}[ht]
	\centering
	\begin{tikzpicture}[text height=1ex,text depth=0ex,
			font={\footnotesize}]
		\begin{scope}[yshift=.25cm,thick]
			\draw [|-|,eins] (0,0) -- node [above] {$x_1$} (1,0);
			\draw [-|,eins] (1,0) -- node [above] {$u_1$} (2,0);
			\draw [-|,eins] (2,0) -- node [above] {$v_1$} (3,0);
			\draw [-|,eins] (3,0) -- node [above] {$y_1$} (4,0);

			\draw [|-|,zwei] (0,-.5) -- node [below] {$x_2$} (1,-.5);
			\draw [-|,zwei] (1,-.5) -- node [below] {$u_2$} (2,-.5);
			\draw [-|,zwei] (2,-.5) -- node [below] {$v_2$} (3,-.5);
			\draw [-|,zwei] (3,-.5) -- node [below] {$y_2$} (4,-.5);
		\end{scope}

		\node at (5,0) {$\Longrightarrow$};

		\begin{scope}[xshift=6cm,thick]
			\draw [|-|,eins] (0,0) -- node [above] {$x_1$} (1,0);
			\draw [-|,eins] (1,0) -- node [above] {$u_1$} (2,0);
			\draw [-|,zwei] (2,0) -- node [above] {$v_2$} (3,0);
			\draw [-|,zwei] (3,0) -- node [above] {$y_2$} (4,0);
			
%			\def\ang{30}
%
%			\draw [-|,dotted,eins] (2,0) -- 
%				node [above,sloped] {$v_1$} +(\ang:1);
%			\draw [-|,dotted,eins] (2,0)++(\ang:1) --
%				node [above,sloped] {$y_1$} ++(\ang:1);
%				
%			\draw [-|,dotted,zwei] (2,0) -- 
%				node [below,sloped] {$u_2$} +(180+\ang:1);
%			\draw [-|,dotted,zwei] (2,0)++(180+\ang:1) --
%				node [below,sloped] {$x_2$} ++(180+\ang:1);
		\end{scope}
	\end{tikzpicture}
	\caption{Splicing of the words $x_1u_1v_1y_1$ and $x_2u_2v_2y_2$
		by the rule $r = (u_1,v_1;u_2,v_2)$.}
	\label{fig:splice:intro}
\end{figure}

\begin{example}\label{ex:first}
Consider the splicing system $(I,R)$ with axiom $I=\oneset{ab}$ and rules 	$R=\oneset{r,s}$ where $r=(a,b;\e,ab)$ and $s=(ab,\e;a,b)$; in this paper, $\e$ denotes the empty word.
Applying the rule $r$ to two copies of the axiom $ab$ creates the word $aab$ and applying the rule $s$ to two copies of the axiom $ab$ creates the word $abb$.
More generally, the rule $r$ or $s$ can be applied to words $a^ib^j$ and $a^kb^\ell$ with $i,j,k,\ell\ge 1$ in order to create the word $a^{i+1}b^\ell$ or $a^ib^{\ell+1}$, respectively.
The language generated by the splicing system $(I,R)$ is $L(I,R) = a^+b^+$.
\end{example}

The most natural variant of splicing systems, often referred to as {\em finite splicing systems}, is to consider a finite set of axioms and a finite set of rules.
In this paper, by a splicing system we always mean a finite splicing system.
Shortly after the introduction of splicing in formal language theory,
Culik~II and Harju \cite{CulikH91} proved that
splicing systems generate regular languages, only;
see also \cite{HeadP06,Pixton96}.
Gatterdam \cite{Gatterdam89} gave $(aa)^*$ as an example of a
regular language which cannot be generated by a splicing system;
thus, the class of languages generated by splicing systems is
strictly included in the class of regular languages.
However, for any regular language $L$ over an alphabet $\Sig$,
adding a marker $b\notin\Sig$ to the left side of every word in $L$ results in
the language $bL$ which can be generated by a splicing system
\cite{Head98};
\eg the language $b(aa)^*$ is generated by the axioms $\oneset{b,baa}$
and the rule $(baa,\e;b,\e)$.

This led to the question of whether or not
one of the known subclasses of the regular languages
corresponds to the class \cS of languages which can be generated by
a splicing system.
All investigations to date indicate that the class \cS
does not coincide with another naturally defined language class.
A characterization of {\em reflexive} splicing systems
using {\em Sch\"utzenberger constants} has been given by Bonizzoni, de Felice, and
Zizza \cite{BonizzoniFZ05,Bonizzoni10,BonizzoniFZ10}.
A splicing system is reflexive if for all rules $(u_1,v_1;u_2,v_2)$ in the system
we have that $(u_1,v_1;u_1,v_1)$ and $(u_2,v_2;u_2,v_2)$ are rules in the system, too.
A word $v$ is a Sch\"utzenberger constant of a language $L$ if
$x_1vy_1\in L$ and $x_2vy_2\in L$ imply $x_1vy_2\in L$ \cite{Schutzenberger75}.
Recently, it was proven by Bonizzoni and Jonoska
that every  splicing language has a constant
\cite{BonizzoniJ11}.
However, not all languages which have a constant are generated by splicing systems,
\eg in the language $L =(aa)^* + b^*$ every word $b^i$ is a constant,
but $L$ is not generated by a splicing system.

Another approach was to find an algorithm which decides whether a given regular
language is generated by a splicing system.
This problem has been investigated by Goode, Head, and Pixton
\cite{HeadPG02,GoodePHD,GoodeP07}
but it has only been partially solved:
it is decidable whether a regular language is generated by a reflexive
splicing system.
It is worth mentioning that a splicing system by the original definition in
\cite{Head87} is always reflexive.
A related problem has been investigated by Kim \cite{Kim97}: given a regular language $L$ and a finite set of {\em enzymes}, represented by set of reflexive rules $R$, Kim showed that it is decidable whether or not $L$ can be generated from a finite set of axioms by using only rules from $R$.

In this paper we settle the decidability problem, by proving that for a given regular language, it is indeed decidable whether the language is generated by a splicing system (which is not necessarily reflexive), Corollary~\ref{cor:decidability}.
More precisely, for every regular language $L$ there exists a splicing system $(I_L,R_L)$ and if $L$ is a splicing language, then $L$ is generated by the splicing system $(I_L,R_L)$.
The size of this splicing system depends on the size of the syntactic monoid of $L$.
If $m$ is the size of the syntactic monoid of $L$, then all axioms in $I_L$ and
the four components of every rule in $R_L$ have length in $\Oh(m^2)$, Theorem~\ref{thm:main:four}.
By results from \cite{HeadP06,HeadPG02}, we can construct a finite automaton which accepts the language generated by $(I_L,R_L)$, compare it with a finite automaton which accepts $L$, and thus, decide whether $L$ is generated by a splicing system.
Furthermore, we prove a similar result for a more general variant of splicing that has been introduced by Pixton \cite{Pixton96}, Theorem~\ref{thm:main}.

The paper is organized as follows.
In Section~\ref{sec:notation} we lay down the notation, recall some well-known results about syntactic monoids, and prove a pumping argument that is of importance for the proofs in the succeeding sections.
Section~\ref{sec:pixton} (Section~\ref{sec:main}) contains
the proof that a regular language $L$ is generated by
a {\em Pixton splicing system} (\resp {\em\ classical splicing system})
if and only if it is generated by one particular Pixton splicing system 
(\resp classical splicing system)
whose sice is bounded by the size of the syntactic monoid of $L$.
Sections~\ref{sec:pixton} and~\ref{sec:main}
can be read independently and overlap in some of their
main ideas.
The inclusion of both sections and the presentation order are
chiefly for expository purposes:
Due to the features of the Pixton splicing, Section~\ref{sec:pixton}
introduces the main ideas in a significantly more readable way.
Finally, in Section~\ref{sec:decidability} we deduce the decidability results
for both splicing variants.

An extended abstract of this paper, including a shortened proof of Theorem~\ref{thm:main:four} and Corollary~\ref{cor:decidability}~i.), has been published in the conference proceedings of DNA~18 in 2012 \cite{KariKopecki12dna}.
Theorem~\ref{thm:main} and Corollary~\ref{cor:decidability}~ii.)\ have not been published elsewhere.

%%%%%%%%%%%%%%%%%%%%%%%%%%%%%%%%%%%%%%%%%%%%%%%%%%%%%%%%%%%%%%%%%%%%%%%%%%%%%%%%
%%%%%%%%%%%%%%%%%%%%%%%%%%%%%%%%%%%%%%%%%%%%%%%%%%%%%%%%%%%%%%%%%%%%%%%%%%%%%%%%
\section{Notation and Preliminaries}
\label{sec:notation}

We assume the reader to be familiar with the fundamental
concepts of language theory, see \cite{HopcroftUllman}.

Let $\Sigma$ be a finite set of {\em letters}, the {\em alphabet};
$\Sigma^*$ be the set of all words over $\Sigma$;
and $\e$ denote the {\em empty word}.
A subset $L$ of $\Sigma^*$ is a {\em language} over $\Sigma$.
Throughout this paper, we consider languages over the fixed
alphabet $\Sig$, only.
Let $w\in \Sig^*$ be a word.
The length of $w$ is denoted by $\abs w$.
(We use the same notation for the cardinality $\abs S$ of a set $S$, as usual.)
We consider the letters of $\Sig$ to be
ordered and for words $u,v\in\Sig^*$
we denote the {\em length-lexicographical order}
by $u\eLL v$;
\ie $u\eLL v$ if either $\abs u \le \abs v$, or
$\abs u = \abs v$ and $u$ is at most $v$ in lexicographic order.
The {\em strict length-lexicographic order} is denoted by $\LL$;
we have $u\LL v$ if $u\eLL v$ and $u \neq v$.

For a length bound $m\in\N$ we let $\Sig^{\leq m}$ denote the set of words
whose length is at most $m$, \ie $\Sig^{\leq m} = \bigcup_{i\leq m}\Sig^i$.
Analogously, we define $\Sig^{<m}=\bigcup_{i<m}\Sig^i$.

If $w = xyz$ for some $x,y,z\in \Sigma^*$,
then $x$, $y$, and $z$ are called {\em prefix}, {\em factor},
and {\em suffix} of $w$,
respectively.
If a prefix or suffix of $w$ is distinct from $w$,
it is said to be {\em proper}.

Let $w = a_1\ldots a_n$ where $a_1,\ldots,a_n$ are letters from $\Sigma$.
By $w\pos{i}$ for $0\le i\le n$ we denote a {\em position} in the word $w$: 
if $i = 0$, it is the position before the first letter $a_1$, if $i = n$ it is the position after the last letter $a_n$, and otherwise, it is the position between the letters $a_i$ and $a_{i+1}$.
We want to stress that $w\pos{i}$ is not a letter in the word $w$.
By $w\pos{i;j}$ for $0\le i\le j\le n$ we denote the factor $a_{i+1}\cdots a_j$ which is enclosed by the positions $w\pos{i}$ and $w\pos{j}$.
If $x = w\pos{i;j}$ we say the factor $x$ starts at position $w\pos{i}$ and ends at position $w\pos{j}$.
Whenever we talk about a factor $x$ of a word $w$ we mean a factor starting (and ending) at a certain position, even if the the word $x$ occurs as a factor at several positions in $w$.
Let $x = w\pos{i;j}$ and $y = w\pos{i';j'}$ be factors of $w$.
We say the factors $x$ and $y$ {\em match} (in $w$) if $i = i'$ and $j=j'$;
the factor $x$ is {\em covered} by the factor $y$ (in $w$) if $i'\le i \le j \le j'$;
and the factors $x$ and $y$ {\em overlap} (in $w$) if $x\ne \e$, $y\ne\e$, and $i \le i' < j$ or $i'\le i < j'$.
In other words, if two factors $x$ and $y$ overlap in $w$, then they share a common letter of $w$.
Let $x = w\pos{i;j}$ be a factor of $w$ and let $p = w\pos{k}$ be a position in $w$.
We say the position $p$ {\em lies at the left of} $x$ if $k\le i$;
the position $p$ {\em lies at the right of} $x$ if $k\ge j$;
and the position $p$ {\em lies in} $x$ if $i < k < j$.

Every language $L$ induces an {\em syntactic congruence} $\sim_L$ over words
such that $u\sim_L v$ if and only if for all words 
$x,y$
\[
	xuy \in L \iff xvy\in L.
\]
The {\em syntactic class} (with respect to $L$) of a word $u$ is
$[u]_L = \set{v}{u\sim_L v}$.
The {\em syntactic monoid} of $L$ is the quotient monoid
\[
	M_L = \Sig^* / {\sim_L} = \set{[u]_L}{u\in \Sig^*}.
\]
It is well known that a language $L$ is regular if and only
if its syntactic monoid $M_L$ is finite.
We will use two basic facts about syntactic monoids of regular languages.

\begin{lemma}\label{lem:pumping}
	Let $L$ be a regular language
	and let $w$ be a word with $\abs w \ge \abs{M_L}^2$.
	We can factorize $w = \abg$ with $\bet\ne\e$ such that
	$\alp \sim_L \alp\bet$ and $\gam \sim_L \bet\gam$.
\end{lemma}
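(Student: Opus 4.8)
The plan is a two-sided pigeonhole argument on the syntactic classes of prefix--suffix pairs. First I would write $w = a_1\cdots a_n$ with $n = \abs w \ge \abs{M_L}^2$, and for each position $i$ with $0\le i\le n$ consider both the prefix $w\pos{0;i}$ and the complementary suffix $w\pos{i;n}$. To each position $i$ I associate the pair of syntactic classes $\parenth{[w\pos{0;i}]_L,\,[w\pos{i;n}]_L}$, which is an element of $M_L\times M_L$.

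There are $n+1$ positions $0,1,\ldots,n$, hence $n+1$ such pairs, whereas $M_L\times M_L$ has only $\abs{M_L}^2$ elements. Since the hypothesis gives $n+1 > \abs{M_L}^2$, the pigeonhole principle yields two distinct positions $i < j$ with $[w\pos{0;i}]_L = [w\pos{0;j}]_L$ and $[w\pos{i;n}]_L = [w\pos{j;n}]_L$.

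I then set $\alp = w\pos{0;i}$, $\bet = w\pos{i;j}$, and $\gam = w\pos{j;n}$, so that $w = \abg$ and $\bet\ne\e$ because $i < j$. It remains to verify the two congruences, which follow directly. For the first, observe $\alp\bet = w\pos{0;j}$, whence $[\alp\bet]_L = [w\pos{0;j}]_L = [w\pos{0;i}]_L = [\alp]_L$, so that $\alp \sim_L \alp\bet$. Symmetrically, $\bet\gam = w\pos{i;n}$, whence $[\bet\gam]_L = [w\pos{i;n}]_L = [w\pos{j;n}]_L = [\gam]_L$, so that $\gam \sim_L \bet\gam$.

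The argument is essentially self-contained, and no step is technically hard. The only point requiring care is the counting: one must track the pair of classes simultaneously, which is what produces the bound $\abs{M_L}^2$ rather than $\abs{M_L}$, and the hypothesis $\abs w \ge \abs{M_L}^2$ is precisely what guarantees that the $n+1$ pairs outnumber the $\abs{M_L}^2$ possible values, forcing a collision. A one-sided pigeonhole (on prefixes alone) would only give $\alp \sim_L \alp\bet$ and would not control the suffix side, so the simultaneous bookkeeping of prefix and suffix classes is the essential idea.
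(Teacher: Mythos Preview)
Your proof is correct and follows essentially the same approach as the paper: both arguments associate to each position $i$ the pair $\parenth{[w\pos{0;i}]_L,[w\pos{i;n}]_L}\in M_L\times M_L$ and apply the pigeonhole principle to the $n+1>\abs{M_L}^2$ positions to find a collision, then read off $\alp,\bet,\gam$ exactly as you do. The only cosmetic difference is that the paper names the classes $X_i,Y_i$ and notes in passing that $X_iY_i=[w]_L$, but this observation is not used, and the logical content of the two proofs is identical.
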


\begin{proof}
Consider a word $w$ with $n = \abs{w} \ge \abs{M_L}^2$.
For $i = 0,\ldots, n$, let $X_i = w\pos{0;i}$ be the syntactic classes of the prefixes of $w$ and let $Y_i = w\pos{i;n}$ be the syntactic classes of the suffixes of $w$.
Note that $X_iY_i = [w]_L$.
By the pigeonhole principle, there are $i,j$ with $0\le i <j\le n$ such that $X_i = X_j$ and $Y_i = Y_j$.
Let $\alp = w\pos{0;i}$, $\bet = w\pos{i;j}$, and $\gam = w\pos{j;n}$.
As $\alp\in X_i$ and $\alp\bet\in X_j$, we see that $\alp\sim_L \alp\bet$ and, symmetrically, $\gam\sim_L\bet\gam$.
\end{proof}

\begin{lemma}\label{lem:smallest:words}
	Let $L$ be a regular language.
	Every element $X\in M_L$ contains a word $x\in X$
	with $\abs x < \abs{M_L}$.
\end{lemma}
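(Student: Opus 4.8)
The plan is to take a shortest word of $X$ and argue that, were it too long, a pigeonhole argument on the syntactic classes of its prefixes would manufacture a strictly shorter word of the same class, contradicting minimality. This is a one-sided version of the argument in Lemma~\ref{lem:pumping}: here I only need a collision among prefixes, and I exploit the fact that $\sim_L$ is a \emph{congruence} (and not merely an equivalence) to delete a factor while staying inside $X$.

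Concretely, let $x$ be the $\LL$-minimal word of $X$ (any shortest word of $X$ would do), and set $n = \abs x$. Suppose for contradiction that $n \ge \abs{M_L}$. Consider the $n+1$ prefixes $x\pos{0;i}$ for $i = 0,\ldots,n$ together with their syntactic classes. Since $n+1 > \abs{M_L}$, the pigeonhole principle yields indices $0 \le i < j \le n$ with $x\pos{0;i} \sim_L x\pos{0;j}$. Writing $\alp = x\pos{0;i}$, $\bet = x\pos{i;j}$, and $\gam = x\pos{j;n}$, we obtain the factorization $x = \abg$ with $\bet \ne \e$ and $\alp \sim_L \alp\bet$.

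The key step is to use that $\sim_L$ is compatible with right multiplication by $\gam$: from $\alp \sim_L \alp\bet$ we deduce $\alp\gam \sim_L \alp\bet\gam = x$, so $\alp\gam \in X$. But $\abs{\alp\gam} = n - \abs\bet < n$ because $\bet \ne \e$, which contradicts the minimality of $x$. Hence $n < \abs{M_L}$, and $x$ is a word of $X$ of the required length. I expect no real obstacle here beyond two bookkeeping points: first, getting the index bound right, so that $n \ge \abs{M_L}$ indeed gives $n+1 > \abs{M_L}$ prefixes and forces a collision; and second, being careful to invoke the congruence property rather than mere transitivity, since it is precisely compatibility with concatenation that licenses passing from $\alp \sim_L \alp\bet$ to $\alp\gam \sim_L \abg$.
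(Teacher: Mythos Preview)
Your argument is correct. The pigeonhole collision among the $n+1$ prefixes of a shortest representative, followed by factor deletion via the right-congruence property of $\sim_L$, is exactly the right move, and the index bookkeeping ($n \ge \abs{M_L}$ gives $n+1 > \abs{M_L}$ prefixes) is fine.

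The paper takes a different, ``forward'' route: it builds an increasing chain $S_0 \ssne S_1 \ssne \cdots \ssne S_n \sse M_L$, where $S_i$ is the set of syntactic classes containing a word of length at most $i$, and observes that the chain must stabilize before step $\abs{M_L}$ because each strict step adds at least one new class. Your approach is the ``backward'' dual: fix a class, take a shortest representative, and contradict minimality if it is too long. Both are standard and of comparable difficulty; yours is arguably more direct for proving the statement as phrased (it works one class at a time and avoids introducing the auxiliary sets $S_i$), while the paper's saturation argument has the minor byproduct of exhibiting a single uniform bound $n < \abs{M_L}$ that works simultaneously for all classes.
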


\begin{proof}
	We define a series of sets $S_i \sse M_L$.
	We start with $S_0 = \oneset{1}$ (here, $1 = [\e]_L$)
	and let $S_{i+1} = S_i \cup\set{X\cdot[a]_L}{X\in S_i\land a\in\Sig}$
	for $i \ge 0$.
	It is not difficult to see that
	$X\in S_i$ if and only if $X$ contains a word $x\in X$ with $\abs x \le i$.
	As $S_i\sse S_{i+1}$ and $M_L$ is finite, the series has a fixed point
	$S_n$ such that $S_i = S_n$ for all $i\ge n$.
	Let $n$ be the least value with this property, \ie $S_{n-1}\ssne S_n$ or $n=0$.
	Observe that $n < \abs{M_L}$ as $S_0 \ssne S_1\ssne \cdots \ssne S_n$.
	Every element $X\in M_L$ contains some word $w\in X$, thus,
	$X \in S_{\abs w} \sse S_n$.
	Concluding that $X$ contains
	a word with a length of at most $n< \abs{M_L}$.
\end{proof}

%%%%%%%%%%%%%%%%%%%%%%%%%%%%%%%%%%%%%%%%%%%%%%%%%%%%%%%%%%%%%%%%%%%%%%%%%%%%%%%%
\subsection{A Pumping Algorithm}\label{sec:pump:alg}

Consider a regular language $L$, a word $\alp\bet\gam$ where $\alp\sim_L\alp\bet$ and $\gam\sim_L\bet\gam$, due to Lemma~\ref{lem:pumping}, and a large even number $j$.
In the proofs of Theorem~\ref{thm:main} and Lemma~\ref{lem:long:rules}, we need a pumping argument to replace all factors $\abg$ by $\abjg$ in a word $z$ in order to obtain a word $\ti z$; thus, $z\sim_L \ti z$.
As $\abg$ may be a factor of $\abjg$, we cannot ensure that $\abg$ is not a factor of $\ti z$.
However, we can ensure that if $\abg = \ti z\pos{k;k'}$ is a factor of $\ti z$, then either
\begin{inparaenum}[(a)]
	\item $\alp\bet^{j\slash 2}$ is a factor of $\ti z$ starting at position
		$\ti z \pos k$ or
	\item $\bet^{j\slash 2}\gam$ is a factor of $\ti z$ ending at position
		$\ti z \pos{k'}$;
\end{inparaenum}
\ie either $\alp$ is succeeded by a large number of $\bet$'s or $\gam$ is preceded by a large number of $\bet$'s.
The next lemma is a technical result whose purpose is to assure that for any word $z$ there exists a word $\ti z$ such that the above-mentioned property holds and $\ti z$ is generated by applying several successive pumping steps $\abg \mapsto \abjg$ to $z$.

\begin{lemma}\label{lem:pumping:algorithm}
 	Let $z,\alp,\bet,\gam$ be words with $\bet \neq \e$, let $\ell = \abs{\abg}$,
 	and let $j > \abs{z}+\ell$ be an even number.
	The following algorithm will terminate and output $\ti z$.
	\pumping{\ell}
\end{lemma}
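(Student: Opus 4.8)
The plan is to prove termination by a strictly decreasing monovariant. Call an occurrence $\ti z\pos{k;k+\ell}=\abg$ \emph{violating} if neither $\alp\bet^{j/2}$ starts at $\ti z\pos{k}$ nor $\bet^{j/2}\gam$ ends at $\ti z\pos{k+\ell}$, and let $\Phi(\ti z)$ be the number of violating occurrences of $\abg$ in $\ti z$. The loop of step~2 runs exactly while $\Phi(\ti z)>0$, and each pass replaces a factor of length $\ell$ by $\abjg$, increasing $\abs{\ti z}$ by $(j-1)\abs{\bet}\ge 1$. Hence it suffices to show that one pass strictly decreases $\Phi$: then the algorithm stops after at most $\Phi(z)\le\abs{z}$ passes and outputs the final word $\ti z$. (If $\abs{z}<\ell$ there is no occurrence of $\abg$ and the claim is trivial, so assume $\abs{z}\ge\ell$; then $j>\abs{z}+\ell\ge 2\ell$.)

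Fix one pass that pumps the chosen violating occurrence $O$ at $\ti z\pos{k;k+\ell}$, producing $\ti z'$ in which the factor at positions $[k,k+\ell+(j-1)\abs{\bet}]$ is the block $B=\abjg$; write $\rho=\bet^j$ for its middle run, of length $j\abs{\bet}\ge j>\abs{z}+\ell>\ell$. To compare $\Phi(\ti z')$ with $\Phi(\ti z)$ I would use the projection $\pi$ that collapses $\rho$ to a single $\bet$: it fixes the prefix and the leading $\alp\bet$, identifies the inserted $\bet^{j-1}$ with the retained copy of $\bet$, and shifts $\gam$ and the suffix left by $(j-1)\abs{\bet}$. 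I claim $\pi$ maps the violating occurrences of $\ti z'$ injectively to violating occurrences of $\ti z$, and omits $O$; this gives $\Phi(\ti z')\le\Phi(\ti z)-1$. Three points drive this. First, no violating occurrence of $\ti z'$ can lie entirely inside $\rho$: a short computation with the thresholds $j/2$ shows that any occurrence of $\abg$ inside a run of at least $j$ copies of $\bet$ has $\alp\bet^{j/2}$ to its right or $\bet^{j/2}\gam$ to its left, using only $\abs{\bet}\ge 1$. Second, since $\pi$ only \emph{deletes} copies of $\bet$, the patterns $\alp\bet^{j/2}$ and $\bet^{j/2}\gam$ can only get harder to realise, so the $\pi$-image of a violating occurrence of $\ti z'$ is again violating in $\ti z$. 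Third, in $\ti z'$ the block begins with $\alp\bet^{j}$, so any occurrence of $\abg$ at position $k$ satisfies the first condition and is not violating; hence nothing is mapped onto $O$, which is violating in $\ti z$.

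The main obstacle is to make the projection argument rigorous for occurrences that \emph{overlap} $\rho$ without lying inside it: one must check that such an occurrence still projects to a genuine occurrence of $\abg$ in $\ti z$ and that distinct occurrences keep distinct images. This is exactly where the length hypothesis on $j$ enters. Because $\abs{\rho}=j\abs{\bet}>\ell$ exceeds the length of any occurrence, an overlapping occurrence can meet $\rho$ only near one end while protruding to the left or to the right, and the periodicity of $\rho$ forces the overlapping part to be in phase with $\rho$ (otherwise two letters would disagree). Collapsing the in-phase copies of $\bet$ then leaves the non-periodic data $\alp$, the central $\bet$, and $\gam$ intact, so the image is a bona fide $\abg$-occurrence, and its anchor outside $\rho$ pins down $\pi$ injectively. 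Verifying this in-phase bookkeeping for the two symmetric boundary cases, together with the elementary $j/2$-threshold computation of the first point above, is the only genuinely technical part of the proof; the rest is the monovariant argument.
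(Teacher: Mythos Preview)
Your monovariant argument does not go through: the quantity $\Phi$ need not strictly decrease after a pumping step. Take $\alpha=ba$, $\beta=a$, $\gamma=baaa$, so $\alpha\beta\gamma=baabaaa$ and $\ell=7$, and let $z=baabaabaaa$ (so $\abs z=10$, and $j=18$ works). The unique occurrence of $\alpha\beta\gamma$ in $z$ is at position $3$ and is violating, hence $\Phi(z)=1$. Pumping it yields
\[
\ti z'=baa\cdot ba^{19}baaa,
\]
in which the unique occurrence of $\alpha\beta\gamma$ is now at position~$0$ (the old one at position~$3$ has been destroyed), and this new occurrence is again violating: $\alpha\beta^{j/2}=ba^{10}$ fails at position~$0$ because $\ti z'[3]=b$, and $\beta^{j/2}\gamma$ is too long to end at position~$7$. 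Thus $\Phi(\ti z')=1=\Phi(z)$. The culprit is precisely the boundary case you flag as ``the main obstacle'': the new occurrence at position~$0$ overlaps $\rho$ only in its last two letters (the suffix $aa$ of $\gamma$), it is \emph{not} the case that its $\beta$ sits inside $\rho$, and your projection~$\pi$ does not send it to any occurrence in~$\ti z$ (there is none at position~$0$). So the injection from violating occurrences of $\ti z'$ to those of $\ti z$ cannot be built.

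The paper avoids this obstacle by choosing a different monovariant. Using the primitive root $p$ of $\beta$ it maintains, after $n$ pumps, a factorisation $z_n=x_{n,0}\,y\,x_{n,1}\cdots y\,x_{n,n}$ with $y=\beta^jp^{-2}$, and shows that every further pump must take place inside one of the $x_{n,i}$ (of length at least $\abs\beta+2$) and replaces it by two strictly shorter pieces $up$ and $pv$. Termination then follows from well-foundedness of the multiset order on the lengths of the $x_{n,i}$. In your counterexample the first pump replaces $x_{0,0}=z$ (length~$10$) by $x_{1,0}=baabaa$ and $x_{1,1}=abaaa$, and the second pump acts on $x_{1,0}$; one more split exhausts all long factors. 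Counting violating occurrences is the wrong invariant because pumping can \emph{create} a fresh violating occurrence that straddles the boundary of the new block, exactly as above; the paper's factor-length bookkeeping is insensitive to this phenomenon.
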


Before we prove Lemma~\ref{lem:pumping:algorithm}, let us recall a basic fact 
about primitive words.
A word $p$ is called {\em primitive} if there is no word $x$
and $i\ge 2$ such that $p = x^i$.
The {\em primitive root} of a word $w\neq \e$ is the unique primitive
word $p$ such that $w = p^i$ for some $i\ge 1$.
For primitive $p$,
it is well known that if $pp = xpy$, then either $x=p$ and $y = \e$,
or $x=\e$ and $y = p$.
In other words, whenever $p$ is a factor of $p^n$ starting at position $p^n\pos i$,
then $i \in \abs p\cdot \N$.

For a word $w = xy$ we employ the notations $x^{-1}w = y$ and $wy^{-1} = x$.
If $x$ is not a prefix of $w$ ($y$ is not a suffix of $w$), then the $x^{-1}w$ (\resp $wy^{-1}$) is undefined.

\begin{proof}[Proof of Lemma~\ref{lem:pumping:algorithm}]
Let $p$ be the primitive root of $\bet$ and let $m$ such that $\bet = p^m$.

First, observe that if, during the computation, a factor $\abg = \ti z\pos{k;k+\ell}$ is covered by a factor $\abjg$ in $\ti z$, then either \A or \B holds.
Indeed, if $\abg = (\abjg)\pos{i;i+\ell}$ for some $i$, then $\bet$ is a factor of $\bet^j$ starting at position $\bet^j\pos{i}$.
As mentioned above, $i \in \abs{p}\cdot\N$ and either position $\bet^j[i]$ is preceded or succeeded by $p^{m\cdot j\slash2} = \bet^{j\slash2}$.
Therefore, \A or \B is satisfied.

Let $z_0 = z$, let $z_n$ be the word $\ti z$ after the $n$-th pumping step in the algorithm, and let $y = p^{m\cdot j-2} = \bet^jp^{-2}$.
For each $n$, we will define a unique factorization
\[
	z_n = x_{n,0} y x_{n,1} \cdots  y x_{n,n}
\]
where $p$ is a suffix of $x_{n,i}$ for $i = 0,\ldots, n-1$ and $p$ is a prefix of $x_{n,i}$ for $i = 1,\ldots,n$.
This factorization is defined inductively:
naturally, we start with $x_{0,0} = z_0 = z$.
Assume $z_n$ is factorized in the above manner.
Let $\abg = z_n\pos{k;k+\ell}$ be the factor, such that neither \A nor \B holds, which we replace in the $(n+1)$-st step (if there is no such factor, the algorithm terminates and we do not have to define $z_{n+1}$).
By contradiction, assume that $\alp$ starting at position $z_n\pos{k}$ is covered by the $i$-th factor $y = p^{m\cdot j-2}$ in the factorization of $z_n$ for some $1\le i\le n$.
By the first observation, the factor $\bet\gam = z_n\pos{k+\abs\alp;k+\ell}$ must overlap with $x_i$.
However, as $p$ is a prefix of $x_i$, the factor $\bet = z_n\pos{k+\abs\alp;k+\abs{\alp\bet}}$ has to cover the prefix $p$ of $x_i$ or it has to cover one of the $p$'s in $y$.
This implies that $\gam$ is preceded by $p^{m\cdot j\slash2} = \bet^{j\slash2}$ and \B holds --- contradiction.
Symmetrically, $\gam$ is not covered by one of the factors $y$ in $z_n$ neither.
	
Thus, $\bet = z_n\pos{k+\abs\alp;k+\abs{\alp\bet}}$ is covered by some $x_{n,i}$ in the factorization of $z_n$ and $x_{n,i}$ can be factorized $x_{n,i} = u\bet v$ where $u\neq \e$ and $v \neq \e$.
Note that the length of $x_{n,i}$ has to be at least $\abs{\bet}+2$.
Now, let $x_{n+1,h} = x_{n,h}$ for $h = 0,\ldots,i-1$, let $x_{n+1,h+1} = x_{n,h}$ for $h = i+2,\ldots,n$, let $x_{n+1,i} = up$, and let $x_{n+1,i+1} = pv$.
Observe that this defines the desired factorization.
Also note that 
\[
	\abs{x_{n+1,i}} = \abs u + \abs p
		= \abs{x_{n,i}} -\abs{\bet} -\abs{v} +\abs{p}
		\le \abs{x_{n,i}} -\abs{v} < \abs{x_{n,i}}
\]
and, symmetrically, $\abs{x_{n+1,i+1}} < \abs{x_{n,i}}$.
Thus, in each pumping step, we replace one of the factors $x_{n,i}$ by two strictly shorter factors $x_{n+1,i}$ and $x_{n+1,i+1}$.
As we have noted above, in a factor $x_{n,i}$ cannot be pumped anymore, if it is shorter than $\abs\bet +2$.
Eventually, all the the factors will be too short and the pumping algorithm will stop.
\end{proof}

%%%%%%%%%%%%%%%%%%%%%%%%%%%%%%%%%%%%%%%%%%%%%%%%%%%%%%%%%%%%%%%%%%%%%%%%%%%%%%%%
%%%%%%%%%%%%%%%%%%%%%%%%%%%%%%%%%%%%%%%%%%%%%%%%%%%%%%%%%%%%%%%%%%%%%%%%%%%%%%%%
\section{Pixton's Variant of Splicing}
\label{sec:pixton}

In this section we use the definition of the splicing operation as it 
was introduced in \cite{Pixton96}.
A triplet of words $r = (u_1,u_2;v)\in (\Sig^*)^3$ is called a 
{\em (splicing) rule}.
The words $u_1$ and $u_2$ are called {\em left} and {\em right site} of $r$,
respectively,
and $v$ is the {\em bridge} of $r$.
This splicing rule can be applied to two words 
$w_1 = x_1u_1y_1$ and $w_2 = x_2u_2y_2$, that each contain one of the sites,
in order to create the new word $z = x_1vy_2$,
see Figure~\ref{fig:splicing}.
This operation is called {\em splicing} and
it is denoted by $(w_1,w_2)\splice{r} z$.

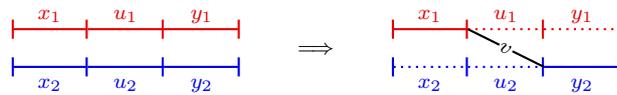
\begin{figure}[ht]
	\centering
	\begin{tikzpicture}[text height=1ex,text depth=0ex,
			font={\footnotesize},thick]
		\begin{scope}
			\draw [|-|,zwei] (0,0) -- node [above] {$x_1$} (1,0);
			\draw [-|,zwei] (1,0) -- node [above] {$u_1$} (2,0);
			\draw [-|,zwei] (2,0) -- node [above] {$y_1$} (3,0);

			\draw [|-|,eins] (0,-.5) -- node [below] {$x_2$} (1,-.5);
			\draw [-|,eins] (1,-.5) -- node [below] {$u_2$} (2,-.5);
			\draw [-|,eins] (2,-.5) -- node [below] {$y_2$} (3,-.5);
		\end{scope}

		\node at (4,-.25) {$\Longrightarrow$};

		\begin{scope}[xshift=5cm]
			\draw [|-|,zwei] (0,0) -- node [above] {$x_1$} (1,0);
			\draw [-|,dotted,zwei] (1,0) -- node [above] {$u_1$} (2,0);
			\draw [-|,dotted,zwei] (2,0) -- node [above] {$y_1$} (3,0);

			\draw [|-|,dotted,eins] (0,-.5) -- node [below] {$x_2$} (1,-.5);
			\draw [-|,dotted,eins] (1,-.5) -- node [below] {$u_2$} (2,-.5);
			\draw [-|,eins] (2,-.5) -- node [below] {$y_2$} (3,-.5);

			\draw (1,0) -- node [sloped,fill=white,inner sep=.5pt]
				{$v$} (2,-.5);
		\end{scope}
	\end{tikzpicture}
	\caption{Splicing of the words $x_1u_1y_1$ and $x_2u_2y_2$ by the rule
		$r = (u_1,u_2;v)$.}
	\label{fig:splicing}
\end{figure}

For a rule $r$ we define the {\em splicing operator} $\sig_r$
such that for a language $L$
\[
	\sig_r(L) = 
		\set{z\in\Sig^*}{\exists w_1,w_2\in L \colon (w_1,w_2)\splice{r} z}
\]
and for a set of splicing rules $R$, we let
\[
	\sig_R(L) = \bigcup_{r\in R} \sig_r(L).
\]
The reflexive and transitive closure of the splicing operator $\sig_R^*$
is given by
\begin{align*}
	\sig_R^0(L) &= L, &
	\sig_R^{i+1}(L) &= \sig_R^i(L) \cup \sig_R(\sig_R^i(L)), &
	\sig_R^*(L) &= \bigcup_{i\ge 0} \sig_R^i(L).
\end{align*}
A finite set of axioms $I \sse \Sig^*$ and a
finite set of splicing rules $R \sse (\Sig^*)^3$
form a {\em splicing system} $(I,R)$.
Every splicing system $(I,R)$ generates a language
$L(I,R) = \sig_R^*(I)$.
Note that $L(I,R)$ is the smallest language which is closed under
the splicing operator $\sig_R$ and includes $I$.
It is known that the language generated by a splicing system
is regular, see \cite{Pixton96}.
A (regular) language $L$ is called a {\em splicing language}
if a splicing system $(I,R)$ exists such that $L = L(I,R)$.

A rule $r$ is said to {\em respect} a language $L$
if $\sig_r(L) \sse L$.
It is easy to see that for any splicing system $(I,R)$,
every rule $r \in R$ respects the generated language $L(I,R)$.
Moreover, a rule $r\notin R$ respects $L(I,R)$ if and only if
\mbox{$L(I,R\cup\oneset r)={}$}\mbox{$L(I,R)$}.
We say a splicing $(w_1,w_2)\splice r z$ {\em respects} 
a language $L$ if  $w_1,w_2\in L $ and $r$ respects $L$;
obviously, this implies $z\in L$, too.

Pixton introduced this variant of splicing in order to give a simple proof
for the regularity of languages generated by splicing systems.
As Pixton's variant of splicing is more general than the
{\em classic splicing}, defined in the introduction and in Section~\ref{sec:main},
his proof of regularity also applies to classic splicing systems.
For a moment, let us call a classic splicing rule a
{\em quadruple} and a Pixton splicing rule a {\em triplet}.
Consider a quadruplet $r = (u_1,v_1;u_2,v_2)$.
It is easy to observe that whenever we can use $r$ in order to splice
$w_1 = x_1u_1v_1y_1$ with $w_2 = x_2u_2v_2y_2$ to obtain the word
$z = x_1u_1v_2y_2$,
we can use the triplet $s = (u_1v_1,u_2v_2;u_1v_2)$ in order to splice
$(w_1,w_2)\splice{s} z$  as well.
However, for a triplet $s = (u_1,u_2;v)$ where $v$ is not a concatenation
of a prefix of $u_1$ and a suffix of $u_2$,
there is no quadruplet $r$ that can be used for the same splicings.
Moreover, the class of classical splicing languages is strictly included in the class of Pixton splicing languages; \eg the language 
\[
	L=cx^*ae+cx^*be+dcx^*bef
\]
over the alphabet $\oneset{a,b,c,d,e,f,x}$ is a Pixton splicing language but not a classical splicing language, see \cite{BonizzoniFMZ01}.
For the rest of this section we focus on Pixton's splicing variant
and by a rule we always mean a triplet.

The main result of this section states that if a regular language $L$ is a splicing language, then it is created by a particular splicing system $(I,R)$ which only depends on the syntactic monoid of $L$.

\begin{theorem}\label{thm:main}
	Let $L$ be a splicing language and $m = \abs{M_L}$.
	The splicing system $(I,R)$ with $I = \Sig^{<m^2+ 6m}\cap L$ and
	\[
		R = \set{r\in\Sig^{< 2m}\times\Sig^{< 2m}\times\Sig^{< m^2+10m}}
		{r\text{ respects }L}
	\]
	generates the language $L = L(I,R)$.
\end{theorem}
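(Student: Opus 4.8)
The plan is to prove the two inclusions $L(I,R)\sse L$ and $L\sse L(I,R)$ separately. The inclusion $L(I,R)\sse L$ is immediate: by construction every axiom of $I$ lies in $L$ and every rule of $R$ respects $L$, so $L$ is closed under $\sig_R$ and contains $I$, and since $L(I,R)$ is the smallest such language we get $L(I,R)\sse L$. For the converse I would fix an arbitrary splicing system $(I_0,R_0)$ with $L=L(I_0,R_0)$ (which exists because $L$ is a splicing language) and prove $L\sse L(I,R)$ by showing that $L(I,R)$ contains $I_0$ and is closed under every rule of $R_0$; as $L(I_0,R_0)$ is the least language with these two properties, this yields $L=L(I_0,R_0)\sse L(I,R)$. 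Both tasks reduce to the same \emph{core claim}: every $w\in L$ with $\abs w\ge m^2+6m$ can be written as $(w_1,w_2)\splice r w$ for some $w_1,w_2\in L$ with $\abs{w_1},\abs{w_2}<\abs w$ and some $r\in R$. Granting this, a straightforward induction on $\abs w$ gives $w\in L(I,R)$: words with $\abs w<m^2+6m$ are axioms, and longer ones are spliced from strictly shorter members of $L$, which lie in $L(I,R)$ by the induction hypothesis, using a rule of $R$.

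The engine of the argument is the observation that whether a rule $(u_1,u_2;v)$ respects $L$ depends only on the syntactic classes $[u_1]_L$, $[u_2]_L$, $[v]_L$. Indeed, the set of prefixes $x_1$ admitting some $x_1u_1y_1\in L$, and the set of suffixes $y_2$ admitting some $x_2u_2y_2\in L$, are unchanged if $u_1,u_2$ are replaced by $\sim_L$-equivalent words, while replacing $v$ by a $\sim_L$-equivalent word only replaces each produced word $x_1vy_2$ by a $\sim_L$-equivalent one. Hence, from any rule respecting $L$, replacing each \emph{site} by the shortest representative of its class (Lemma~\ref{lem:smallest:words}) yields sites of length $<m$ while preserving the respecting property; this is what makes the bound $\Sig^{<2m}$ on sites affordable. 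The \emph{bridge} behaves differently, since it occurs verbatim in the output, so shortening it changes the word produced. This asymmetry is the crux of the length bookkeeping.

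To establish the core claim for a long $w\in L$ I would locate a short bridge inside $w$ and push the remaining, arbitrarily long material into a suffix inherited from a strictly shorter word of $L$. Concretely, factor $w=\abg$ with $\bet\neq\e$, $\alp\sim_L\alp\bet$, $\gam\sim_L\bet\gam$, and $\abs{\alp\bet}\le m^2$ (as in the proof of Lemma~\ref{lem:pumping}); then $\alp\gam\sim_L\abg=w$, so $w_2:=\alp\gam\in L$ is strictly shorter than $w$ and its suffix is exactly the long tail $\gam$. Taking the bridge to be the short block $\alp\bet$ occupying the first $\le m^2$ letters of $w$, choosing a left word $w_1\in L$ that merely exhibits a suitable left site $u_1$, and choosing the right site $u_2$ as a short marker inside $w_2$, one arranges $(w_1,w_2)\splice r w$ with $r=(u_1,u_2;\alp\bet)$. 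The sites are then cut down to length $<2m$ as in the previous paragraph, and the bridge stays within $\Sig^{<m^2+10m}$, the extra $\Oh(m)$ slack over the axiom bound $m^2+6m$ absorbing the short site-representatives and a possible flanking $\bet$-block.

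The hard part will be proving that the short rule $r$ so obtained genuinely respects $L$ while keeping the bridge below $m^2+10m$. Respecting cannot follow from pumping alone, since for non-splicing languages such as $(aa)^*$ no such rule exists; here I must feed in the hypothesis that $L$ is a splicing language, transferring the respecting property of a rule of $R_0$ to the candidate rule via the class-invariance above (normalising its sites by class and pumping its bridge down, both of which preserve respecting because $\abg\sim_L\alp\gam$). The genuine technical difficulty is controlling the bridge length uniformly: a direct pumping-down of the prefix region of $w$ may leave a pumpable factor $\abg$ that overlaps the intended cut, and Lemma~\ref{lem:pumping:algorithm} is exactly the tool to rewrite $w$ into a $\sim_L$-equivalent word in which every surviving occurrence of $\abg$ is flanked by a block $\bet^{j/2}$, so that such a factor can be diverted into the inherited tail $\gam$ rather than into the bridge. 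Arranging this diversion so as to simultaneously keep $w_1,w_2$ strictly shorter than $w$, keep the bridge under $m^2+10m$, and keep $r$ respecting is where I expect the bulk of the bookkeeping, and hence the main obstacle, to lie.
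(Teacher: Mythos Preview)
Your reduction to a ``core claim''---every long $w\in L$ arises as $(w_1,w_2)\splice r w$ with $w_1,w_2\in L$ strictly shorter and $r\in R$---is a valid strategy, but you do not establish it. The gap is the rule. Your candidate bridge $\alp\bet$ comes only from a pumping factorisation of $w$; as you yourself note, this cannot make $(u_1,u_2;\alp\bet)$ respect $L$. The proposed remedy, ``transferring the respecting property of a rule of $R_0$'' by normalising sites and pumping the bridge, is not an argument: you never say \emph{which} rule of $R_0$ to start from, why its bridge should be $\sim_L$-equivalent to your $\alp\bet$, or why the resulting rule can be applied to produce $w$ itself rather than merely some $\sim_L$-equivalent word. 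A rule of $R_0$ chosen abstractly, independent of a concrete derivation of $w$, cannot be made to do this job.

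The paper supplies the missing link by a detour you are missing: \emph{pump $w$ up} to $x\alp\bet^j\gam\del y$ for huge $j$; this word is too long to be an axiom, so the given splicing system must create it by a concrete chain of splicings, and tracing that chain exhibits an actual rule $s$ respecting $L$ whose bridge overlaps $\alp\bet^j\gam$. Normalising $s$ via Lemmas~\ref{lem:rule:ext}--\ref{lem:extend:bridge} and then \emph{pumping back down} yields a rule with short sites and bounded bridge that creates $x_1\abgd y_1$ from two strictly shorter words of $L$. A second ingredient you lack is Lemma~\ref{lem:series}, which reshapes the remaining splicings in the chain (those building the flanks $x,y$) so their side-words lie in $I$ and their rules in $R$; the reshaped chain then carries $x_1\abgd y_1$ to $w$. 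Even so, this only bounds bridges by some finite $n$ (Lemma~\ref{lem:long:words}); the bound $m^2+10m$ needs a separate descent on the length-lexicographically largest essential bridge $\mu$, applying Lemma~\ref{lem:pumping:algorithm} to \emph{all} occurrences of $\abg$ in the central factor simultaneously---a rather different use from the single-factor diversion you sketch.
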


As the language generated by the splicing system $(I,R)$ is constructible,
Theorem~\ref{thm:main} implies that the problem whether or not a given
regular language is a splicing language is decidable.
A detailed discussion of the decidability result 
is given in Section~\ref{sec:decidability}.

Let $L$ be a formal language.
Clearly, every set of words $J\sse L$ and set of rules $S$ where every rule in $S$ respects $L$ generates a subset $L(J,S)\sse L$.
Therefore, in Theorem~\ref{thm:main} the inclusion $L(I,R) \sse L$ is obvious.
The rest of this section is devoted to the proof of the converse inclusion $L\sse L(I,R)$.
Consider a splicing language $L$.
One of the main techniques we use in the proof is that,
whenever a word $z$ is created by a series of splicings from
a set of words in $L$ and a set rules that respect $L$,
then we can use a modified set of words from $L$ and modified rules which
respect $L$ in order to obtain the same word $z$ by splicing.
If $z$ is sufficiently long these words can be chosen such that they are all shorter
than $z$ and the sites and bridges of the rules also satisfy certain length
restrictions.
Of course, our goal is to show that we can create $z$ by splicing from a subset of $I$
with rules which all satisfy the length bounds given by $R$
(as defined in Theorem~\ref{thm:main}).
In Section~\ref{sec:rules} we will present techniques to obtain rules that respect
$L$ from other rules respecting $L$ and we show how we can modify a single
splicing step, such that the words used for splicing are not significantly longer
than the splicing result.
In Section~\ref{sec:series} we use these techniques to modify series of splicings
in the way described above (Lemma~\ref{lem:series}).
Finally, in Section~\ref{sec:proof} we prove Theorem~\ref{thm:main}.

%%%%%%%%%%%%%%%%%%%%%%%%%%%%%%%%%%%%%%%%%%%%%%%%%%%%%%%%%%%%%%%%%%%%%%%%%%%%%%%%
\subsection{Rule Modifications}\label{sec:rules}

Let us start with the simple observation that we can extend the sites and the
bridge of a rule $r$ such that the new rule respects all languages
which are respected by $r$.

\begin{lemma}\label{lem:rule:ext}
	Let $r=(u_1,u_2;v)$ be a rule which respects a language $L$.
	For every word $x$,
	the rules $(xu_1,u_2;xv)$, $(u_1x,u_2;v)$,
	$(u_1,xu_2;v)$, and $(u_1,u_2x;vx)$ respect $L$ as well.
\end{lemma}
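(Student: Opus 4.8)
The plan is to show, for each of the four modified rules, that any splicing it performs on two words of $L$ can be reproduced by the original rule $r$ on the \emph{same} two input words, merely by shifting the boundary between a site and its surrounding context. Since $r$ respects $L$, the output then lies in $L$, which is exactly what ``respects $L$'' demands. Concretely, I would unfold the definition: a rule $s$ respects $L$ iff whenever $w_1,w_2\in L$ and $(w_1,w_2)\splice{s} z$, we have $z\in L$. So for each modified rule it suffices to fix an arbitrary splicing of two words of $L$ and exhibit the same $z$ as a splicing by $r$.

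For $(xu_1,u_2;xv)$, an application reads $w_1 = x_1(xu_1)y_1$ and $w_2 = x_2u_2y_2$, producing $z = x_1(xv)y_2$. Re-parsing $w_1 = (x_1x)u_1y_1$ and applying $r$ to the same pair yields $(w_1,w_2)\splice{r}(x_1x)vy_2 = z$, so $z\in L$. The rule $(u_1x,u_2;v)$ is the mirror case on the right of the first word: write $w_1 = x_1u_1(xy_1)$, so that $r$ produces $x_1vy_2 = z$. The remaining two rules act on the second word's site: for $(u_1,xu_2;v)$ re-parse $w_2 = (x_2x)u_2y_2$, and for $(u_1,u_2x;vx)$ re-parse $w_2 = x_2u_2(xy_2)$; in each case $r$ applied to the same $w_1,w_2$ reproduces the claimed output $z$.

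I do not anticipate a genuine obstacle; the only point requiring care is bookkeeping — ensuring that in each case the extra factor $x$ is absorbed into the \emph{correct} piece of context so that the $r$-splicing produces exactly $z$ rather than a shifted variant. Note in particular that the placement of $x$ inside the bridge of the first and fourth rules (as $xv$ and $vx$) is precisely what leaves the output invariant under re-parsing: the $x$ that is ``pushed into'' the prefix $x_1$ (respectively suffix $y_2$) of the result is exactly the $x$ that the modified bridge reinserts. The middle two rules leave the bridge untouched because the boundary shift there lands entirely inside the context of $w_2$ and never touches the spliced output segment $v$.
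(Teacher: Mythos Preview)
Your proof is correct and follows exactly the paper's approach: both argue that any splicing $(w_1,w_2)\splice{s}z$ by one of the modified rules $s$ is also a splicing $(w_1,w_2)\splice{r}z$ by the original rule $r$, whence $z\in L$. The paper states this implication in a single sentence without unpacking the four cases, whereas you spell out the re-parsing explicitly, but the argument is the same.
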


\begin{proof}
	Let $s$ be any of the four rules
	$(xu_1,u_2;xv)$, $(u_1x,u_2;v)$,
	$(u_1,xu_2;v)$, or $(u_1,u_2x;vx)$.
	In order to prove that $s$ respects $L$
	we have to show that, 
	for all $w_1,w_2\in L$ and $z\in\Sig^*$
	such that $(w_1,w_2) \splice s z$, we have $z\in L$, too.
	Indeed, if $(w_1,w_2) \splice s z$, then
	$(w_1,w_2) \splice r z$ and as $r$ respects $L$,
	we conclude $z\in L$.
\end{proof}

Henceforth, we will refer to the rules $(xu_1,u_2;xv)$ and $(u_1,u_2x;vx)$
as extensions of the bridge and to the rules $(u_1x,u_2;v)$ and $(u_1,xu_2;v)$
as extensions of the left and right site, respectively.

Next, for a language $L$,
let us investigate the syntactic class of a rule $r = (u_1,u_2;v)$.
The {\em syntactic class} (with respect to $L$) of $r$ is the set
of rules $[r]_L = [u_1]_L \times [u_2]_L\times [v]_L$ and 
two rules $r$ and $s$ are {\em syntactically congruent} (with respect to $L$),
denoted by $r \sim_L s$, if $s \in [r]_L$.

\begin{lemma}\label{lem:rule:sim}
	Let $r$ be a rule which respects a language $L$.
	Every rule $s\in[r]_L$ respects $L$.
\end{lemma}

\begin{proof}
	Let $r = (u_1,u_2; v)$ and $s = (\ti u_1,\ti u_2; \ti v)$.
	Thus, $u_i\sim_L \ti u_i$ for $i= 1,2$ and $v\sim_L \ti v$.
	For $\ti w_1 = x_1\ti u_1y_1\in L$ and $\ti w_2 = x_2\ti u_2y_2\in L$
	we have to show that $\ti z = x_1\ti v y_2\in L$.
	For $i = 1,2$, let $w_i = x_iu_iy_i$ and note that $w_i \sim_L \ti w_i$;
	hence, $w_i\in L$.
	Furthermore, $(w_1,w_2)\splice r x_1 vy_2 = z \in L$ as $r$ respects $L$
	and $\ti z\in L$ as $z\sim_L \ti z$.
\end{proof}

Consider a splicing $(x_1u_1y_1,x_2u_2y_2)\splice r x_1vy_2$ which respects
a regular language $L$
as shown in Figure~\ref{fig:shorten:sites} left side.
The factors $u_1y_1$ and $x_2u_2$ may be relatively long but they do
not occur as factors in the resulting word $x_1vy_2$.
In particular, it is possible that two long words are spliced 
and the outcome is a relatively short word.
Using Lemmas~\ref{lem:rule:ext} and~\ref{lem:rule:sim},
we can find shorter words in $L$ and a modified splicing rule
which can be used to obtain $x_1vy_2$.

\begin{figure}[ht]
	\center
	\begin{tikzpicture}[text height=1ex,text depth=0ex,
			font={\footnotesize},thick]
		\begin{scope}
			\draw [|-|,zwei] (0,0) -- node [above] {$x_1$} (1,0);
			\draw [-|,dotted,zwei] (1,0) -- node [above] {$u_1$} (2,0);
			\draw [-|,dotted,zwei] (2,0) -- node [above] {$y_1$} (3,0);

			\draw [|-|,dotted,eins] (0,-.5) -- node [below] {$x_2$} (1,-.5);
			\draw [-|,dotted,eins] (1,-.5) -- node [below] {$u_2$} (2,-.5);
			\draw [-|,eins] (2,-.5) -- node [below] {$y_2$} (3,-.5);

			\draw (1,0) -- node [sloped,fill=white,inner sep=.5pt]
				{$v$} (2,-.5);
		\end{scope}

		\node at (4,-.25) {$\Longrightarrow$};

		\begin{scope}[xshift=5cm]
			\draw [|-|,zwei] (0,0) -- node [above] {$x_1$} (1,0);
			\draw [-|,dotted,zwei] (1,0) -- node [above] {$\ti u_1$} (1.75,0);

			\draw [|-|,dotted,eins] (1.25,-.5) --
				node [below] {$\ti u_2$} (2,-.5);
			\draw [-|,eins] (2,-.5) -- node [below] {$y_2$} (3,-.5);

			\draw (1,0) -- node [sloped,fill=white,inner sep=.5pt]
				{$v$} (2,-.5);
		\end{scope}
	\end{tikzpicture}
	\caption{The factors $u_1y_1$ and $x_2u_2$ can be replaced
		by {\em short} words.}
	\label{fig:shorten:sites}
\end{figure}
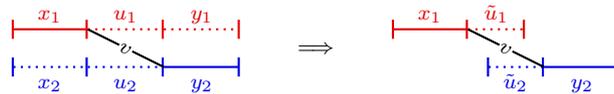

\begin{lemma}\label{lem:shorten:sites}
	Let $r = (u_1,u_2;v)$ be a rule which respects a regular language $L$
	and $w_1 = x_1u_1y_1\in L$, $w_2 = x_2u_2y_2\in L$.
	There is a rule $s = (\ti u_1,\ti u_2;v)$ which respects $L$
	and words $\ti w_1 = x_1\ti u_1 \in L$,
	$\ti w_2 = \ti u_2 y_2\in L$
	such that $\abs{\ti u_1}, \abs{\ti u_2} < \abs{M_L}$.
	More precisely, $\ti u_1\in [u_1y_1]_L$ and $\ti u_2\in [x_2u_2]_L$.
	
	In particular, whenever $(w_1,w_2)\splice r x_1vy_2 = z$,
	there is a splicing $(\ti w_1,\ti w_2)\splice s z$
	which respects $L$ where $\ti w_1$, $\ti w_2$, and $s$ have the properties
	described above.
\end{lemma}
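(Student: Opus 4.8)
The plan is to fold the factors that the splice discards into the two sites, and then replace those (now possibly longer) sites by short syntactically congruent words, reading off the short rule from the preceding lemmas without ever touching the bridge $v$ or the retained parts $x_1$ and $y_2$.

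First I would apply Lemma~\ref{lem:rule:ext} twice. Extending the left site of $r$ by $y_1$ shows that $(u_1y_1,u_2;v)$ respects $L$, and extending the right site of that rule by $x_2$ shows that the rule $r' = (u_1y_1,x_2u_2;v)$ respects $L$. The purpose of this step is that $u_1y_1$ is a suffix of $w_1$ and $x_2u_2$ is a prefix of $w_2$, so writing $w_1 = x_1\,(u_1y_1)\,\e$ and $w_2 = \e\,(x_2u_2)\,y_2$ we obtain $(w_1,w_2)\splice{r'} x_1vy_2 = z$ directly from the definition of the splicing operation.

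Next I would invoke Lemma~\ref{lem:smallest:words} to choose short representatives: there is a word $\ti u_1\in[u_1y_1]_L$ with $\abs{\ti u_1}<\abs{M_L}$ and a word $\ti u_2\in[x_2u_2]_L$ with $\abs{\ti u_2}<\abs{M_L}$. This is exactly the claimed length bound and the promised memberships $\ti u_1\in[u_1y_1]_L$, $\ti u_2\in[x_2u_2]_L$. Since $\ti u_1\sim_L u_1y_1$, $\ti u_2\sim_L x_2u_2$, and $v\sim_L v$, the rule $s=(\ti u_1,\ti u_2;v)$ lies in $[r']_L$, and hence respects $L$ by Lemma~\ref{lem:rule:sim}.

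It then remains to check membership and the final splice. Set $\ti w_1=x_1\ti u_1$ and $\ti w_2=\ti u_2y_2$. From $\ti u_1\sim_L u_1y_1$ and the definition of the syntactic congruence, $x_1\ti u_1\in L\iff x_1u_1y_1\in L$, and the right-hand side holds because $w_1\in L$; thus $\ti w_1\in L$. Symmetrically $\ti u_2\sim_L x_2u_2$ gives $\ti u_2y_2\in L\iff x_2u_2y_2\in L$, so $\ti w_2\in L$. Finally, splicing $\ti w_1=x_1\ti u_1$ with $\ti w_2=\ti u_2y_2$ by $s$ reconstructs $x_1vy_2=z$, and since $\ti w_1,\ti w_2\in L$ and $s$ respects $L$, this splicing respects $L$, which is the ``in particular'' clause. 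I do not expect a genuine obstacle here: the statement is a clean composition of Lemmas~\ref{lem:rule:ext}, \ref{lem:rule:sim}, and~\ref{lem:smallest:words}. The only point requiring a moment's care is the bookkeeping in the first step, namely recognising that the factors cut away by the splice ($u_1y_1$ on the left and $x_2u_2$ on the right) are precisely the ones to absorb into the sites before shortening, so that Lemma~\ref{lem:smallest:words} yields the short representatives while leaving $v$, $x_1$, and $y_2$ untouched.
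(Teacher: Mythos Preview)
Your proof is correct and follows essentially the same approach as the paper: extend the sites via Lemma~\ref{lem:rule:ext} to obtain $(u_1y_1,x_2u_2;v)$, replace the sites by shortest representatives of their syntactic classes using Lemma~\ref{lem:smallest:words}, and conclude via Lemma~\ref{lem:rule:sim}. If anything, you spell out the membership $\ti w_1,\ti w_2\in L$ and the ``in particular'' clause more explicitly than the paper does.
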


\begin{proof}
	By Lemma~\ref{lem:rule:ext}, the rule
	$(u_1y_1,x_2u_2;v)$ respects $L$.
	Choose $\ti u_1\in [u_1y_1]_L$ and $\ti u_2\in [x_2u_2]_L$
	as shortest words from the syntactic classes, respectively;
	as such, $\abs{\ti u_1}, \abs{\ti u_2} < \abs{M_L}$ (Lemma~\ref{lem:smallest:words}) and
	$\ti w_1 = x_1\ti u_1\in L$, $w_2 = \ti u_2y_2\in L$.
	Furthermore, by Lemma~\ref{lem:rule:sim}, 
	$s = (\ti u_1,\ti u_2;v)$ respects $L$.
\end{proof}

Another way of modifying a splicing $(w_1,w_2)\splice r z$
is to extend the bridge of $r$  to the left until it covers a prefix of $w_1$.
Afterwards, we can use the same method we used in Lemma~\ref{lem:shorten:sites}
and replace $w_1$ by a short word, see Figure~\ref{fig:extend:bridge}.
As the splicing operation is symmetric,
we can also extend the bridge of $r$ rightwards and replace $w_2$
by a short word,
even though Lemma~\ref{lem:extend:bridge} does not explicitly state this.

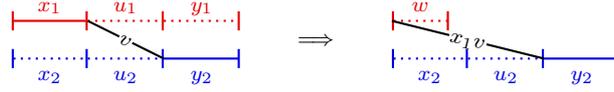
\begin{figure}[ht]
	\center
	\begin{tikzpicture}[text height=1ex,text depth=0ex,
			font={\footnotesize},thick]
		\begin{scope}
			\draw [|-|,zwei] (0,0) -- node [above] {$x_1$} (1,0);
			\draw [-|,dotted,zwei] (1,0) -- node [above] {$u_1$} (2,0);
			\draw [-|,dotted,zwei] (2,0) -- node [above] {$y_1$} (3,0);

			\draw [|-|,dotted,eins] (0,-.5) -- node [below] {$x_2$} (1,-.5);
			\draw [-|,dotted,eins] (1,-.5) -- node [below] {$u_2$} (2,-.5);
			\draw [-|,eins] (2,-.5) -- node [below] {$y_2$} (3,-.5);

			\draw (1,0) -- node [sloped,fill=white,inner sep=.5pt]
				{$v$} (2,-.5);
		\end{scope}

		\node at (4,-.25) {$\Longrightarrow$};

		\begin{scope}[xshift=5cm]
			\draw [|-|,dotted,zwei] (0,0) -- node [above] {$\ti w$} (.75,0);

			\draw [|-|,dotted,eins] (0,-.5) -- node [below] {$x_2$} (1,-.5);
			\draw [-|,dotted,eins] (1,-.5) -- node [below] {$u_2$} (2,-.5);
			\draw [-|,eins] (2,-.5) -- node [below] {$y_2$} (3,-.5);

			\draw [-] (0,0) -- node [sloped,fill=white,inner sep=.5pt]
				{$x_1v$} (2,-.5);
		\end{scope}
	\end{tikzpicture}
	\caption{The word $x_1u_1y_1$ can be replaced by a {\em short} word
	as long as we extend the bridge of the splicing rule accordingly.}
	\label{fig:extend:bridge}
\end{figure}

\begin{lemma}\label{lem:extend:bridge}
	Let $r = (u_1,u_2;v)$ be a rule which respects a regular language $L$
	and let $w_1 = x_1u_1y_1 \in L$.
	Every rule $s = (\ti w,u_2;x_1v)$,
	where $\ti w\in [w_1]_L \sse L$, respects $L$.
	In particular, there is a rule $s$, as above,
	where $\abs{\ti w} < \abs{M_L}$.
\end{lemma}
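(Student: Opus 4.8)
The plan is to reconstruct $w_1$ as the left site of a rule that respects $L$ by two successive applications of Lemma~\ref{lem:rule:ext}, and then to invoke the syntactic-congruence closure provided by Lemma~\ref{lem:rule:sim}. So the argument should be short and purely a matter of chaining the preceding lemmas in the right order.

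First I would extend the bridge of $r$ leftwards by $x_1$: by the bridge extension in Lemma~\ref{lem:rule:ext} (the rule $(xu_1,u_2;xv)$), the rule $(x_1u_1,u_2;x_1v)$ respects $L$. Next I would extend its left site rightwards by $y_1$: by the left-site extension in the same lemma (the rule $(u_1x,u_2;v)$, applied with left site $x_1u_1$ and $x = y_1$), the rule $(x_1u_1y_1,u_2;x_1v)$ respects $L$. Since $x_1u_1y_1 = w_1$, this already shows that $(w_1,u_2;x_1v)$ respects $L$.

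Now the left site of this rule is $w_1$ itself, so I can pass to the syntactic class. By Lemma~\ref{lem:rule:sim}, every rule in $[(w_1,u_2;x_1v)]_L = [w_1]_L\times[u_2]_L\times[x_1v]_L$ respects $L$; in particular, for any $\ti w\in[w_1]_L$ the rule $s = (\ti w,u_2;x_1v)$ respects $L$. I would also record that $[w_1]_L\sse L$, since $w_1\in L$ and $L$ is a union of syntactic classes, so indeed $\ti w\in L$ as the statement asserts. For the closing sentence I would apply Lemma~\ref{lem:smallest:words} to the element $[w_1]_L$ of $M_L$, obtaining a representative $\ti w\in[w_1]_L$ with $\abs{\ti w}<\abs{M_L}$; the corresponding rule $s$ then meets the length bound.

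I do not expect a genuine obstacle here: the content is entirely a recombination of Lemmas~\ref{lem:rule:ext}, \ref{lem:rule:sim}, and~\ref{lem:smallest:words}. The one point that must be handled carefully is the order of the two extensions in the second paragraph — the bridge must be extended by $x_1$ \emph{before} the left site is extended by $y_1$, so that the resulting left site is exactly the concatenation $x_1u_1y_1 = w_1$ and the bridge becomes exactly $x_1v$. Doing the extensions in the wrong order, or on the wrong component, would fail to rebuild $w_1$ and $x_1v$ simultaneously.
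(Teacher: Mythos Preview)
Your proposal is correct and follows essentially the same route as the paper: two applications of Lemma~\ref{lem:rule:ext} to obtain $(x_1u_1y_1,u_2;x_1v)$, then Lemma~\ref{lem:rule:sim} to replace the left site by any $\ti w\in[w_1]_L$, and Lemma~\ref{lem:smallest:words} for the length bound. The only minor remark is that your caveat about the order of the two extensions is unnecessary: applying the left-site extension by $y_1$ first and then the bridge extension by $x_1$ yields the same rule $(x_1u_1y_1,u_2;x_1v)$.
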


\begin{proof}
	By Lemma~\ref{lem:rule:ext}, we see that $(x_1u_1y_1,u_2;x_1v)$ respects $L$
	and, by Lemma~\ref{lem:rule:sim}, $ s=(\ti w,u_2;x_1v)$ respects $L$.
	If $\ti w\in [w_1]_L$ is a shortest word from the set, then
	$\abs{\ti w} < \abs{M_L}$ by Lemma~\ref{lem:smallest:words}.
\end{proof}

%%%%%%%%%%%%%%%%%%%%%%%%%%%%%%%%%%%%%%%%%%%%%%%%%%%%%%%%%%%%%%%%%%%%%%%%%%%%%%%%
\subsection{Series of Splicings}\label{sec:series}

We are now investigating words which are created by a series of
successive splicings which all respect a regular language $L$.
Observe, that if a word is created by two (or more) successive splicings, but the bridges of the rules do not overlap in the generated word, then the order of these splicings is irrelevant.
The notation in Remark~\ref{rem:swap:Pixton} is the same as in the Figure~\ref{fig:swap:Pixton}.

\begin{figure}[ht]
	\center
	\begin{tikzpicture}[text height=1ex,text depth=0ex,
			font={\footnotesize},thick]
		\draw [|-|,zwei] (0,0) -- node [above] {$x_1$} (1,0);
		\draw [-|,dotted,zwei] (1,0) -- node [above] {$u_1$} (2,0);
		\draw [-|,dotted,zwei] (2,0) -- node [above] {$y_1$} (3,0);

		\draw [|-|,dotted,eins] (0,-.5) -- node [below] {$x_2$} (1,-.5);
		\draw [-|,dotted,eins] (1,-.5) -- node [below] {$u_2$} (2,-.5);
		\draw [-|,eins] (2,-.5) -- node [above] {$w'$} (3,-.5);
		\draw [-|,eins,dotted] (3,-.5) -- node [above] {$u_2'$} (4,-.5);
		\draw [-|,eins,dotted] (4,-.5) -- node [above] {$y_2$} (5,-.5);

		\draw [|-|,dotted,drei] (2,-1) -- node [below] {$x_3$} (3,-1);
		\draw [-|,dotted,drei] (3,-1) -- node [below] {$u_3$} (4,-1);
		\draw [-|,drei] (4,-1) -- node [below] {$y_3$} (5,-1);

		\draw (1,0) -- node [sloped,fill=white,inner sep=.5pt]
			{$v_1$} (2,-.5);
		\draw [gray] (3,-.5) -- node [sloped,fill=white,inner sep=.5pt]
			{$v_2$} (4,-1);
	\end{tikzpicture}
	\caption{The word $x_1v_1 w' v_2 y_3$ can be created either by using the
		right splicing first or by using the left splicing first.}
	\label{fig:swap:Pixton}
\end{figure}

\begin{remark}\label{rem:swap:Pixton}
Consider rules $r = (u_1,u_2;v_1)$ and $s = (u_2',u_3;v_2)$
and words $w_1 = x_1u_1y_1$, $w_2 = x_2u_2 w' u_2'y_2$,
and $w_3 = x_3u_3y_3$.
The word $z = x_1v_1 w' v_2 y_3$ can be obtained by the splicings
\begin{align*}
	(w_1,w_2) &\splice{r}  x_1v_1w'u_2'y_2 =  z', &
	(z',w_3) &\splice{s} z \qquad\text{as well as} \\
	(w_2,w_3) &\splice{s}  x_2u_2w'v_2y_3 = z'', &
	(w_1,z'') &\splice{r} z,
\end{align*}
which makes the order of the splicing steps irrelevant.
\end{remark}

Now, consider a word $z$ which is created by two successive splicings
from words $w_i = x_iu_iy_i$ for $i = 1,2,3$ as in Figure~\ref{fig:combine}.
If no factor of $w_1$ or of the bridge in the first
splicing is a part of $z$,
then we can find another splicing rule $s$ such that
$(w_3,w_2)\splice s z$
and the bridge of $s$ is the bridge used in the second splicing.

\begin{figure}[ht]
	\center
	\begin{tikzpicture}[text height=1ex,text depth=0ex,
			font={\footnotesize},thick]
		\begin{scope}
			\draw [|-|,zwei] (0,0) -- node [above] {$x_1$} (1,0);
			\draw [-|,dotted,zwei] (1,0) -- node [above] {$u_1$} (2,0);
			\draw [-|,dotted,zwei] (2,0) -- node [above] {$y_1$} (3,0);

			\draw [|-|,dotted,eins] (0,-.5) -- node [below] {$x_2$} (1,-.5);
			\draw [-|,dotted,eins] (1,-.5) -- node [below] {$u_2$} (2,-.5);
			\draw [-|,eins] (2,-.5) -- node [below] {$y_2$} (3,-.5);

			\draw (1,0) -- node [sloped,fill=white,inner sep=.5pt]
				{$v_1$} (2,-.5);
		\end{scope}

		\node at (3.5,-.25) {$+$};

		\begin{scope}[xshift=3.875cm]
			\draw [|-|,drei] (.25,0) -- node [above] {$x_3$} (1.25,0);
			\draw [-|,dotted,drei] (1.25,0) -- 
				node [above] {$u_3$} (2.25,0);
			\draw [-|,dotted,drei] (2.25,0) -- 
				node [above] {$y_3$} (3.25,0);

			\draw [|-|,dotted,zwei] (0,-.5) -- node [below] {$x_1$} (1,-.5);
			\draw [-,dotted] (1,-.5) -- node [below] {$v_1$} (2,-.5);
			\draw [|-,dotted,eins] (2,-.5) -- (2.25,-.5);
			\draw [-|,eins] (2.25,-.5) -- node [pos=.3333,below] {$y_2$} (3,-.5);

			\draw [gray] (1.25,0) -- node [sloped,fill=white,inner sep=.5pt]
				{$v_2$} (2.25,-.5);
		\end{scope}

		\node at (8,-.25) {$\Longrightarrow$};

		\begin{scope}[xshift=8.875cm]
			\draw [|-|,drei] (.25,0) -- 
				node [above] {$x_3$} (1.25,0);
			\draw [-|,dotted,drei] (1.25,0) -- 
				node [above] {$u_3$} (2.25,0);
			\draw [-|,dotted,drei] (2.25,0) -- 
				node [above] {$y_3$} (3.25,0);

			\draw [|-|,dotted,eins] (0,-.5) -- node [below] {$x_2$} (1,-.5);
			\draw [-|,dotted,eins] (1,-.5) -- node [below] {$u_2$} (2,-.5);
			\draw [-,dotted,eins] (2,-.5) -- (2.25,-.5);
			\draw [-|,eins] (2.25,-.5) -- node [pos=.3333,below] {$y_2$} (3,-.5);

			\draw [gray] (1.25,0) -- node [sloped,fill=white,inner sep=.5pt]
				{$v_2$} (2.25,-.5);
		\end{scope}
	\end{tikzpicture}
	\caption{Two successive splicings can be replaced by one splicing
		in the case when the factor $x_1$ and the bridge $v_1$ do not contribute
		to the resulting word.}
	\label{fig:combine}
\end{figure}
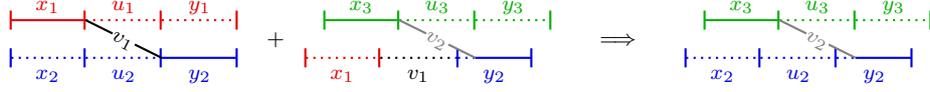

\begin{lemma}\label{lem:combine}
	Let $L$ be a language,
	$w_i = x_i u_i y_i \in L$ for $i = 1,2,3$,
	and $r_1 = (u_1,u_2;v_1)$, $r_2 = (u_3,u_4;v_2)$ be rules
	respecting $L$.
	If there are splicings
	\begin{align*}
		(w_1,w_2)&\splice{r_1} x_1v_1y_2 = w_4 = x_4u_4y_4, &
		(w_3,w_4)&\splice{r_2} x_3v_2 y_4 = z
	\end{align*}
	where $y_4$ is a suffix of $y_2$,
	then there is a rule $s = (u_3,\ti u_2;v_2)$
	which respects $L$ and $(w_3,w_2)\splice{s} z$.
\end{lemma}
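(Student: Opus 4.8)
The plan is to construct the rule $s$ explicitly, verify the splicing $(w_3,w_2)\splice{s} z$ by inspection, and then prove that $s$ respects $L$ by chaining the two given rules directly from the definition of ``respects''.

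First I would exploit the hypothesis. Since $y_4$ is a suffix of $y_2$, write $y_2 = y_2'y_4$ for a unique (possibly empty) word $y_2'$. The first splicing gives $w_4 = x_1v_1y_2 = x_1v_1y_2'y_4$, while simultaneously $w_4 = x_4u_4y_4$; cancelling the common suffix $y_4$ yields the key identity $x_4u_4 = x_1v_1y_2'$. Intuitively this records that the cut point of the second splicing lies inside the $y_2$-part of $w_4$, strictly to the right of the bridge $v_1$, which is precisely what the hypothesis encodes.

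Next I would set $\ti u_2 = u_2y_2'$ and $s = (u_3,\ti u_2;v_2)$. The splicing $(w_3,w_2)\splice{s} z$ is then immediate: writing $w_2 = x_2u_2y_2 = x_2(u_2y_2')y_4 = x_2\ti u_2y_4$, the right site $\ti u_2$ occurs in $w_2$ with trailing suffix $y_4$, and splicing it against $w_3 = x_3u_3y_3$ (left site $u_3$, prefix $x_3$) produces $x_3v_2y_4 = z$, as required.

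The only real obstacle is showing that $s$ respects $L$, and here I would argue directly rather than through Lemmas~\ref{lem:rule:ext} and~\ref{lem:rule:sim} (which do not compose two rules). Take arbitrary words $pu_3q,\,a\ti u_2b\in L$ realizing a splicing $(pu_3q,\,a\ti u_2b)\splice{s} pv_2b$; I must show $pv_2b\in L$. The idea is to \emph{re-manufacture} the intermediate word. Since $a\ti u_2b = au_2(y_2'b)$ contains $u_2$, applying the respecting rule $r_1$ to the fixed axiom $w_1 = x_1u_1y_1\in L$ and this word yields $x_1v_1y_2'b\in L$. By the identity above this equals $x_4u_4b$, which exposes $u_4$; applying the respecting rule $r_2$ to $pu_3q$ and $x_4u_4b$ then yields $pv_2b\in L$. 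Thus both steps only reuse $w_1$ together with the two given rules. The subtle point to get right is the bookkeeping of the trailing suffix $b$ through both splicings and the use of the equality $x_1v_1y_2' = x_4u_4$ to re-expose the right site $u_4$; everything else follows from the definition of a rule respecting $L$.
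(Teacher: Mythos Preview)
Your proof is correct and is essentially the paper's argument, just phrased without the detour through Lemma~\ref{lem:rule:ext}: the paper extends $u_2$ and $v_1$ so that $v_1$ and $u_4$ match in $w_4$, which amounts precisely to your choice $\ti u_2 = u_2y_2'$, and then proves that $s$ respects $L$ by the same two-step chaining through $w_1$, $r_1$, and $r_2$ that you give.
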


\begin{proof}
By extending the bridge $v_1$ of $r_1$ and the right site $u_4$ of $r_2$ (Lemma~\ref{lem:rule:ext}), we may assume the factors $v_1$ and $u_4$ match in $w_4$:
let $w_4\pos{i,j} = v_1$ and $w_4\pos{i',j'} = u_4$,
\begin{itemize}
	\item if $i < i'$ we extend $u_4$ in $r_2$ to the left by $i'-i$ letters,
	\item if $i > i'$ we extend $v_1$ in $r_1$ to the left by $i-i'$ letters and we extend $u_1$ accordingly, and
	\item we extend $v_1$ in $r_1$ to the right by $j'-j$ letters and we extend $u_2$ accordingly;
	Note that $j'\ge j$ as $y_4$ is a suffix of $y_2$.
\end{itemize}
Clearly, the extended factors $v_1$ and $u_4$ match in $w_4$.
The left site $u_3$ and the bridge $v_2$ of $r_2$ are not modified by this extension.
Additionally, we have $x_1 = x_4$ and $y_2 = y_4$.
Let $s = (u_3,u_2;v_2)$ (where $u_2$ is the extended right site of $r_1$).
As desired, $(w_3,w_2)\splice s x_3v_2 y_4 = z$ since $w_2 = x_2u_2 y_4$.
	
Next, let us prove that $s$ respects $L$.
Let $w_i' = x_i' u_i y_i'\in L$ for $i = 2,3$.
If for all those words $x_3' v_2 y_2'\in L$, then $s$ respects $L$.
Indeed, we may splice
\begin{align*}
	(w_1,w_2')&\splice{r_1} x_1v_1 y_2' = x_1 u_4 y_2', &
	(w_3',x_1 u_4 y_2')&\splice{r_2} x_3' v_2 y_2'.
\end{align*}
Therefore, $x_3' v_2 y_2'\in L$ and $s$ respects $L$.
\end{proof}

Consider a splicing system $(J,S)$ and its generated language $L = L(J,S)$.
Let $n$ be the length of the longest word in $J$ and let $\mu$ be the length-lexicographic largest word that is a component of a rule in $S$.
Define $W_\mu = \set{w\in\Sig^*}{w\eLL \mu}$ as the set of all words that are at most as large as $\mu$, in length-lexicographical order.
Furthermore, let $I = \Sig^{\le n}\cap L$ be a set of axioms and let
\[
	R = \set{r\in W_\mu^3}{r\text{ respects }L}
\]
be a set of rules.
It is not difficult to see that $J\sse I$, $S\sse R$, and $L = L(I,R)$.
Whenever convenient, we may assume that a splicing language $L$
is generated by a splicing system which is of the form of $(I,R)$.

Now, consider the creation of a word $xzy\in L$
by splicing in $(I,R)$.
The creation of $xzy$ can be traced back to a word $z_1 = x_1zy_1$
where either $z_1\in I$ or where $z_1$ is created by a splicing
that affects $z$, \ie the bridge in this splicing
overlaps with the factor $z$ in $x_1zy_1$.
The next lemma describes this creation of $xzy = z_{k+1}$ by $k$ splicings in $(I,R)$,
and shows that we can choose the rules and words which are used to create
$z_{k+1}$ from $z_1$ such that the words and bridges of rules
are not significantly longer than $\ell = \max\oneset{\abs x,\abs y}$.

\begin{lemma}\label{lem:series}
	Let $L$ be a splicing language, let $\ell,n\in\N$, let $m = \abs{M_L}$,
	and let $\mu$ be a word with $\abs\mu \ge \ell+2m$
	such that for $I = \Sig^{\le n}\cap L$ and
	$R = \set{r\in W_\mu^3}{r\text{ respects }L}$
	we have $L = L(I,R)$.
		
	Let $z_{k+1} = x_{k+1}zy_{k+1}$, with $\abs{x_{k+1}},\abs{y_{k+1}}\le\ell$,
	be a word that is created by $k$ splicings from a word $z_1 = x_1 z y_1$
	where either $z_1\in I$ or $z_1$ is created
	by a splicing $(w_0,w_0')\splice{s} z_1$
	with $w_0,w_0'\in L$, $s\in R$,
	and the bridge of $s$ overlaps with $z$ in $z_1$.
	Furthermore, for $i = 1,\ldots,k$ the intermediate splicings are either
	\begin{enumerate}[(i)]
		\item $(w_i,z_i)\splice{r_i} x_{i+1}zy_{i+1}=z_{i+1}$,
			$w_i\in L$, $r_i\in R$, $y_{i+1} = y_i$,
			and the bridge of $r_i$
			is covered by the prefix $x_{i+1}$ or
		\item $(z_i,w_i)\splice{r_i} x_{i+1}zy_{i+1}=z_{i+1}$,
			$w_i\in L$, $r_i\in R$, $x_{i+1} = x_i$,
			and the bridge of $r_i$
			is covered by the suffix $y_{i+1}$.
	\end{enumerate}
	There are rules and words creating $z_{k+1}$, as above,
	satisfying in addition:
	\begin{enumerate}
		\item There is $k' \le k$ such that for $i = 1,\ldots,k'$
			all splicings are of the form \I and
			for $i = k'+1,\ldots,k$ all splicings are of the form \II.
		\item
			For $i = 1,\ldots,k$ the following bounds apply:
			$\abs{x_i},\abs{y_i} < \ell+2m$, $\abs{w_i} < m$,
			$r_i \in \Sig^{<2m}\times\Sig^{<2m}\times \Sig^{<\ell+m}$.
	\end{enumerate}
	In particular, if $n \ge m$, then $w_1,\ldots,w_k\in I$.
\end{lemma}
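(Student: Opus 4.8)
The plan is to establish the two numbered assertions in turn, doing the reordering (item~1) first and then the length bounds (item~2). For item~1 the crucial observation is that a splicing of form~\I only rewrites the prefix lying to the \emph{left} of $z$ (its bridge is covered by $x_{i+1}$), while a splicing of form~\II only rewrites the suffix lying to the \emph{right} of $z$ (its bridge is covered by $y_{i+1}$). Consequently, whenever a form-\II step is immediately followed by a form-\I step, the two bridges sit on opposite sides of the untouched factor $z$ and therefore do not overlap in the common intermediate word. Remark~\ref{rem:swap:Pixton} then applies and lets me interchange these two consecutive splicings without altering the rules, the auxiliary words, or the final word $z_{k+1}$; the new intermediate word is again of the shape $xzy$ with a bridge on the correct side, so it still fits the \I/\II schema. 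A bubble-sort argument that repeatedly removes an adjacent \II--\I inversion then pushes all form-\I steps before all form-\II steps and yields the threshold $k'\le k$.

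For item~2, the prefixes/left sites and the suffixes/right sites play symmetric roles, so I would argue only about the form-\I phase (steps $1,\dots,k'$, which rewrite the prefix while holding the suffix fixed) and obtain the form-\II phase by the mirror argument. The first reduction makes every auxiliary word short: in a step $(w_i,z_i)\splice{r_i} z_{i+1}$ of form~\I only a prefix of $w_i$ survives into $z_{i+1}$ (the bridge overwrites the rest), so Lemma~\ref{lem:extend:bridge} absorbs that surviving prefix into the bridge and replaces $w_i$ by a shortest word of its syntactic class. By Lemma~\ref{lem:smallest:words} this gives $\abs{w_i}<m$, makes the left site equal to the whole short word, and leaves the bridge covered by $x_{i+1}$. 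Dually, Lemma~\ref{lem:shorten:sites} together with Lemma~\ref{lem:smallest:words} shortens the right site $u_{i,2}$ sitting on $z_i$. The two sites straddling $z$, each of length $<m$, are the source of the $2m$ slack in $\abs{x_i},\abs{y_i}<\ell+2m$ and in $r_i\in\Sig^{<2m}\times\Sig^{<2m}\times\Sig^{<\ell+m}$, while a bridge reproduces only surviving prefix material (length $\le\ell$) together with at most one adjoining site, which accounts for the $\ell+m$ bound.

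The hard part is to bound the intermediate words \emph{uniformly} and at the same time keep the produced word \emph{exactly} equal to $z_{k+1}$, not merely syntactically congruent to it: a single splicing may turn long words into short ones and conversely, so careless shortening can either inflate an intermediate word or destroy the target. Here I would first invoke Lemma~\ref{lem:combine} to prune wasteful steps — whenever the prefix and bridge created by one form-\I step do not survive into $z_{k+1}$, Lemma~\ref{lem:combine} merges that step with the following one — so that I may assume every remaining step genuinely contributes a factor to the final word. Under this assumption every intermediate $z_i=x_izy$ consists only of material that is either part of the final (length $\le\ell$) prefix/suffix of $z_{k+1}$ or lies inside a rule site of length $<m$ abutting $z$, which gives $\abs{x_i},\abs{y_i}<\ell+2m$. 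To preserve exactness I would perform the shortening outside-in, reconstructing the form-\I phase from $z_{k+1}$ inward: at each step the kept part of the prefix is a prefix of the already-short $x_{k+1}$, and the discarded remainder is replaced by a shortest congruent word, the congruences propagated by Lemmas~\ref{lem:rule:ext} and~\ref{lem:rule:sim} guaranteeing membership in $L$ throughout.

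The step I expect to demand the most care is the coordination at the boundary between the two phases: after shortening, the suffix carried unchanged through the form-\I phase must coincide with the short suffix from which the form-\II reconstruction begins, and symmetrically for the prefix. I would resolve this by fixing the short words at the phase boundary \emph{first} — as the shortest congruent representatives produced by the inward reconstruction of each phase — and only then propagating them outward, so that both phases meet consistently on the common word $z_{k'+1}=x_{k+1} z \tilde{y}$; since the rebuilt starting word is congruent to, and no longer than, the original $z_1$, it stays in the required form and, if $z_1\in I$, remains in $I=\Sig^{\le n}\cap L$. Finally, once all $\abs{w_i}<m$ and $n\ge m$, each $w_i\in L$ has length $<m\le n$ and hence lies in $I$, which is the ``in particular'' claim.
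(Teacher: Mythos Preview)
Your outline follows the paper's proof closely---Remark~\ref{rem:swap:Pixton} for item~1, Lemma~\ref{lem:extend:bridge} to shrink each $w_i$, Lemma~\ref{lem:combine} to prune redundant form-\I steps, and replacement by shortest syntactic representatives for the sites---and the overall architecture is sound. However, you have misplaced the genuinely delicate boundary, and this leaves a real gap.

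The boundary that needs care is not the transition between the form-\I and form-\II phases at $z_{k'+1}$; that junction is harmless because $x_{k'+1}=x_{k+1}$ is already the target prefix and nothing there needs shortening. The delicate point is at $z_1$. After your reductions the right site of $r_1$ is a prefix $u_1$ of $x_1$ (and symmetrically the left site of $r_{k'+1}$ is a suffix of $y_1$). Replacing $u_1$ by an arbitrary shortest word from $[u_1]_L$ would change $z_1$, and the lemma requires that the \emph{new} $z_1$ still satisfy the stated alternative: either $z_1\in I$, or $z_1$ arises from a splicing $(w_0,w_0')\splice{s} z_1$ whose bridge overlaps $z$. You treat only the case $z_1\in I$; in the other case the bridge $v$ of $s$ may overlap the prefix $u_1$, so a blind replacement of $u_1$ mangles $v$ and destroys the witnessing splicing. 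The paper's fix is to split $u_1=\del_1\del_2$ at the left edge of $v$, replace $\del_1$ and $\del_2$ separately by shortest congruent words, and correspondingly shorten $v$; this keeps $s\in R$ (the bridge only shrinks) and preserves the splicing into the modified $z_1$. This two-piece shortening is precisely the source of the bound $\abs{u_1}<2m$ and hence of $\abs{x_1}<\ell+2m$; your attribution of the $2m$ slack to ``two sites straddling $z$, each of length $<m$'' is not what is actually happening---for $i\ge 2$ one gets the stronger $\abs{u_i}<m$ and $\abs{x_i}<\ell+m$, and the extra $m$ appears only at $i=1$ (and symmetrically at $i=k'+1$) because of the potential overlap with the bridge of $s$.
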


\begin{proof}
Statement~1 follows by Remark~\ref{rem:swap:Pixton}
Note that if $k = 0$, then statement~2 is trivially true.
By the first statement, $x_{k'+1} = x_{k'+2} = \cdots = x_{k+1}$
and $y_1 = y_2 =\cdots = y_{k'+1}$.
Let us consider the splicings of the form \I which are
the steps $i = 1,\ldots,k'$.
The notations we employ in order to prove the second statement
for $i = 1,\ldots,k'$
are chosen to match the notations in Figure~\ref{fig:series:proof}.

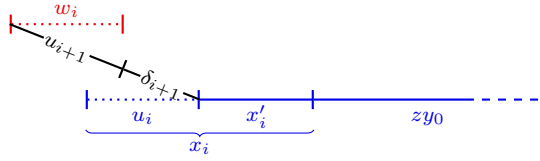
\begin{figure}[ht]
	\centering
	\begin{tikzpicture}[text height=1ex,text depth=0ex,
			font={\footnotesize}]
		\begin{scope}[thick]
			\draw [|-|,dotted,zwei] (0,0) -- node [above] {$w_i$} (1.5,0);

			\draw [|-|,dotted,eins] (1,-1) -- node [below] {$u_i$} (2.5,-1);
			\draw [-|,eins] (2.5,-1) -- node [below] {$x_i'$} (4,-1);
			\draw [-,eins] (4,-1) -- node [below,pos=.75] {$zy_0$} (6,-1);
			\draw [-,dashed,eins] (6,-1) -- (7,-1);

			\draw [-|] (0,0) -- node [sloped,fill=white,inner sep=.5pt]
				{$u_{i+1}$} (1.5,-.6);
			\draw (1.5,-.6) -- node [sloped,fill=white,inner sep=.5pt]
				{$\del_{i+1}$} (2.5,-1);
		\end{scope}
		\draw [decorate,decoration=brace,eins] (4,-1.4) to 
			node [below] {$x_i$} (1,-1.4);
	\end{tikzpicture}
	\caption{The $i$-th splicing step in the proof of Lemma~\ref{lem:series}
		where $v_i = u_{i+1}\del_{i+1}$
		and $x_{i+1} = u_{i+1}\del_{i+1} x_i'$.}
	\label{fig:series:proof}
\end{figure}

Let $r_i = (w_i,u_i;v_i)$
where $w_i \in \Sig^{<m}\cap L$ (Lemma~\ref{lem:extend:bridge})
and $x_i = u_i x_i'$;
(by Lemma~\ref{lem:rule:ext},
we extended the site $u_i$ to cover a prefix of $x_i$)
such that $u_{i+1} x_{i+1}' = v_i x_i'$
with $u_{k'+1} = \e$ and $ x_{k'+1}' = x_{k'+1} = x_{k+1}$.
Lemma~\ref{lem:combine} justifies the assumption that
every splicing occurs at the left of the preceding splicing,
\ie $x_i'$ is a proper suffix of $x_{i+1}'$.
Note that, as $\abs{x_{k'+1}'} \le \ell$, the length of 
$x_i'$ is bounded by $\ell$.
Now, choose $\del_{i+1}$ such that $x_{i+1}' = \del_{i+1} x_{i}'$;
thus, $u_{i+1} \del_{i+1} = v_i$.

For $i = 2,\ldots,k'$
we replace $u_i$ by a shortest word from $[u_i]_L$.
Note that this does not change the fact that all rules respect $L$ (Lemma~\ref{lem:rule:sim}).
We also replace the prefix of $x_i$ and $v_{i-1}$ by this factor.
(There is no need to change $v_{k'}$ as 
$\abs{v_{k'}} = \abs{\del_{k'+1}} \le \abs{x_{k+1}} \le\ell$.)
Therefore, $\abs {x_i} < \abs{x_i'}+m< \ell +m$ and
$r_i \in \Sig^{<m}\times\Sig^{<m}\times\Sig^{< \ell+m}$
if $i\neq 1$ (Lemma~\ref{lem:smallest:words}).
We do not change $u_1$ yet as this may affect the splicing
$(w_0,w_0') \splice s z_1$ if it exists.
Note that, for $i = 2,\ldots,k'$,
we have actually proven a stronger bound than claimed
in statement~2 of Lemma~\ref{lem:series}.
Even though we have not proven the bound for $r_1$ yet,
we have already established
$r_1 \in \Sig^{<m}\times \Sig^*\times\Sig^{<\ell+m}$.
Symmetrically, we can consider statement~2 to be proven
for $i = k'+2,\ldots,k$,
\ie only the prefix $x_1$ and the suffix
$y_1 = y_{k'+1}$ have not been modified yet.

Now, let $x_1 = u_1x_1'$ (as above) and, symmetrically,
let $y_1 = y_{k'+1}'u_{k'+1}$ where $u_{k'+1}$
is the left site of $r_{k'+1}$.
If $k' = 0$ (or $k'=k$), then $u_1$ (\resp $u_{k'+1}$)
can be considered empty and $x_1' = x_{k+1}$
(\resp $y_{k'+1}' = y_{k+1}$).
If $z_1\in I$ we replace $u_1$ and $u_{k'+1}$ by
shortest words from their syntactic classes, respectively,
and the claim holds.
Otherwise, $(w_0,w_0') \splice s z_1$
where $s = (u_0, u_0',v)$, $w_0 = x u_0$,
and $w_0' = u_0' y$, by Lemma~\ref{lem:shorten:sites}.
Thus,
\[
	z_1 = u_1x_1'z y_{k'+1}' u_{k'+1} = x v y.
\]

In the case when $v$ does not overlap with the prefix $u_1$ of $z_1$,
replace $u_1$ by a shortest word from its syntactic class.
If $v$ and the prefix $u_1$ overlap,
let $u_1 = \del_1\del_2$ such that $\del_2$ is the overlap
and replace $\del_1$ and $\del_2$ by a
shortest word from their syntactic classes, respectively.
In both cases, $\abs{u_1} < 2m$ (Lemma~\ref{lem:smallest:words}) and
if $v$ was modified, it got shorter;
hence, we still have $v \in W_\mu$.
Observe that $\abs{x_1} < \ell+2m$ and
$r_1 \in \Sig^{<m}\times \Sig^{<2m}\times\Sig^{< \ell+m}$.
Analogously, $u_{k'+1}$ and $r_{k'+1}$ can be treated in order to conclude the prove of statement~2.
\end{proof}

%%%%%%%%%%%%%%%%%%%%%%%%%%%%%%%%%%%%%%%%%%%%%%%%%%%%%%%%%%%%%%%%%%%%%%%%%%%%%%%%
\subsection{Proof of Theorem~\ref{thm:main}}\label{sec:proof}

Let $L$ be a splicing language and $m = \abs{M_L}$.
Throughout this section, by $\sim$ we denote the equivalence
relation $\sim_L$
and by $[\,\cdot\,]$ we denote the corresponding equivalence classes
$[\,\cdot\,]_L$.

Recall that Theorem~\ref{thm:main} claims that the
splicing system $(I,R)$ with $I = \Sig^{<m^2+ 6m}\cap L$ and
\[
	R = \set{r\in \Sig^{< 2m}\times
	\Sig^{< 2m}\times\Sig^{< m^2+10m}}
	{r\text{ respects }L}
\]
generates $L$.
The proof is divided in two parts.
In the first part, Lemma~\ref{lem:long:words}, we prove that $L$
is generated by a splicing system $(I,R')$ where
all sites of rules in $R'$ are shorter than $2m$,
but we do not care about the lengths of the bridges.
The second part will then conclude the proof by showing that there are no
rules in $R'$ with bridges of length greater than or equal to  $m^2 + 10m$ which are essential
for the creation of the language $L$ by splicing.

\begin{lemma}\label{lem:long:words}
	Let $L$, $m$, and $I$ as above.
	There is $n\in \N$ such that the
	splicing system $(I,R')$ with
	\[
		R' = \set{r\in\Sig^{<2m}\times\Sig^{< 2m}\times\Sig^{\le n}}
		{r\text{ respects }L}
	\]
	generates the language $L = L(I,R')$.
\end{lemma}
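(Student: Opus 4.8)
The plan is to prove the two inclusions separately. The inclusion $L(I,R')\sse L$ is immediate, since every axiom of $I=\Sig^{<m^2+6m}\cap L$ lies in $L$ and every rule of $R'$ respects $L$, so the generated language cannot leave $L$. All the work goes into $L\sse L(I,R')$, which I would prove by strong induction on the length-lexicographic order $\eLL$. Throughout I fix a standard-form system $(I_0,R_0)$ generating $L$ (as in the discussion preceding Lemma~\ref{lem:series}), whose axioms have length at most some $n_0$ and whose rule components are bounded by $\abs\mu$; the bound $n$ in the statement is chosen only at the very end, large enough to dominate every bridge that the construction below produces. The base case is trivial: if $\abs w<m^2+6m$ then $w\in I\sse L(I,R')$.

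For the inductive step with $\abs w\ge m^2+6m$, the idea is to shrink $w$ by pumping, apply the induction hypothesis to the shorter word, and then recover $w$ with one additional splicing whose rule lies in $R'$. I would apply Lemma~\ref{lem:pumping} not to all of $w$ but to the factor $w\pos{m;m+m^2}$, which has length exactly $m^2=\abs{M_L}^2$ and exists because $\abs w\ge m^2+6m$. Lifting the resulting factorisation to $w$ (the congruences survive left/right multiplication) yields $w=\alp\bet\gam$ with $\bet\ne\e$, $\alp\sim_L\alp\bet$, $\gam\sim_L\bet\gam$, and crucially $\abs\alp\ge m$. The word $w'=\alp\gam$ satisfies $w'\sim_L w$, hence $w'\in L$, and $\abs{w'}=\abs w-\abs\bet<\abs w$, so $w'\in L(I,R')$ by the induction hypothesis.

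To rebuild $w$ from shorter words I would use the congruences to re-insert $\bet$ by means of a \emph{bridge} rather than a site. Anchoring on a splicing near the $\alp\mid\gam$ boundary that already respects $L$, I extend its bridge leftwards until it absorbs the kept prefix $\alp$ (Lemma~\ref{lem:extend:bridge}); the extended bridge has the form $\alp\rho$, and since $\alp\sim_L\alp\bet$ we get $\alp\rho\sim_L\alp\bet\rho$, so replacing the bridge by $\alp\bet\rho$ gives a rule that still respects $L$ (Lemma~\ref{lem:rule:sim}). Shortening its sites to length $<m$ via Lemma~\ref{lem:shorten:sites} places the rule in $R'$ as soon as $n\ge\abs{\alp\bet\rho}$. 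As the right input I would take $\hat\alp\gam$, where $\hat\alp$ is a shortest word of $[\alp]_L$, so $\abs{\hat\alp}<m$ (Lemma~\ref{lem:smallest:words}); then $\hat\alp\gam\sim_L\alp\gam=w'\in L$, and because $\abs\alp\ge m$ and $\bet\ne\e$ we have $\abs{\hat\alp\gam}<m+\abs\gam<\abs w$, so this input is strictly $\eLL$-smaller than $w$ and is available by induction, while the left input is short and lies in $I$.

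The main obstacle is precisely anchoring the reconstruction in a genuinely respecting rule. The pumped factor cannot be conjured from nothing: a naive insertion rule such as $(\gam,\alp;\bet)$ does \emph{not} respect $L$ in general (left-$\gam$-compatibility and right-$\alp$-compatibility of the surrounding classes do not force $\bet$ to be insertable), so $\bet$ must be injected by \emph{modifying the bridge of a rule that already respects $L$}. This forces me to locate, inside an actual production of $w$ organised by the trace-back of Lemma~\ref{lem:series} and the reordering of Remark~\ref{rem:swap:Pixton}, a splicing whose bridge straddles the $\alp\mid\gam$ boundary, and to handle separately the case where that boundary falls inside an axiom (where one instead adjusts a short axiom, using $\abs\bet\le m^2$). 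Carrying this out while keeping every site below $2m$, keeping both inputs strictly $\eLL$-smaller than $w$ — which is exactly why the pump is planted at depth $\ge m$ — and finally exhibiting a single uniform bridge bound $n$, is the delicate part; the sharp bound $m^2+10m$ on bridges is then obtained in the second half of the proof of Theorem~\ref{thm:main}.
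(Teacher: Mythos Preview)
Your proposal has a genuine gap: you never pump \emph{up}. The paper's proof factorises $w = x\alpha\beta\gamma\delta y$ (with $|x|=|y|=3m$ and $|\alpha\beta\gamma|=m^2$) and then replaces $\beta$ by $\beta^j$ for $j$ so large that $x\alpha\beta^j\gamma\delta y$ cannot be an axiom of the fixed generating system. This is what forces the trace-back of Lemma~\ref{lem:series} to terminate at an honest splicing whose bridge overlaps the pumped block --- precisely the ``anchoring splicing'' you are hunting for. Only after locating this anchor does the paper pump \emph{down} again, converting that splicing into one that produces $x_1\alpha\beta\gamma\delta y_1$ (and hence $w$) from strictly shorter inputs by a rule in $R'$.

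You instead trace back the production of $w$ itself. But if $w$, or the region of $w$ you are tracking, already comes from an axiom of $(I_0,R_0)$, the trace-back yields no splicing at all, and you have nothing to anchor on. Your parenthetical fix (``adjust a short axiom, using $|\beta|\le m^2$'') does not work: that axiom has length bounded only by the arbitrary constant $n_0$, not by any function of $m$, so it need not lie in $I$; and there is no respecting rule available to modify --- a bare insertion rule such as $(\alpha,\gamma;\alpha\beta)$ does \emph{not} respect $L$ in general, since from $x\alpha y\in L$ and $x'\gamma y'\in L$ the congruence $\alpha\sim_L\alpha\beta$ tells you nothing about $x\alpha y'$. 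The pump-up is not an optional convenience but the device that manufactures the anchor; without it the inductive step does not close.
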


\begin{proof}
	As $I \sse L$ and every rule in $R'$ respects $L$,
	it is clear that $L(I,R') \sse L$ for any $n$;
	we only need to prove the converse inclusion.

	As $L$ is a splicing language,
	$L = L(J,S)$ for some splicing system $(J,S)$.
	Let $n$ be larger than the length of every bridge of every rule in $S$
	and $n \ge 4m^2$.
	
	In order to prove $L \sse L(I,R')$
	we use induction on the length of words in $L$.
	For all $w\in L$ with $\abs w < m^2+6m$,
	by definition, $w\in I \sse L(I,R')$.

	Now, consider $w\in L$ with $\abs w \ge m^2+6m$.
	The induction hypothesis states that
	every word $w'\in L$ with $\abs{w'} < \abs {w}$ belongs to $L(I,R')$.
	Factorize $w = x\alp\bet\gam\del y$ such that
	$\abs{x}, \abs{y} = 3m$,
	$\abs{\alp\bet\gam} = m^2$, $\abs\bet \ge 1$,
	$\alp \sim \alp\bet$, and $\gam \sim \bet\gam$.

	The proof idea is to use a pumping argument on $\abg$ in order to obtain
	a very long word.
	This word has to be created by a series of splicings in $(J,S)$.
	We show that these splicings can be
	modified in order to create $w$ by splicing from
	a set of strictly shorter words and with rules from $R'$.
	Then, the induction hypothesis implies $w\in L(I,R')$.

	Choose $j$ sufficiently large ($j > n$ and 
	$J$ does not contain words of length $j$ or more).
	We let $z = \abjgd$ and
	investigate the creation of $xzy \in L$.
	As $z$ is not a factor of a words in $J$, every word in $L$ which contains $z$
	is created by some splicing in $(J,S)$.
	Thus, we can trace back the creation of $xzy$ by splicing to the point
	where the factor $z$ is affected for the last time.
	Let $z_{k+1} = x_{k+1}zy_{k+1}$, where $x_{k+1} = x$ and $y_{k+1} = y$,
	be created by $k$ splicings from a word $z_1 = x_1 z y_1$
	where $x_1zy_1$ is created by a splicing
	$(w_0,w_0')\splice{s} z_1$ with $w_0, w_0'\in L$,
	$s\in S$, and the bridge of $s$ overlaps with $z$ in $z_1$.
	Furthermore, for $i = 1,\ldots,k$ the intermediate splicings are either
	\begin{enumerate}[(i)]
		\item $(w_i,z_i)\splice{r_i} x_{i+1}zy_{i+1}=z_{i+1}$,
			$w_i\in L$, $r_i\in S$, $y_{i+1} = y_i$,
			and the bridge of $r_i$
			is covered by the prefix $x_{i+1}$ or
		\item $(z_i,w_i)\splice{r_i} x_{i+1}zy_{i+1}=z_{i+1}$,
			$w_i\in L$, $r_i\in S$, $x_{i+1} = x_i$,
			and the bridge of $r_i$
			is covered by the suffix $y_{i+1}$.
	\end{enumerate}
	Following Lemma~\ref{lem:series} (with $\ell = 3m$),
	we may assume that $w_1,\ldots,w_k \in I$,
	$r_1,\ldots,r_k \in \Sig^{<2m}\times\Sig^{<2m}\times \Sig^{<4m}$,
	thus $r_1,\ldots,r_k\in R'$,
	and $\abs{x_1},\abs{y_1} < 5m$.
	Furthermore, we may use the same words and 
	rules in order to create $w = x_{k+1}\abgd y_{k+1}$
	from $x_1\abgd y_1$ by splicing, \ie
	if $x_1\abgd y_1$ belongs to $L(I,R')$, so does $w$.
	
	Now, consider the first splicing
	$(w_0,w_0')\splice s z_1 = x_1zy_1$.
	By Lemma~\ref{lem:shorten:sites},
	we assume $s = (u_1,u_2;v)$ such that
	$w_0 = xu_1$, $w_0' = u_2 y$
	and $\abs{u_1},\abs{u_2} <m$
	($x$ and $y$ are newly chosen words).
	Hence,
	\[
		z_0 = xvy = x_1\abjgd y_1.
	\]
	where $x$ is a proper prefix of $x_1\abjgd$ and
	$y$ is a proper suffix of $\abjgd y_1$.
	
	We will now pump down the factor $\bet^j$ to $\bet$
	in order to obtain the words $\ti x$, $\ti v$, $\ti y$
	from $x$, $v$, $y$, respectively, as follows:

\begin{asparaenum}
	\item
	If $v$ overlaps with $\bet^j$ but does neither cover $\alp$
	nor $\gam$, extend $v$ (Lemma~\ref{lem:rule:ext})
	such that $v = \alp\bet^j\gam$.
	Observe that, now, the factor $\alp\bet^j\gam$ is 
	covered by either $xv$ or $vy$.
	
	\item
	If $\alp\bet^j$ or $\bet^j\gam$ is covered by one of
	$x$, $v$, or $y$, then replace this factor
	by $\alp\bet$ or $\bet\gam$, respectively.
	Otherwise, by symmetry, assume that
	$\alp\bet^j\gam$ is covered by $xv$ and, therefore,
	we can factorize
	\begin{align*}
		x &= x_1\alp\bet^{j_1}\bet_1 &
		v &= \bet_2\bet^{j_2}\gam v'
	\end{align*}
	where $\bet_1\bet_2 = \bet$ and $j_1+j_2+1 = j$.
	The results of pumping are the words $\ti x = x_1\alp\bet_1$,
	$\ti v = \bet_2\gam v'$, and $\ti y = y$.
\end{asparaenum}
	
	Let $\ti u_1$ and $\ti u_2$ be the sites of $s$ that
	may have been altered due to the extension of $v$ and,
	by Lemma~\ref{lem:shorten:sites},
	assume $\abs{\ti u_1},\abs{\ti u_2} < m$.	
	If we used an extension for $v$, then $\abs{\ti v} = m^2$.
	No matter whether we used an extension, $t = (\ti u_1,\ti u_2;\ti v)\in R'$
	and $(\ti x\ti u_1,\ti u_2\ti y)\splice{t} x_1 \abgd y_1$ as desired.
	Observe that $\ti x$ is a prefix of $x_1\abgd$ and $\ti y$ is  a
	suffix of $\abgd y_1$ and recall that $\abs{x_1},\abs{y_1} <5m$.
	Therefore, $\abs{\ti x\ti u_1},\abs{\ti u_2\ti y} < \abs{\abgd}+6m =\abs w$
	and, by induction hypothesis,
	$\ti x\ti u_1$ and $\ti u_2\ti y$ belong to $L(I,R')$.
	We conclude that $x_1\abgd y_1$ as well as $w$ belong to $L(I,R')$.
\end{proof}

We are now prepared to prove the main result.

\begin{proof}[Proof of Theorem~\ref{thm:main}]
	Recall that for a splicing language $L$ with $m = \abs{M_L}$,
	we intend to prove that the splicing system
	$(I,R)$ with $I = \Sig^{< m^2+6m}\cap L$ and
	\[
		R = \set{r\in\Sig^{<2 m}\times\Sig^{< 2m}\times\Sig^{< m^2+10m}}
		{r\text{ respects }L}
	\]
	generates the language $L = L(I,R)$.
	Obviously, $L(I,R) \subseteq L$.
	By Lemma~\ref{lem:long:words}, there is a finite set of rules
	$R'\sse \Sig^{<2m}\times \Sig^{<2m} \times \Sig^*$
	such that $L(I,R') = L$.

	For a word $\mu$ we let $W_\mu = \set{w\in\Sig^*}{w\eLL \mu}$,
	as we did before.
	Define the set of rules where every component
	is length-lexicographically bounded by $\mu$
	\[
		R_\mu = \set{r\in\Sig^{< 2m}\times\Sig^{< 2m}\times W_\mu}
		{r\text{ respects }L}
	\]
	and the language $L_\mu = L(I,R_\mu)$; clearly, $L_\mu \sse L$.
	For two words $\mu \eLL v$ we see that $R_\mu\sse R_v$,
	and hence, $L_\mu\sse L_v$.	
	Thus, if $L_\mu = L$ for some word $\mu$, then
	for all words $v$ with $\mu \eLL v$, we have $L_v = L$.
	As $L = L(I,R')$, there exists a word $\mu$ such that $L_\mu = L$.
	Let $\mu$ be the smallest word, in the length-lexicographic order, such that
	$L_\mu = L$.
	Note that if $\abs\mu < m^2+10$, then $R_\mu \sse R$ and
	$L = L_\mu \sse L(I,R)$.
	For the sake of contradiction assume $\abs \mu \ge m^2+10m$.
	Let $\nu$ be the next-smaller word than $\mu$,
	in the length-lexicographic order,
	and let $S = R_{\nu}$.
	Note that $L(I,S) \ssne L$ and $R_\mu\sm S$ contains only rules
	whose bridges are $\mu$.
	
	Choose $w$ from $L\sm L(I,S)$ as a shortest word, \ie
	for all $w'\in L$ with $\abs {w'} < \abs w$, we have $w' \in L(I,S)$.
	Factorize $w = xzy$ with $\abs x = \abs y = 3m$;
	note that $\abs w \ge m^2 + 6m$ since, otherwise, $w\in I$.
	Factorize $\mu = \del_1\abg\del_2$ with
	$\abs{\del_1},\abs{\del_2} \ge 5m$,
	$\abs{\abg} = m^2$, $\bet\neq \e$, $\alp\sim \alp\bet$,
	and $\gam \sim \bet\gam$, by Lemma~\ref{lem:pumping}.
	
	Next, we will use a pumping argument on all factors $\abg$ in $z$.
	As in the proof of Lemma~\ref{lem:long:words}, this new word
	has to be created by a series of splicings in $(I,R_\mu)$
	and we will show that these splicings can be modified in order to
	create $w$ from strictly shorter words and with rules from $S$.
	This will contradict the assumption that $w$ is a shortest
	word from $L\sm L(I,S)$.
	
	Let $j$ be a sufficiently large even number ($j > 4\abs\mu+\abs z$ will do).
	We define a word $\ti z$ which is the result of applying the
	pumping algorithm from Lemma~\ref{lem:pumping:algorithm} on $z$,
	as discussed in Section~\ref{sec:pump:alg}.
	The pumping algorithm replaces the occurrences of $\abg$ in $z$ by $\abjg$
	such that for every factor $\ti z\pos{k,k+m^2} = \abg$, either
	\begin{enumerate}[(a)]
		\item $\alpha\beta^{j\slash 2}$ is a factor of $\ti z$ starting at position
			$\ti z\pos{k}$ or 
		\item $\beta^{j\slash2}\gamma$ is a factor of
			$\ti z$ ending at position $\ti z\pos{k+m^2}$
	\end{enumerate}
	holds.
	In particular, if $\del_1\abg\del_2$ is a factor of $\ti z$ either
	\A $\gamma\del_2$ is a prefix of a word in $\bet^+$ or
	\B $\delta_1\alpha$ is a suffix of a word in $\bet^+$.
	By induction and as $\abg \sim \abjg$, it is easy to see that
	$z\sim \ti z$ and $x\ti zy\in L$.
	
	Let us trace back the creation of $x\ti zy\in L$
	by splicing in $(I,R_\mu)$
	to a word $x_1\ti zy_1$
	where either $x_1\ti zy_1\in I$
	or where $x_1\ti zy_1$ is created by a splicing
	that affects $\ti z$.
	Let $z_{k+1} = x_{k+1}\ti zy_{k+1}$, where $x_{k+1} = x$ and $y_{k+1} = y$,
	be created by $k$ splicings from a word $z_1 = x_1\ti z y_1$
	where either $x_1\ti z y_1\in I$ or $x_1\ti zy_1$ is created
	by a splicing $(w_0,w_0')\splice{s} z_1$ with $w_0,w_0'\in L$,
	$s\in R_\mu$, and the bridge of $s$ overlaps with $\ti z$.
	Furthermore, for $i = 1,\ldots,k$ the intermediate splicings are either
	\begin{enumerate}[(i)]
		\item $(w_i,z_i)\splice{r_i} x_{i+1}\ti zy_{i+1}=z_{i+1}$,
			$w_i\in L$, $r_i\in R_\mu$, $y_{i+1} = y_i$,
			and the bridge of $r_i$
			is covered by the prefix $x_{i+1}$ or
		\item $(z_i,w_i)\splice{r_i} x_{i+1}\ti zy_{i+1}=z_{i+1}$,
			$w_i\in L$, $r_i\in R_\mu$, $x_{i+1} = x_i$,
			and the bridge of $r_i$
			is covered by the suffix $y_{i+1}$.
	\end{enumerate}
	Following Lemma~\ref{lem:series} (with $\ell = 3m$), we may assume that
	$w_1,\ldots,w_k \in I$,
	$r_1,\ldots,r_k \in \Sig^{<2m}\times\Sig^{<2m}\times \Sig^{<4m}$,
	thus $r_1,\ldots,r_k\in S$,
	and $\abs{x_1},\abs{y_1} < 5m$.
	Furthermore, we may use the same words and 
	rules in order to create $w = x_{k+1}z y_{k+1}$
	from $x_1 z y_1$ by splicing.
	As $w$ does not belong to $L(I,S)$,
	the word $x_1z y_1$ must not belong to $L(I,S)$ either.
	If $z_1$ was in $I$, then $x_1zy_1\in I$ as well, as $z$ is 
	at most as long as $\ti z$.

	Therefore, $z_1$ is created by a splicing
	$(w_0,w_0')\splice s z_1$ where 
	$s = (u_1,u_2;v)$, $w_0 = x  u_1$, and $w_0' =  u_2 y$
	where $\abs{u_1},\abs{u_2} < m$,
	by Lemma~\ref{lem:shorten:sites}
	(here, $x$ and $y$ are newly chosen words).
	We have
	\[
		z_1 = x_1\ti zy_1 = xvy
	\]
	where $x$ is a proper prefix of $x_1\ti z$
	and $y$ is a proper suffix of $\ti z y_1$.
	Recall that either $s\in S$ or $v = \mu$.

	However, we will see next that if $ v = \mu$,
	there is also a rule $\ti s\in S$ and slightly modified words
	which can be used in order to create $x_1\ti z y_1$ by splicing.
	In this case $\mu = \del_1\abg\del_2$ is a factor of $z_1$.
	As $\abs{\del_1},\abs{\del_2} \ge 5m>\abs{x_1},\abs{y_1}$,
	the factor $\abg$ is covered by $\ti z$ and, as such,
	the pumping algorithm ensured that
	either \A $\alp$ is succeeded by $\bet^{j/2}$ or
	\B $\gam$ is preceded by $\bet^{j/2}$.
	Due to symmetry, we only consider the former case,
	in which $\gam \del_2$ is a prefix of
	a word in $\bet^+$.
	Let us shorten the bridge $v$ such that
	$\ti s = (u_1,u_2; \del_1\alp\gam\del_2)$.
	Note that $\ti s\in S$
	(as $\alp\sim \alp\bet$ and by Lemma~\ref{lem:rule:sim}).
	Furthermore, as $j$ is large enough, $y= \bet_2\bet^{\ell}y'$
	where $\bet_2$ is the suffix of $\bet$
	such that $\gam\del_2\bet_2 \in\bet^+$ and $\ell \ge \abs{\gam}$.
	Note that this implies $\bet_2\gam$ is a prefix of $y$,
	which allows us to add an additional $\bet$.
	Therefore,
	$(w_0, u_2\bet_2\bet^{\ell+1}y') \splice{\ti s} z_1$
	where $u_2\bet_2\bet^{\ell+1}y'\in L$.
	This observation justifies the assumption that
	$v \neq \mu$ and $s\in S$
	which we will make for the remainder of the proof.

	Next, we will pump down the factors $\alp\bet^j\gam$ to $\abg$
	in $\ti z$ again.
	At every position where we pumped up before, we are now pumping down
	(in reverse order)
	in order to obtain the words $\ti x$, $\ti v$, $\ti y$ from
	the words $x$, $v$, $y$, respectively.
	The pumping in each step
	is done as in the proof of Lemma~\ref{lem:long:words}:

\begin{asparaenum}
	\item
	If $v$ overlaps with $\bet^j$ (in the factor that we are pumping down)
	but it neither covers $\alp$ nor $\gam$, extend $v$ (Lemma~\ref{lem:rule:ext})
	such that $v = \alp\bet^j\gam$.
	Observe that, now, the factor $\alp\bet^j\gam$ is 
	covered by either $xv$ or $vy$.
	
	\item
	If $\alp\bet^j$ or $\bet^j\gam$ is covered by one of
	$x$, $v$, or $y$, then replace this factor
	by $\alp\bet$ or $\bet\gam$, respectively.
	Otherwise, by symmetry, assume that
	$\alp\bet^j\gam$ is covered by $xv$ and, therefore,
	we can factorize
	\begin{align*}
		x &= x'\alp\bet^{j_1}\bet_1 &
		v &= \bet_2\bet^{j_2}\gam v'
	\end{align*}
	where $\bet_1\bet_2 = \bet$ and $j_1+j_2+1 = j$.
	The results of pumping are the words $x'\alp\bet_1$,
	$\bet_2\gam v'$.
\end{asparaenum}

	Let $\ti u_1$ and $\ti u_2$ be the sites of $s$ that
	may have been altered due to extensions and, by Lemma~\ref{lem:shorten:sites},
	assume $\abs{\ti u_1},\abs{\ti u_2} < m$.	
	If we used an extension for $v$ in at least one of the steps,
	then $\abs{\ti v} \le m^2$.
	No matter whether or not we used an extension,
	$t = (\ti u_1,\ti u_2;\ti v)\in S$
	and $(\ti x\ti u_1,\ti u_2\ti y)\splice{t} x_1 z y_1$.
	As $\abs{\ti x\ti u_1},\abs{\ti u_2\ti y} < \abs{z}+6m =\abs w$,
	$\ti x\ti u_1$ and $\ti u_2\ti y$ belong to $L(I,S)$.
	We conclude that $x_1 z y_1$ as well as $w$ belong to $L(I,S)$ ---
	the desired contradiction.
\end{proof}

%%%%%%%%%%%%%%%%%%%%%%%%%%%%%%%%%%%%%%%%%%%%%%%%%%%%%%%%%%%%%%%%%%%%%%%%%%%%%%%%
%%%%%%%%%%%%%%%%%%%%%%%%%%%%%%%%%%%%%%%%%%%%%%%%%%%%%%%%%%%%%%%%%%%%%%%%%%%%%%%%
%%%%%%%%%%%%%%%%%%%%%%%%%%%%%%%%%%%%%%%%%%%%%%%%%%%%%%%%%%%%%%%%%%%%%%%%%%%%%%%%
\section{The Case of Classical Splicing}
\label{sec:main}

In this section, we consider the splicing operation as defined in \cite{Paun96}.
This is the most commonly used definition for splicing in formal language
theory.
The notation we use has been employed in previous papers, see \eg \cite{BonizzoniFZ05,GoodeP07}.
Throughout this section,
a quadruplet of words $r = (u_1,v_1;u_2,v_2)\in (\Sig^*)^4$ is called a
{\em (splicing) rule}.
The words $u_1v_1$ and $u_2v_2$ are called {\em left}
and {\em right site} of $r$, respectively.
This splicing rule can be applied to two words 
$w_1 = x_1u_1v_1y_1$ and $w_2 = x_2u_2v_2y_2$,
that each contain one of the sites,
in order to create the new word $z = x_1u_1v_2y_2$,
see Figure~\ref{fig:splice}.
This operation is called {\em splicing} and
it is denoted by $(w_1,w_2)\splice{r} z$.
The {\em splicing position} of this splicing is $z\pos{\abs{x_1u_1}}$; that is the position between the factors $x_1u_1$ and $v_2y_2$ in $z$.

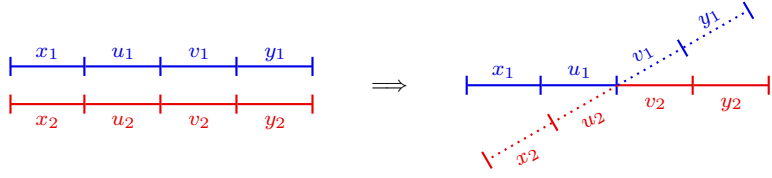
\begin{figure}[ht]
	\centering
	\begin{tikzpicture}[text height=1ex,text depth=0ex,
			font={\footnotesize}]
		\begin{scope}[yshift=.25cm,thick]
			\draw [|-|,eins] (0,0) -- node [above] {$x_1$} (1,0);
			\draw [-|,eins] (1,0) -- node [above] {$u_1$} (2,0);
			\draw [-|,eins] (2,0) -- node [above] {$v_1$} (3,0);
			\draw [-|,eins] (3,0) -- node [above] {$y_1$} (4,0);

			\draw [|-|,zwei] (0,-.5) -- node [below] {$x_2$} (1,-.5);
			\draw [-|,zwei] (1,-.5) -- node [below] {$u_2$} (2,-.5);
			\draw [-|,zwei] (2,-.5) -- node [below] {$v_2$} (3,-.5);
			\draw [-|,zwei] (3,-.5) -- node [below] {$y_2$} (4,-.5);
		\end{scope}

		\node at (5,0) {$\Longrightarrow$};

		\begin{scope}[xshift=6cm,thick]
			\draw [|-|,eins] (0,0) -- node [above] {$x_1$} (1,0);
			\draw [-|,eins] (1,0) -- node [above] {$u_1$} (2,0);
			\draw [-|,zwei] (2,0) -- node [below] {$v_2$} (3,0);
			\draw [-|,zwei] (3,0) -- node [below] {$y_2$} (4,0);
			
			\def\ang{30}

			\draw [-|,dotted,eins] (2,0) -- 
				node [above,sloped] {$v_1$} +(\ang:1);
			\draw [-|,dotted,eins] (2,0)++(\ang:1) --
				node [above,sloped] {$y_1$} ++(\ang:1);
				
			\draw [-|,dotted,zwei] (2,0) -- 
				node [below,sloped] {$u_2$} +(180+\ang:1);
			\draw [-|,dotted,zwei] (2,0)++(180+\ang:1) --
				node [below,sloped] {$x_2$} ++(180+\ang:1);
		\end{scope}
	\end{tikzpicture}
	\caption{Splicing of the words $x_1u_1v_1y_1$ and $x_2u_2v_2y_2$
		by the rule $r = (u_1,v_1;u_2,v_2)$.}
	\label{fig:splice}
\end{figure}

Just as in Section~\ref{sec:pixton},
for a rule $r$ we define the {\em splicing operator} $\sig_r$
such that for a language $L$
\[
	\sig_r(L) = \set{z\in\Sig^*}
		{\exists w_1,w_2\in L \colon (w_1,w_2)\splice{r} z}
\]
and for a set of splicing rules $R$, we let
\[
	\sig_R(L) = \bigcup_{r\in R} \sig_r(L).
\]
The reflexive and transitive closure of the splicing operator $\sig_R^*$
is given by
\begin{align*}
	\sig_R^0(L) &= L, &
	\sig_R^{i+1}(L) &= \sig_R^i(L) \cup \sig_R(\sig_R^i(L)), &
	\sig_R^*(L) &= \bigcup_{i\ge 0} \sig_R^i(L).
\end{align*}
A finite set of axioms $I \sse \Sig^*$ and a
finite set of splicing rules $R \sse (\Sig^*)^4$
form a {\em splicing system} $(I,R)$.
Every splicing system $(I,R)$ generates a language
$L(I,R) = \sig_R^*(I)$.
Note that $L(I,R)$ is the smallest language which is closed under
the splicing operator $\sig_R$ and includes $I$.
It is known that the language generated by a splicing system
is regular, see \cite{CulikH91,Pixton96}.
A (regular) language $L$ is called a {\em splicing language}
if a splicing system $(I,R)$ exists such that $L = L(I,R)$.

A rule $r$ is said to {\em respect} a language $L$
if $\sig_r(L) \sse L$.
It is easy to see that for any splicing system $(I,R)$,
every rule $r \in R$ respects the generated language $L(I,R)$.
Moreover, a rule $r\notin R$ respects $L(I,R)$ if and only if
\mbox{$L(I,R\cup\oneset r)={}$}\mbox{$L(I,R)$}.
We say a splicing $(w_1,w_2)\splice r z$ {\em respects} 
a language $L$ if  $w_1,w_2\in L$ and $r$ respects $L$;
obviously, this implies $z\in L$, too.

The main result of this section states that, if a regular language $L$ is a splicing language, then it is generated by a particular splicing system $(I,R)$ which only depends on the syntactic monoid of $L$.

\begin{theorem}\label{thm:main:four}
	Let $L$ be a splicing language and $m = \abs{M_L}$.
	The splicing system $(I,R)$ with $I = \Sig^{<m^2+ 6m}\cap L$ and
	\[
		R = \set{r\in\Sig^{< m^2+10m}\times\Sig^{< 2m}\times
		\Sig^{< 2m}\times\Sig^{< m^2+10m}}
		{r\text{ respects }L}
	\]
	generates the language $L = L(I,R)$.
\end{theorem}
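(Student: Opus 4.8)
As in the proof of Theorem~\ref{thm:main}, the inclusion $L(I,R)\sse L$ is immediate because $I\sse L$ and every rule of $R$ respects $L$, so the task is to prove $L\sse L(I,R)$. The plan is to mirror the development of Section~\ref{sec:pixton}, re-establishing each auxiliary result for quadruplets. The guiding dictionary is the correspondence already noted in Section~\ref{sec:pixton}: a quadruplet $r=(u_1,v_1;u_2,v_2)$ behaves like a triplet with bridge $u_1v_2$, except that this bridge is \emph{split at the splicing position} into a left half $u_1$ (a prefix of the left site, contributed by the first word) and a right half $v_2$ (a suffix of the right site, contributed by the second word). Accordingly the inner factors $v_1,u_2$ take over the role of the triplet sites and will be bounded by $2m$, while the outer factors $u_1,v_2$ take over the role of the bridge and will be bounded by $m^2+10m$. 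First I would prove the quadruplet analogues of the rule-modification lemmas: the syntactic class $[r]_L=[u_1]_L\times[v_1]_L\times[u_2]_L\times[v_2]_L$ consists of rules respecting $L$ (as in Lemma~\ref{lem:rule:sim}); for every word $x$ the four rules $(xu_1,v_1;u_2,v_2)$, $(u_1,v_1x;u_2,v_2)$, $(u_1,v_1;xu_2,v_2)$, and $(u_1,v_1;u_2,v_2x)$ respect $L$ whenever $r$ does, the first and last extending the bridge and the middle two extending the inner sites (as in Lemma~\ref{lem:rule:ext}); and, since the factors discarded by a splicing are $v_1y_1$ from $w_1=x_1u_1v_1y_1$ and $x_2u_2$ from $w_2=x_2u_2v_2y_2$, the analogue of Lemma~\ref{lem:shorten:sites} lets me replace $w_1,w_2$ by $x_1u_1\ti v_1$ and $\ti u_2v_2y_2$ with $\abs{\ti v_1},\abs{\ti u_2}<m$, yielding a rule $(u_1,\ti v_1;\ti u_2,v_2)$ that respects $L$ and realizes the same splicing.

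Next I would transport the series machinery. The reordering principle of Remark~\ref{rem:swap:Pixton} and the combination Lemma~\ref{lem:combine} carry over once ``bridge'' is read as $u_1v_2$, and hence so does the normalization Lemma~\ref{lem:series}. Tracing the creation of a word $xzy$ back to the last splicing that affects the middle factor $z$, I may assume the intermediate words lie in $I$, the intermediate rules have inner factors shorter than $2m$ and the active outer factor shorter than $\ell+m$, and the flanks satisfy $\abs{x_i},\abs{y_i}<\ell+2m$. Using this I would establish the classical counterpart of Lemma~\ref{lem:long:words}: there is an $n$ such that $L=L(I,R')$ for the rule set $R'$ with $\abs{v_1},\abs{u_2}<2m$ and $\abs{u_1},\abs{v_2}\le n$ otherwise unrestricted. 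The induction on word length and the pump-up and pump-down of a factor $\abjgd$ are identical to the triplet proof; the only change is that the decisive first splicing now involves a rule $(u_1,\ti v_1;\ti u_2,v_2)$ and the pumping acts on the contiguous block $u_1v_2$ straddling the splicing position.

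The final and most delicate step caps the two outer factors at $m^2+10m$. Here I would let $R_\mu$ be the set of rules respecting $L$ whose inner factors satisfy $\abs{v_1},\abs{u_2}<2m$ and whose outer factors satisfy $u_1,v_2\eLL\mu$, choose $\mu$ smallest (in length-lexicographic order) with $L(I,R_\mu)=L$ (such $\mu$ exists by the previous step), and assume for contradiction that $\abs\mu\ge m^2+10m$. Writing $\mu=\del_1\abg\del_2$ with $\abs{\del_1},\abs{\del_2}\ge5m$, $\abs{\abg}=m^2$, $\bet\neq\e$, $\alp\sim_L\alp\bet$, and $\gam\sim_L\bet\gam$ (Lemma~\ref{lem:pumping} applied to the middle block), taking a shortest $w=xzy\in L\sm L(I,R_\nu)$ with $\nu$ the length-lexicographic predecessor of $\mu$, and pumping every occurrence of $\abg$ in $z$ up to $\abjg$ by Lemma~\ref{lem:pumping:algorithm}, the trace-back of Lemma~\ref{lem:series} forces the first splicing to use a rule whose left half $u_1$ or right half $v_2$ equals $\mu$. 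Since $\abs{\del_1},\abs{\del_2}\ge5m$ exceed the flank lengths, the block $\abg$ inside this maximal half is covered by $\ti z$, so the guarantee of the pumping algorithm supplies a long run $\bet^{j/2}$ adjacent to $\alp$ (case (a)) or to $\gam$ (case (b)); replacing $\abg$ by $\alp\gam$ via $\alp\sim_L\alp\bet$ or $\gam\sim_L\bet\gam$ shortens the offending half strictly below $\mu$, and the $\bet$ removed is re-supplied by injecting it into the input word whose kept part abuts the splicing position on the side of the run. Pumping $\abjg$ back down to $\abg$ then exhibits $w$ as a splicing of strictly shorter words using rules in $R_\nu$, contradicting the minimality of $w$.

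The main obstacle is precisely this re-supply across the splicing position. Unlike the single contiguous bridge of the Pixton variant, the classical bridge is cut into $u_1\mid v_2$, so I must bound the two halves separately and, when shortening the half that equals $\mu$, route the re-supplied $\bet$ to the correct side of the cut: into the second word when the run lies to the right of $\alp$ (case (a)) and into the first word when it lies to the left of $\gam$ (case (b)), checking in each case that the modified rule still lies in $R_\nu$ and that the modified input words remain in $L$ by syntactic congruence. When both halves equal $\mu$ they are shortened one after another, the two re-supplies being carried out independently on the two sides. Finally, the left--right symmetry of splicing reduces the case $v_2=\mu$ to the mirror image of the case $u_1=\mu$, so no genuinely new situation arises beyond this bookkeeping.
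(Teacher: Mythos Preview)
Your plan has the right skeleton, but it glosses over the one place where the classical case genuinely diverges from Pixton, and this creates a gap in both of your two phases.

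You write that the normalization Lemma~\ref{lem:series} ``carries over once `bridge' is read as $u_1v_2$'' and that the intermediate rules then have ``the active outer factor shorter than $\ell+m$''. The first half is fine, but notice what you did \emph{not} say: the passive outer factor (the fourth component $v_i'$ of a form~\I rule, the first component of a form~\II rule) is \emph{not} bounded by $\ell+m$. In Pixton splicing the bridge is inserted whole and does not have to match anything in $z_i$, so it can be shortened freely; in classical splicing this passive component is a prefix of $x_i'z$ and must match there, so the series lemma only leaves it in $W_\mu$. Consequently your sentence ``the trace-back of Lemma~\ref{lem:series} forces the first splicing to use a rule whose left half $u_1$ or right half $v_2$ equals $\mu$'' is not justified: any of the intermediate rules $r_1,\ldots,r_k$ may also have its passive outer component equal to $\mu$, and in your first phase those same components may reach past $\alp$ into the pumped block $\bet^j$, so after pumping down they no longer match $x_i'\abgd$ and the rules cannot be re-used as written.

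The paper handles exactly this in both of its two steps. Whenever the passive outer component of some $r_i$ extends past $\alp$ into $\bet^j$, one writes it (after an extension) as $v'\alp\bet^h$ and replaces it by $v'\alp$ using $\alp\sim\alp\bet$; this simultaneously shortens the rule and makes it applicable after pumping down. In the $\mu$-capping step one then argues that if this passive component were still equal to $\mu=\del_1\abg\del_2$, the bound $\abs{\del_1}\ge 5m>\abs{x_i}$ places the block $\abg$ inside $\ti z$, where the pumping-algorithm guarantee and the position of the splicing rule out case~\B and force case~\A, so the component does overlap $\alp\bet^{j/2}$ and has already been shortened. Add this argument to each phase and your plan goes through; note also that the paper runs the two phases in the opposite order (first capping all four components at $m^2+10m$ with a large axiom set, then bounding the inner two at $2m$ and shrinking the axioms), though your ordering appears to work as well once the gap is filled.
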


As the language generated by the splicing system $(I,R)$ is constructible,
Theorem~\ref{thm:main:four} implies that the problem whether or not a given
regular language is a splicing language is decidable.
A detailed discussion of the decidability result 
is given in Section~\ref{sec:decidability}.

Let $L$ be a formal language.
Clearly, every set of words $J\sse L$ and set of rules $S$ where every rule in $S$ respects $L$ generates a subset $L(J,S)\sse L$.
Therefore, in Theorem~\ref{thm:main:four} the inclusion $L(I,R) \sse L$ is obvious.
The rest of this section is devoted to the proof of the converse inclusion $L\sse L(I,R)$.
The proof uses many ideas that have been employed in the Section~\ref{sec:pixton}.
However, there are some challenges we encounter solely while considering the classic
splicing variant.
The additional complexity comes from having to 
handle the first and fourth components of rules,
which in the case of classical splicing occur both in the words used for splicing and the splicing result.
In contrast, in Pixton splicing the sites of a rule do not occur in splicing result, whereas the bridge is not a factor of the words used for splicing.
The structure of this section is the same as Section~\ref{sec:pixton}.
In Section~\ref{sec:rules:four} we will present techniques to obtain rules that respect a regular language $L$ from other rules that respect $L$, and we show how we can modify a splicing step, such that the words used for splicing are not significantly longer than the splicing result;
similar results can be found in \cite{GoodePHD,GoodeP07}.
In Section~\ref{sec:series:four} we use these techniques to
show that a long word $z\in L$ can be obtained by a series of splicings 
from a set shorter words from $L$ and by using rules which satisfy certain
length restrictions.
Finally, in Section~\ref{sec:proof:four} we prove Theorem~\ref{thm:main:four}.

%%%%%%%%%%%%%%%%%%%%%%%%%%%%%%%%%%%%%%%%%%%%%%%%%%%%%%%%%%%%%%%%%%%%%%%%%%%%%%%%
\subsection{Rule Modifications}\label{sec:rules:four}

The first lemma states us that we can extend the sites of a rule $r$ 
such that the extended rule respects all languages that are respected by $r$.

\begin{lemma}\label{lem:extension}
	Let $r = (u_1,v_1;u_2,v_2)$ be a rule which respects a language $L$.
	For every word $x$,
	the rules $(xu_1,v_1;u_2,v_2)$, $(u_1,v_1x;u_2,v_2)$,
	$(u_1,v_1;xu_2,v_2)$, and $(u_1,v_1;u_2,v_2x)$ respect $L$ as well.
\end{lemma}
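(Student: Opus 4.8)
The plan is to reproduce, in the classical (quadruple) setting, the argument used for Lemma~\ref{lem:rule:ext}. Let $s$ denote any one of the four extended rules $(xu_1,v_1;u_2,v_2)$, $(u_1,v_1x;u_2,v_2)$, $(u_1,v_1;xu_2,v_2)$, or $(u_1,v_1;u_2,v_2x)$. To show that $s$ respects $L$ it suffices to prove $\sig_s(L)\sse L$, so I would fix arbitrary words $w_1,w_2\in L$ and $z\in\Sig^*$ with $(w_1,w_2)\splice{s} z$ and establish $z\in L$. The key observation, exactly as in the Pixton case, is a reduction: the very same pair $(w_1,w_2)$ already admits a splicing $(w_1,w_2)\splice{r} z$ by the original rule, producing the identical word $z$. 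Since $r$ respects $L$ by hypothesis, this immediately gives $z\in L$. Thus the entire lemma collapses to verifying, for each of the four extensions, that an $s$-splicing of $(w_1,w_2)$ can be re-read as an $r$-splicing of $(w_1,w_2)$ with the same result.

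The verification is a short case analysis in which the extension letters $x$ are re-absorbed into a neighbouring flank while the output $z=x_1u_1v_2y_2$ stays fixed. For $s=(xu_1,v_1;u_2,v_2)$, an $s$-splicing requires $w_1 = x_1'\,(xu_1)\,v_1 y_1$ and yields $x_1'\,(xu_1)\,v_2 y_2$; setting $x_1:=x_1'x$ recovers the $r$-factorization $w_1=x_1u_1v_1y_1$ and the result $z=x_1u_1v_2y_2$. Appending $x$ to $v_1$ instead absorbs $x$ into $y_1$ (write $w_1=x_1u_1v_1(xy_1')$); prepending $x$ to $u_2$ absorbs $x$ into $x_2$ (write $w_2=(x_2'x)u_2v_2y_2$); and appending $x$ to $v_2$ absorbs $x$ into $y_2$ (write $w_2=x_2u_2v_2(xy_2')$, whence the $s$-result $x_1u_1(v_2x)y_2'$ equals $x_1u_1v_2y_2$). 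In every case $(w_1,w_2)\splice{r} z$ holds with unchanged inputs and output.

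I do not expect a genuine obstacle here; the only care required is bookkeeping. The one point worth double-checking is the two cases in which a factor that actually appears in the result is the one being extended, namely $u_1\mapsto xu_1$ (left-site prepend) and $v_2\mapsto v_2x$ (right-site append): here one must confirm that the extension letters land precisely on the splicing boundary and are reabsorbed into $x_1$, respectively $y_2$, so that $z=x_1u_1v_2y_2$ is truly preserved. Once this is checked, the proof is a one-line appeal to the fact that $r$ respects $L$, completely parallel to Lemma~\ref{lem:rule:ext}.
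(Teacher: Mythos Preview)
Your proposal is correct and follows exactly the paper's approach: show that any $s$-splicing $(w_1,w_2)\splice{s} z$ is already an $r$-splicing $(w_1,w_2)\splice{r} z$, and conclude $z\in L$ because $r$ respects $L$. The paper's proof is in fact even terser than yours, omitting the explicit case analysis you spell out.
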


\begin{proof}
	Let $s$ be any of the rules
	$(xu_1,v_1;u_2,v_2)$, $(u_1,v_1x;u_2,v_2)$,
	$(u_1,v_1;xu_2,v_2)$, $(u_1,v_1;u_2,v_2x)$.
	In order to prove that $s$ respects $L$,
	we have to show that,
	for all $w_1,w_2\in L$ and $z\in\Sig^*$
	such that $(w_1,w_2) \splice s z$, we have $z\in L$, too.
	Indeed, if $(w_1,w_2) \splice s z$, then
	$(w_1,w_2) \splice r z$ and, as $r$ respects $L$,
	we conclude $z\in L$.
\end{proof}

Henceforth, for a rule $r = (u_1,v_1;u_2,v_2)$, we will refer to the rules $(xu_1,v_1;u_2,v_2)$ and  $(u_1,v_1x;u_2,v_2)$ as extensions of the left site of $r$ and to $(u_1,v_1;xu_2,v_2)$ and $(u_1,v_1;u_2,v_2x)$ as extensions of the right site of $r$.

Next, for a language $L$,
let us investigate the syntactic class of a rule $r = (u_1,v_1;u_2,v_2)$.
The {\em syntactic class} (with respect to $L$) of $r$ is the set
of rules $[r]_L = [u_1]_L \times [v_1]_L\times [u_2]_L\times [v_2]_L$ and 
two rules $r$ and $s$ are {\em syntactically congruent} (with respect to $L$),
denoted by $r \sim_L s$, if $s \in [r]_L$.

\begin{lemma}\label{lem:congruence}
	Let $r$ be a rule which respects a language $L$.
	Every rule $s\in[r]_L$ respects $L$.
\end{lemma}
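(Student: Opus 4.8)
The plan is to mirror the argument already used for the Pixton variant in Lemma~\ref{lem:rule:sim}, the key fact being that $\sim_L$ is a monoid congruence and hence compatible with concatenation. First I would fix notation: write $r = (u_1,v_1;u_2,v_2)$ and $s = (\ti u_1,\ti v_1;\ti u_2,\ti v_2)\in[r]_L$, so that $u_i\sim_L\ti u_i$ and $v_i\sim_L\ti v_i$ for $i=1,2$. To prove that $s$ respects $L$, I take arbitrary words $\ti w_1 = x_1\ti u_1\ti v_1 y_1\in L$ and $\ti w_2 = x_2\ti u_2\ti v_2 y_2\in L$ together with the splicing $(\ti w_1,\ti w_2)\splice{s}\ti z$ producing $\ti z = x_1\ti u_1\ti v_2 y_2$, and the goal is to deduce $\ti z\in L$.

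The core step is to transport this situation back to the rule $r$, which is known to respect $L$. Setting $w_1 = x_1 u_1 v_1 y_1$ and $w_2 = x_2 u_2 v_2 y_2$, compatibility of $\sim_L$ with concatenation gives $u_1 v_1\sim_L\ti u_1\ti v_1$ and $u_2 v_2\sim_L\ti u_2\ti v_2$, so $w_1\sim_L\ti w_1$ and $w_2\sim_L\ti w_2$. Since $\ti w_1,\ti w_2\in L$ and membership in $L$ is invariant under $\sim_L$ (take the empty context $x=y=\e$ in the defining equivalence of the syntactic congruence), I obtain $w_1,w_2\in L$. Because $r$ respects $L$, the splicing $(w_1,w_2)\splice{r} x_1 u_1 v_2 y_2 = z$ yields $z\in L$. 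Applying the congruence once more to $u_1 v_2\sim_L\ti u_1\ti v_2$ gives $z\sim_L\ti z$, whence $\ti z\in L$, and $s$ respects $L$.

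Compared with the Pixton case, the one genuinely new feature is that the first component $u_1$ now survives into the splicing result $\ti z$ as well as appearing in $\ti w_1$, whereas in Pixton splicing the bridge and the sites occupy disjoint roles in the words and the outcome. I expect this to be the point deserving the most care, since the substitution $u_1\mapsto\ti u_1$ must be tracked in two different places; but it is not a real obstacle, because in each place the replacement is licensed by a single application of the congruence property, and no machinery beyond the compatibility of $\sim_L$ with concatenation (already used implicitly throughout Section~\ref{sec:pixton}) is required.
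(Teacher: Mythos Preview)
Your proof is correct and follows essentially the same route as the paper's: replace the components of $s$ by those of $r$ via the syntactic congruence to land in $L$, splice with $r$, and transport the result back using $z\sim_L\ti z$. Your additional remark about $u_1$ appearing both in $\ti w_1$ and in $\ti z$ is a fair observation, but as you note it requires nothing beyond the congruence property and the paper does not single it out.
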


\begin{proof}
	Let $r = (u_1,v_1; u_2,v_2)$ and $s = (\ti u_1,\ti v_1; \ti u_2,\ti v_2)$.
	Thus, $u_i\sim_L \ti u_i$ and $v_i\sim\ti v_i$ for $i= 1,2$.
	For $\ti w_1 = x_1\ti u_1\ti v_1y_1\in L$ and
	$\ti w_2 = x_2\ti u_2\ti v_2y_2\in L$
	we have to show that $\ti z = x_1\ti u_1\ti v_2 y_2\in L$.
	For $i = 1,2$, let $w_i = x_iu_iv_iy_i$ and note that $w_i \sim_L \ti w_i$;
	hence, $w_i\in L$.
	Furthermore, $(w_1,w_2)\splice r x_1 u_1v_2y_2 = z \in L$
	as $r$ respects $L$,
	and $\ti z\in L$ as $z\sim_L \ti z$.
\end{proof}

Consider a splicing $(x_1u_1v_1y_1,x_2u_2v_2y_2)\splice r x_1u_1v_2y_2$
which respects a regular language $L$,
as shown in Figure~\ref{fig:shorten} on the left site.
The factors $v_1y_1$ and $x_2u_2$ may be relatively long but they do
not occur as factors in the resulting word $x_1u_1v_2y_2$.
In particular, it is possible that two long words are spliced 
and the outcome is a relatively short word.
Using the Lemmas~\ref{lem:extension} and~\ref{lem:congruence},
we can find shorter words in $L$ and a modified splicing rule
which can be used to obtain $x_1u_1v_2y_2$.

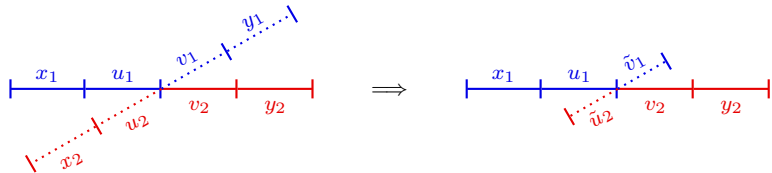
\begin{figure}[ht]
	\center
	\begin{tikzpicture}[text height=1ex,text depth=0ex,
			font={\footnotesize}]
		\begin{scope}[thick]
			\draw [|-|,eins] (0,0) -- node [above] {$x_1$} (1,0);
			\draw [-|,eins] (1,0) -- node [above] {$u_1$} (2,0);
			\draw [-|,zwei] (2,0) -- node [below] {$v_2$} (3,0);
			\draw [-|,zwei] (3,0) -- node [below] {$y_2$} (4,0);
			
			\def\ang{30}

			\draw [-|,dotted,eins] (2,0) -- 
				node [above,sloped] {$v_1$} +(\ang:1);
			\draw [-|,dotted,eins] (2,0)++(\ang:1) --
				node [above,sloped] {$y_1$} ++(\ang:1);
				
			\draw [-|,dotted,zwei] (2,0) -- 
				node [below,sloped] {$u_2$} +(180+\ang:1);
			\draw [-|,dotted,zwei] (2,0)++(180+\ang:1) --
				node [below,sloped] {$x_2$} ++(180+\ang:1);
		\end{scope}

		\node at (5,0) {$\Longrightarrow$};

		\begin{scope}[xshift=6cm,thick]
			\draw [|-|,eins] (0,0) -- node [above] {$x_1$} (1,0);
			\draw [-|,eins] (1,0) -- node [above] {$u_1$} (2,0);
			\draw [-|,zwei] (2,0) -- node [below] {$v_2$} (3,0);
			\draw [-|,zwei] (3,0) -- node [below] {$y_2$} (4,0);
			
			\def\ang{30}

			\draw [-|,dotted,eins] (2,0) -- 
				node [above,sloped] {$\ti v_1$} +(\ang:.75);
				
			\draw [-|,dotted,zwei] (2,0) -- 
				node [below,sloped] {$\ti u_2$} +(180+\ang:.75);
		\end{scope}
	\end{tikzpicture}
	\caption{Replacing $v_1y_1$ and $x_2u_2$ by {\em short} words.}
	\label{fig:shorten}
\end{figure}

\begin{lemma}\label{lem:shorten}
	Let $r = (u_1,v_1;u_2,v_2)$ be a rule which respects a regular language $L$
	and $w_1 = x_1u_1v_1y_1\in L$, $w_2 = x_2u_2v_2y_2\in L$.
	There is a rule $s = (u_1,\ti v_1;\ti u_2,v_2)$ which respects $L$
	and words $\ti w_1 = x_1 u_1\ti v_1\in L$, $\ti w_2 = \ti u_2v_2 y_2\in L$
	such that $\abs{\ti v_1}, \abs{\ti u_2} < \abs{M_L}$.
	More precisely, $\ti v_1\in [v_1y_1]_L$ and $\ti u_2\in [x_2u_2]_L$.
	
	In particular, whenever $(w_1,w_2)\splice r x_1u_1v_2y_2 = z$,
	then there is a splicing $(\ti w_1,\ti w_2)\splice s z$
	which respects $L$ where $\ti w_1$, $\ti w_2$, and $s$ have the properties
	described above.
\end{lemma}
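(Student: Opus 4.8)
The plan is to mirror the argument for the Pixton analog, Lemma~\ref{lem:shorten:sites}, now using the classical-splicing versions of the extension and congruence lemmas (Lemmas~\ref{lem:extension} and~\ref{lem:congruence}) together with Lemma~\ref{lem:smallest:words}. The guiding observation is that in classical splicing the components $u_1$ and $v_2$ are precisely the factors that survive into the result $z = x_1u_1v_2y_2$, so, unlike in the Pixton setting, they must be left untouched; only the ``inner'' factors $v_1y_1$ (trailing the left word) and $x_2u_2$ (leading the right word) may be absorbed and compressed. This is why the target rule has the shape $s = (u_1,\ti v_1;\ti u_2,v_2)$ with $u_1$ and $v_2$ unchanged.

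First I would absorb the surrounding context into the sites. By Lemma~\ref{lem:extension}, extending the left site on its right by $y_1$ and the right site on its left by $x_2$ produces the rule $(u_1, v_1y_1; x_2u_2, v_2)$, which still respects $L$. Next I would shorten: by Lemma~\ref{lem:smallest:words} choose shortest representatives $\ti v_1 \in [v_1y_1]_L$ and $\ti u_2 \in [x_2u_2]_L$, so that $\abs{\ti v_1},\abs{\ti u_2} < \abs{M_L}$. Since $(u_1, \ti v_1; \ti u_2, v_2)$ is syntactically congruent to $(u_1, v_1y_1; x_2u_2, v_2)$, Lemma~\ref{lem:congruence} gives that $s = (u_1,\ti v_1;\ti u_2,v_2)$ respects $L$.

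It then remains to verify membership and the splicing identity. From $\ti v_1 \sim_L v_1y_1$ we get $x_1u_1\ti v_1 \sim_L x_1u_1v_1y_1 = w_1 \in L$, hence $\ti w_1 = x_1u_1\ti v_1 \in L$; symmetrically $\ti w_2 = \ti u_2v_2y_2 \sim_L x_2u_2v_2y_2 = w_2 \in L$. Finally, reading $\ti w_1 = x_1\cdot u_1\cdot\ti v_1\cdot\e$ and $\ti w_2 = \e\cdot\ti u_2\cdot v_2\cdot y_2$, applying $s$ yields $(\ti w_1,\ti w_2)\splice{s} x_1u_1v_2y_2 = z$, as claimed, and this splicing respects $L$ by the above.

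I do not anticipate a genuine obstacle: once the classical extension and congruence lemmas are available, the argument is essentially routine and parallels the Pixton case line for line. The only point demanding care is the bookkeeping of which components may be modified---the asymmetry that $u_1$ and $v_2$ must be preserved because they occur in $z$, while $v_1$ and $u_2$ may swallow their adjacent context and only then be shortened. This is exactly the extra complexity the section flags for classical splicing, though in this particular lemma it manifests merely as keeping two components fixed rather than as any real difficulty.
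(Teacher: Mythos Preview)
Your proposal is correct and follows essentially the same approach as the paper: extend the rule to $(u_1,v_1y_1;x_2u_2,v_2)$ via Lemma~\ref{lem:extension}, pick shortest representatives $\ti v_1\in[v_1y_1]_L$ and $\ti u_2\in[x_2u_2]_L$ via Lemma~\ref{lem:smallest:words}, and conclude that $s$ respects $L$ via Lemma~\ref{lem:congruence}. Your write-up is in fact slightly more explicit than the paper's, spelling out the membership $\ti w_1,\ti w_2\in L$ and the splicing $(\ti w_1,\ti w_2)\splice{s} z$ in detail.
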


\begin{proof}
	By Lemma~\ref{lem:extension}, the rule
	$(u_1,v_1y_1;x_2u_2,v_2)$ respects $L$.
	Choose $\ti v_1\in [v_1y_1]_L$ and $\ti u_2\in [x_2u_2]_L$
	as shortest words from the sets, respectively.
	By Lemma~\ref{lem:smallest:words},
	$\abs{\ti u_1}, \abs{\ti u_2} < \abs{M_L}$ and
	$\ti w_1 = x_1u_1\ti v_1\in L$, $\ti w_2 = \ti u_2v_2y_2\in L$.
	Furthermore, by Lemma~\ref{lem:congruence},
	$s = (u_1,\ti v_1;\ti u_2,v_2)$ respects $L$.
\end{proof}

%%%%%%%%%%%%%%%%%%%%%%%%%%%%%%%%%%%%%%%%%%%%%%%%%%%%%%%%%%%%%%%%%%%%%%%%%%%%%%%%
\subsection{Series of Splicings}\label{sec:series:four}

Consider the creation of words by a series of splicings.
Let us begin with a simple observation.
In the case when a word is created by two (or more) successive splicings, but none of the splicing sites overlaps the position of the other splicing, the order of these splicings is irrelevant.
Recall that the splicing position of a splicing $(w_1,w_2)\splice r z$ with
$r = (u_1,v_1;u_2,v_2)$ is the position between the factors $u_1$ and $v_2$ in $z$.
The notation in Remark~\ref{rem:swap} is the same as in the Figure~\ref{fig:swap}.

\begin{figure}[ht]
	\center
	\begin{tikzpicture}[text height=1ex,text depth=0ex,
			font={\footnotesize}]
			
		\def\ang{30}
		\begin{scope}[thick]
			\draw [|-|,eins] (0,0) -- node [above] {$x_1$} (1,0);
			\draw [-|,eins] (1,0) -- node [above] {$u_1$} (2,0);
			\draw [-|,dotted,eins] (2,0) -- 
				node [above,sloped] {$v_1$} +(\ang:1);
			\draw [-|,dotted,eins] (2,0)++(\ang:1) --
				node [above,sloped] {$y_1$} ++(\ang:1);

			\draw [-|,dotted,zwei] (2,0) -- 
				node [below,sloped] {$u_2$} +(180+\ang:1);
			\draw [-|,dotted,zwei] (2,0)++(180+\ang:1) --
				node [below,sloped] {$x_2$} ++(180+\ang:1);
			\draw [-|,zwei] (2,0) --
				node [fill=white,inner sep=1pt] {$z_2$} (5,0);
			\draw [-|,dotted,zwei] (5,0) -- 
				node [above,sloped] {$v_2'$} +(\ang:1);
			\draw [-|,dotted,zwei] (5,0)++(\ang:1) --
				node [above,sloped] {$y_2$} ++(\ang:1);

			\draw [-|,drei] (5,0) -- node [below] {$v_3$} (6,0);
			\draw [-|,drei] (6,0) -- node [below] {$y_3$} (7,0);
			\draw [-|,dotted,drei] (5,0) -- 
				node [below,sloped] {$u_3$} +(180+\ang:1);
			\draw [-|,dotted,drei] (5,0)++(180+\ang:1) --
				node [below,sloped] {$x_3$} ++(180+\ang:1);
		\end{scope}

		\draw [decorate,decoration=brace,zwei] (3,.15) to 
			node [above] {$u_2'$} (5,.15);
		\draw [decorate,decoration=brace,zwei] (4,-.15) to 
			node [below] {$v_2$} (2,-.15);
	\end{tikzpicture}
	\caption{The word $x_1u_1z_2v_3y_3$ can be created either by using the
		right splicing first or by using the left splicing first.}
	\label{fig:swap}
\end{figure}
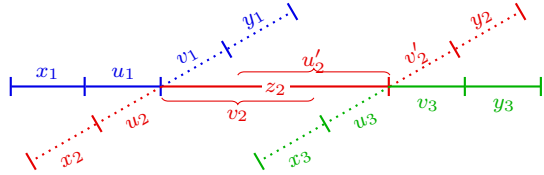

\begin{remark}\label{rem:swap}
	Let $w_1 = x_1u_1v_1y_1$, $w_2 = x_2u_2z_2 v_2' y_2$,
	where $v_2$ is a prefix of $z_2$ and $u_2'$ is a suffix of $z_2$,
	$w_3 = x_3u_3v_3y_3$ be words and
	$r_1 = (u_1,v_1;u_2,v_2)$, $r_2 = (u_2',v_2';u_3,v_3)$ be rules.
	In order to create the word
	$z = x_1u_1z_2 v_3 y_3$ by splicing,
	we may use splicings
	\begin{align*}
		(w_1,w_2)&\splice{r_1}  x_1u_1z_2 v_2' y_2 = z', &
		(z',w_3)&\splice{r_2} z \qquad\text{or} \\
		(w_2,w_3)&\splice{r_2}  x_2u_2z_2 v_3 y_3 = z'', &
		(w_1,z'')&\splice{r_1} z.
	\end{align*}
\end{remark}

Now, consider a word $z$ which is created by two successive splicings
from words $w_i = x_iu_iv_iy_i$ for $i = 1,2,3$ as in Figure~\ref{fig:combine:rules}.
If no factor of $w_1$ is a part of $z$,
then we can find another splicing rule $s$ such that
$(w_3,w_2)\splice s z$.
This replacement will become crucial in the proof of Lemma~\ref{lem:creation}.

\begin{figure}[ht]
	\center
	\begin{tikzpicture}[text height=1ex,text depth=0ex,
			font={\footnotesize}]
		\def\ang{30}

		\begin{scope}[thick]
			\draw [|-|,eins] (0,0) -- node [above] {$x_1$} (.75,0);
			\draw [-|,eins] (.75,0) -- node [above] {$u_1$} (1.5,0);
			\draw [-|,dotted,eins] (1.5,0) -- 
				node [above,sloped] {$v_1$} +(\ang:.75);
			\draw [-|,dotted,eins] (1.5,0)++(\ang:.75) --
				node [above,sloped] {$y_1$} ++(\ang:.75);
			
			\draw [-|,zwei] (1.5,0) -- node [below] {$v_2$} (2.25,0);
			\draw [-|,zwei] (2.25,0) -- node [below] {$y_2$} (3,0);
			\draw [-|,dotted,zwei] (1.5,0) -- 
				node [below,sloped] {$u_2$} +(180+\ang:.75);
			\draw [-|,dotted,zwei] (1.5,0)++(180+\ang:.75) --
				node [below,sloped] {$x_2$} ++(180+\ang:.75);
		\end{scope}

		\node at (3.5,-.25) {$+$};

		\begin{scope}[xshift=4.25cm,thick]
			\draw [|-|,drei] (0,0) -- node [above] {$x_3$} (.75,0);
			\draw [-|,drei] (.75,0) -- node [above] {$u_3$} (1.5,0);
			\draw [-|,dotted,drei] (1.5,0) -- 
				node [above,sloped] {$v_3$} +(\ang:.75);
			\draw [-|,dotted,drei] (1.5,0)++(\ang:.75) --
				node [above,sloped] {$y_3$} ++(\ang:.75);
			
			\draw [-|,zwei] (1.5,0) -- node [below,pos=.25] {$v_2$} (2,0);
			\draw [-|,zwei] (2,0) -- node [below] {$y_2$} (2.75,0);
			\draw [-|,dotted,zwei] (1.5,0) -- +(180+\ang:.25);
			\draw [-|,dotted,eins] (1.5,0)++(180+\ang:.25) --
				node [below,sloped] {$u_1$} ++(180+\ang:.75);
			\draw [-|,dotted,eins] (1.5,0)++(180+\ang:1) --
				node [below,sloped] {$x_1$} ++(180+\ang:.75);
		\end{scope}
		\begin{scope}[xshift=4.25cm]
			\draw [decorate,decoration={brace,mirror}] (1.5,-.4) to 
				node [below] {$v_4y_4$} (2.75,-.4);
		\end{scope}

		\node at (8,-.25) {$\Longrightarrow$};

		\begin{scope}[xshift=9.25cm,thick]
			\draw [|-|,drei] (0,0) -- node [above] {$x_3$} (.75,0);
			\draw [-|,drei] (.75,0) -- node [above] {$u_3$} (1.5,0);
			\draw [-|,dotted,drei] (1.5,0) -- 
				node [above,sloped] {$v_3$} +(\ang:.75);
			\draw [-|,dotted,drei] (1.5,0)++(\ang:.75) --
				node [above,sloped] {$y_3$} ++(\ang:.75);
			
			\draw [-|,zwei] (1.5,0) -- node [below,pos=.25] {$v_2$} (2,0);
			\draw [-|,zwei] (2,0) -- node [below] {$y_2$} (2.75,0);
			\draw [-|,dotted,zwei] (1.5,0) -- +(180+\ang:.25);
			\draw [-|,dotted,zwei] (1.5,0)++(180+\ang:.25) --
				node [below,sloped] {$u_2$} ++(180+\ang:.75);
			\draw [-|,dotted,zwei] (1.5,0)++(180+\ang:1) --
				node [below,sloped] {$x_2$} ++(180+\ang:.75);
		\end{scope}
	\end{tikzpicture}
	\caption{If no part of $x_1u_1v_1y_1$ is a factor of the splicing result,
		then the two splicings can be reduced to one splicing.}
	\label{fig:combine:rules}
\end{figure}
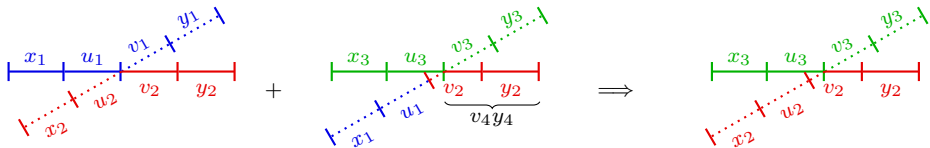

\begin{lemma}\label{lem:combine:rules}
	Let $L$ be a language,
	$w_i = x_i u_iv_i y_i \in L$ for $i = 1,2,3$,
	and $r_1 = (u_1,v_1;u_2,v_2)$, $r_2 = (u_3,v_3;u_4,v_4)$ be rules
	respecting $L$.
	If there are splicings
	\begin{align*}
		(w_1,w_2)&\splice{r_1} x_1u_1v_2y_2 = w_4 = x_4u_4v_4y_4, &
		(w_3,w_4)&\splice{r_2} x_3u_3v_4y_4 = z
	\end{align*}
	where $v_4y_4$ is a suffix of $v_2y_2$,
	then there is a rule $s = (u_3,v_3;u_2\del,\ti v_4)$
	which respects $L$ and $(w_3,w_2)\splice{s} z$.
	Furthermore, $\ti v_4 = v_4$ or $\ti v_4 \eLL v_2$.
\end{lemma}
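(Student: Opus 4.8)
The plan is to follow the template of Lemma~\ref{lem:combine}: align the two splicings inside $w_4$ by extending the right bridge $v_2$ of $r_1$, read off $s$ from this alignment, and then prove that $s$ respects $L$ by reconstructing an arbitrary $s$-splicing as an $r_1$-splicing followed by an $r_2$-splicing.

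First I would fix coordinates in $w_4 = x_1u_1v_2y_2 = x_4u_4v_4y_4$. Let $a = \abs{x_1u_1}$ be the position where $v_2$ begins, $d = \abs{x_1u_1v_2}$ where it ends, $c = \abs{x_4u_4}$ the splicing position of the second splicing, and $e = c+\abs{v_4}$ the position where $v_4$ ends. Since $v_4y_4$ is a suffix of $v_2y_2$, the second cut lies at the right of the start of $v_2$, i.e.\ $c \ge a$. I then set $\del = w_4\pos{a;c}$ and $\ti v_4 = w_4\pos{c;\max\{d,e\}}$, and take $s = (u_3,v_3;u_2\del,\ti v_4)$. Note that $v_2^{+} := \del\ti v_4 = w_4\pos{a;\max\{d,e\}}$ is $v_2$ possibly extended to the right by a prefix of $y_2$, that $v_4 = w_4\pos{c;e}$ is a prefix of $\ti v_4$, and that $u_2\del\ti v_4 = u_2v_2^{+}$ is a factor of $w_2$ occurring right after the $u_2$ in $w_2$. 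The disjunction in the claim is exactly the case distinction on $\max\{d,e\}$: if $e > d$ then $\ti v_4 = v_4$, whereas if $e \le d$ then $\ti v_4 = w_4\pos{c;d}$ is a suffix of $v_2$ and hence $\ti v_4 \eLL v_2$. Since the suffix of $w_2$ to the right of the cut at $c$ equals $v_4y_4$, we get $(w_3,w_2)\splice{s} x_3u_3v_4y_4 = z$, as required.

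To show that $s$ respects $L$, I would take arbitrary $w_3' = x_3'u_3v_3y_3' \in L$ and $w_2' = x_2'u_2\del\ti v_4 y_2' \in L$ and produce $x_3'u_3\ti v_4 y_2'$ by two splicings. By Lemma~\ref{lem:extension}, the extension $r_1^{+} = (u_1,v_1;u_2,v_2^{+})$ of the right site of $r_1$ respects $L$; since $w_2'$ contains $u_2v_2^{+}$, we obtain $(w_1,w_2')\splice{r_1^{+}} x_1u_1v_2^{+}y_2' = x_1u_1\del\ti v_4 y_2' =: p$. The key observation is that $p$ agrees with $w_4$ on the prefix up to position $c$ (both equal $x_1u_1\del$), so the left part $u_4$ of the right site of $r_2$ still ends at $c$ in $p$; and because $v_4$ is a prefix of $\ti v_4$, the factor $u_4v_4$ occurs in $p$ with its cut at $c$. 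Hence $r_2$ applies and $(w_3',p)\splice{r_2} x_3'u_3\ti v_4 y_2' \in L$. Thus every $s$-splicing of words in $L$ stays in $L$, i.e.\ $s$ respects $L$.

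The main obstacle is choosing $\ti v_4$ correctly. The naive choice $\ti v_4 = v_4$ breaks the reconstruction when $v_4$ is buried strictly inside $v_2$ (the case $e < d$): then $w_2'$ would only be guaranteed to contain $u_2\del v_4$, not the full right site $u_2v_2$ that $r_1^{+}=r_1$ needs, so the first reconstruction step could not fire. Extending the bridge so that $\ti v_4$ reaches $\max\{d,e\}$ repairs this, since it forces $w_2'$ to contain all of $v_2^{+}\supseteq v_2$, at the mild cost that $\ti v_4$ becomes a suffix of $v_2$ rather than $v_4$ itself --- which is precisely the second alternative in the statement. A secondary point to verify is that $u_4$ may begin to the left of position $a$; this is harmless, because only the cut position $c$ (the right end of $u_4$) is relevant for the reconstruction, and that position is preserved in $p$.
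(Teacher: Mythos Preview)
Your argument is correct and is essentially the paper's proof in coordinate form. The paper first invokes Lemma~\ref{lem:extension} to make the factors $u_1v_2$ and $u_4v_4$ literally match in $w_4$ (extending $u_1,u_4$ to the left and $v_2,v_4$ to the right), whereas you skip the left-side extensions and work directly with the cut positions $a$ and $c$; but since the left extensions of $u_1,u_4$ never enter the rule $s$ anyway, the resulting $s=(u_3,v_3;u_2\del,\ti v_4)$ and the two-step reconstruction via $r_1^{+}$ then $r_2$ are identical in both proofs.
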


\begin{proof}
Extension (Lemma~\ref{lem:extension}) justifies the assumption that the factors $u_1v_2$ and $u_4v_4$ match in $w_4$:
let $w_4\pos{i,j} = u_1v_2$ and $w_4\pos{i',j'} = u_4v_4$,
\begin{itemize}
	\item if $i < i'$ we extend $u_4$ in $r_2$ to the left by $i'-i$ letters,
	\item if $i > i'$ we extend $u_1$ in $r_1$ to the left by $i-i'$ letters,
	\item if $j < j'$ we extend $v_2$ in $r_1$ to the right by $j'-j$ letters, and
	\item if $j > j'$ we extend $v_4$ in $r_2$ to the right by $j-j'$ letters.
\end{itemize}
Clearly, the extended factors $u_1v_2$ and $u_4v_4$ match in $w_4$.
As $v_4y_4$ was a suffix of $v_2y_2$ before extension, now, $v_4$ is a suffix of $v_2$ and $y_2 = y_4$.
Additionally, either $v_4$ was not extended or $v_4 \eLL v_2$ and $v_2$ was not extended.
Let $\del$ such that $\del v_4 = v_2$, let $s = (u_3,v_3;u_2\del,v_4)$, and 	observe that $(w_3,w_2)\splice{s} z$.
	
Next, let us prove that $s$ respects $L$.
Let $w_3' = x_3' u_3v_3 y_3'\in L$ and 
$w_2' = x_2' u_2\del v_4 y_2' = x_2' u_2 v_2 y_2'\in L$.
There are splicings
\begin{align*}
	(w_1, w_2')&\splice{r_1} x_1u_1v_2 y_2' = w_4'
		= x_1u_4v_4 y_2', &
	(w_3', w_4')&\splice{r_2} x_3'u_3v_4y_2' = z'
\end{align*}
and $z'\in L$, concluding that $s$ respects $L$.
\end{proof}

Consider a splicing system $(J,S)$ and its generated language $L = L(J,S)$.
Let $n$ be the length of the longest word in $J$ and
let $\mu$ be the length-lexicographically largest
word that is a component of a rule in $S$.
Define $W_\mu = \set{w\in\Sig^*}{w\eLL \mu}$ as the set of words 
which are at most as large as $\mu$, in length-lexicographic order.
Furthermore, let $I = \Sig^{\le n}\cap L$ be a set of axioms and let
\[
	R = \set{r\in W_\mu^4}{r\text{ respects }L}
\]
be a set of rules.
It is not difficult to see that $J \sse I$, $S\sse R$, and $L = L(I,R)$.
Whenever convenient, we will assume that a splicing language $L$
is generated by a splicing system which is of the form of $(I,R)$.

Now, let us consider a word $xzy\in L$ where the length of the middle factor $z$ is at least $\abs \mu$.
The creation of $xzy$ by splicing in $(I,R)$ can be traced back to a word $x_1zy_1 = z_1$ where either $z_1\in I$ or where $z_1$ is created by a splicing that affects the factor $z$, \ie the splicing position lies in the factor $z$.
The next lemma describes this creation of $xzy = z_{k+1}$ by $k$ splicings in $(I,R)$, and shows that we can choose the rules and words which are used to create $z_{k+1}$ from $z_1$ such that the words are not significantly longer than $\ell = \max\oneset{\abs x,\abs y}$ and such that the rules satisfy certain length restrictions.

\begin{lemma}\label{lem:creation}
	Let $L$ be a splicing language, let $\ell,n\in\N$, let $m = \abs{M_L}$,
	and let $\mu$ be a word with $\abs\mu \ge \ell+2m$
	such that for $I = \Sig^{\le n}\cap L$ and
	$R = \set{r\in W_\mu^4}{r\text{ respects }L}$
	we have $L = L(I,R)$.

	Let $z_{k+1} = x_{k+1}zy_{k+1}$ with $\abs z \ge \abs\mu$ and
	$\abs{x_{k+1}},\abs{y_{k+1}}\le\ell$
	be a word that is created by $k$ splicings from a word $z_1 = x_1 z y_1$
	where either $z_1\in I$ or $z_1$ is created
	by a splicing $(w_0,w_0')\splice{s} z_1$ 
	where $w_0, w_0'\in L$, $s$ respects $L$,
	and the splicing position lies in the factor $z$.
	Furthermore, for $i = 1,\ldots,k$ the intermediate splicings are either
	\begin{enumerate}[(i)]
		\item $(w_i,z_i)\splice{r_i} x_{i+1}zy_{i+1}=z_{i+1}$,
			$w_i\in L$, $r_i\in R$, $y_{i+1} = y_i$,
			and the splicing position lies at the left of the factor $z$ or
		\item $(z_i,w_i)\splice{r_i} x_{i+1}zy_{i+1}=z_{i+1}$,
			$w_i\in L$, $r_i\in R$, $x_{i+1} = x_i$,
			and the splicing position lies at the right of the factor $z$.
	\end{enumerate}
	There are rules and words creating $z_{k+1}$, as above,
	satisfying in addition:
	\begin{enumerate}
		\item There is $k' \le k$ such that for $i = 1,\ldots,k'$
			all splicings are of the form \I and
			for $i = k'+1,\ldots,k$ all splicings are of the form \II.
		\item 
			For $i = 1,\ldots,k'$ the following bounds apply:
			$\abs{x_i} < \ell+2m$, $\abs{w_i} < \ell+2m$,
			$r_i \in \Sig^{<\ell+m}\times\Sig^{<2m}\times\Sig^{<2m}\times W_\mu$,
			and $x_{k'+1} = x_{k'+2} = \cdots = x_{k+1}$.
		\item
			For $i = k'+1,\ldots,k$ the following bounds apply:
			$\abs{y_i} < \ell+2m$, $\abs{w_i} < \ell+2m$,
			$r_i \in W_\mu\times\Sig^{<2m}\times\Sig^{<2m}\times 
			\Sig^{<\ell+m}$,
			and $y_{1} = y_{2} = \cdots = y_{k'+1}$.
	\end{enumerate}
	In particular, if $n \ge\ell+2m$, then
	$w_1,\ldots,w_k\in I$.
\end{lemma}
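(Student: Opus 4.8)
The proof runs parallel to that of Lemma~\ref{lem:series}, with the classical Remark~\ref{rem:swap} and Lemmas~\ref{lem:combine:rules}, \ref{lem:extension}, \ref{lem:congruence}, \ref{lem:shorten} playing the roles of their Pixton counterparts. Statement~1 follows from Remark~\ref{rem:swap}: a splicing of the form~\I has its splicing position at the left of $z$ and one of the form~\II at the right of $z$, so whenever a form-\II step is immediately followed by a form-\I step, the two positions are separated by $z$ and the steps may be swapped without changing $z_{k+1}$. Bubbling every form-\I step to the front produces the index $k'$. Since form-\I steps do not alter the suffix and form-\II steps do not alter the prefix, we obtain $y_1 = \cdots = y_{k'+1}$ and $x_{k'+1} = \cdots = x_{k+1}$. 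Classical splicing is symmetric under reversal of all words, which interchanges the two forms and turns the bounds claimed for the form-\I steps into those claimed for the form-\II steps; it therefore suffices to establish statement~2, and statement~3 follows by the mirrored argument.

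For $i = 1,\ldots,k'$ each splicing $(w_i,z_i)\splice{r_i} z_{i+1}$ rebuilds the prefix of $z_{i+1}$ to the left of $z$ out of the external word $w_i$, while the suffix of $z_i$ carrying $z$ is retained. First I would reduce the number of these steps using Lemma~\ref{lem:combine:rules}: whenever the prefix contributed by some $w_i$ is entirely overwritten by a later form-\I step, the two steps collapse into a single one starting from the earlier word. Applying this exhaustively, I may assume the form-\I splicings are nested, each occurring strictly to the left of its predecessor, so that the genuinely new prefix material added across the steps accumulates to exactly $x_{k'+1} = x_{k+1}$, which has length at most $\ell$. It is essential here that Lemma~\ref{lem:combine:rules} returns a rule whose surviving right bridge $\ti v_4$ satisfies $\ti v_4 = v_4$ or $\ti v_4 \eLL v_2$; since both $v_2$ and $v_4$ start out in $W_\mu$, the fourth component of every rule stays in $W_\mu$ throughout the reduction.

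With the nesting in place I would write $r_i = (u_1,v_1;u_2,v_2)$, $w_i = p\,u_1v_1\,q$, and $z_i = a\,u_2v_2\,b$, and normalise each rule as follows. By Lemma~\ref{lem:extension} extend $u_1$ to the left until it absorbs all of $p$, so that $u_1$ becomes the entire surviving prefix contribution of $w_i$. By Lemma~\ref{lem:shorten} replace the discarded suffix $v_1q$ of $w_i$ by a shortest $\ti v_1\in[v_1q]_L$ and the discarded prefix $au_2$ of $z_i$ by a shortest $\ti u_2\in[au_2]_L$; by Lemma~\ref{lem:smallest:words} both have length $< m$, and by Lemma~\ref{lem:congruence} the resulting rule still respects $L$. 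The fourth component $v_2\in W_\mu$ is left untouched. This already yields $v_1,u_2\in\Sig^{<m}$, $v_2\in W_\mu$, and $w_i = u_1\ti v_1$. To bound $u_1$, I observe that it splits into the genuinely new prefix material (whose contributions over all steps sum to $|x_{k+1}|\le\ell$) followed by the portion overlapping the right site of the neighbouring step; replacing that site by its shortest representative (Lemmas~\ref{lem:smallest:words} and~\ref{lem:congruence}) caps the overlap at length $< m$, so that $|u_1| < \ell+m$. Hence $|w_i| = |u_1|+|\ti v_1| < \ell+2m$ and $|x_i| < \ell+2m$. The boundary rule $r_1$, whose word $z_1$ is produced by the first splicing $(w_0,w_0')\splice{s} z_1$, needs the same extra care as in Lemma~\ref{lem:series}: if a site shortened here overlaps the material contributed by $s$, I split that site in two and replace each half by a shortest representative, and it is exactly this split that produces the bound $< 2m$ rather than $< m$ in statement~2. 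Finally, if $n\ge\ell+2m$, then $|w_i| < \ell+2m\le n$ forces $w_i\in\Sig^{\le n}\cap L = I$.

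The principal obstacle, and the real departure from the Pixton setting, is the bookkeeping for the two rule components $u_1$ and $v_2$ that survive into the splicing result. In Pixton splicing neither site appears in the result, so an external word can be swallowed wholesale into a site and replaced by one short representative (Lemma~\ref{lem:extend:bridge}); here the left material $u_1$ of $w_i$ and the right material $v_2$ of $z_i$ are frozen into $z_{i+1}$, so $u_1$ can be shortened only to length $< \ell+m$ and $v_2$ only to membership in $W_\mu$. Bounding $u_1$ thus rests entirely on the geometric nesting of the form-\I splicings together with the isolation of a short syntactic overlap, while $v_2$ must be tracked through every combine step to guarantee it never escapes $W_\mu$. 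Making these two interlocking estimates precise, while simultaneously handling the boundary rule $r_1$, is where essentially all of the work lies.
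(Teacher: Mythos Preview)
Your proposal is correct and follows essentially the same approach as the paper's proof: reorder via Remark~\ref{rem:swap} (using $\abs z\ge\abs\mu$ so the sites on opposite sides of $z$ cannot overlap), nest the form-\I steps via Lemma~\ref{lem:combine:rules} while tracking that the surviving fourth component stays in $W_\mu$, extend each $w_i$ into its first component and shorten the discarded pieces to length $<m$ via Lemmas~\ref{lem:extension}, \ref{lem:congruence}, \ref{lem:smallest:words}, and handle the boundary rule $r_1$ by splitting its third component at the overlap with $s$ to get the $<2m$ bound. The only point to tighten is the bookkeeping when you replace the discarded prefix of $z_i$: this prefix is simultaneously the leading part of $u_{i-1}$ (in the paper's notation $u_{i-1}=u_i'\del_i$), so the replacement must be made in both places at once to keep the chain of splicings consistent---the paper does exactly this, and your phrasing ``replacing that site by its shortest representative'' should be read in that coupled sense.
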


\begin{proof}	
	The first statement follows immediately by Remark~\ref{rem:swap}
	and the fact that $\abs{z}\ge \abs\mu$.
	The first statement also implies
	implies $x_{k'+1} = x_{k'+2} = \cdots = x_{k+1}$
	and $y_1 = y_2 = \cdots = y_{k'+1}$.
	Note that if $k' = 0$ (or $k' = k$), then statement~2
	(\resp statement~3) is trivially true.
	
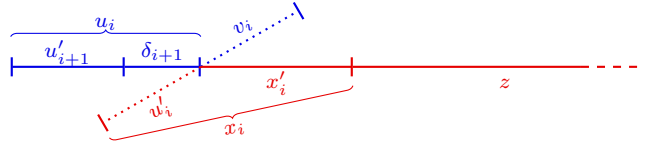
\begin{figure}[ht]
	\centering
	\begin{tikzpicture}[text height=1ex,text depth=0ex,
			font={\footnotesize}]
		\begin{scope}[thick]
			\def\ang{30}

			\draw [|-|,eins] (-.5,0) -- node [above] {$u_{i+1}'$} (1,0);
			\draw [-|,eins] (1,0) -- node [above] {$\del_{i+1}$} (2,0);
			\draw [-|,dotted,eins] (2,0) -- 
				node [above,sloped] {$v_i$} +(\ang:1.5);
				
			\draw [-|,zwei] (2,0) -- node [below] {$x_i'$} (4,0);
			\draw [-,zwei] (4,0) -- 
				node [below,pos=.6666666] {$z$} (7,0);
			\draw [-,zwei,dashed] (7,0) -- (8,0);
			\draw [-|,dotted,zwei] (2,0) -- 
				node [below,sloped] {$u_i'$} +(180+\ang:1.5);
		\end{scope}

		\draw [decorate,decoration=brace,zwei] (4,-.25) to 
			node [below,sloped] {$x_i$} (.8,-.95);
%		\draw [decorate,decoration=brace,zwei] (2,.15) to 
%			node [above] {$v_i'$} (5.5,.15);
		\draw [decorate,decoration=brace,eins] (-.5,.4) to 
			node [above] {$u_i$} (2,.4);
	\end{tikzpicture}
	\caption{The $i$-th splicing for $i\le k'$ in the proof of Lemma~\ref{lem:creation} where $x_{i+1} = u_i x_i'$ and $v_i'$ is a prefix of $ x_i'z$.}
	\label{fig:series:proof:four}
\end{figure}

	The notation we employ in order to prove statement~2
	is chosen such that it matches with Figure~\ref{fig:series:proof:four}.
	For $i = 1,\ldots,k'$, let $r_i = (u_i,v_i;u_i',v_i')$.
	By extension (Lemma~\ref{lem:extension}),
	we may assume that $w_i = u_iv_i$
	and $x_i = u_i' x_i'$ such that
	$x_{i+1} = u_i x_i'$ and $ v_i'$ is a prefix of $x_i'z$.
	Let $ x_{k'+1}' = x_{k'+1} = x_{k+1}$ and $u_{k'+1}' = \e$.
	By Lemma~\ref{lem:combine:rules},
	we may assume that every splicing position
	lies at the left of the previous splicing position,
	\ie $x_i'$ is a proper suffix of $x_{i+1}'$ and
	$\abs{x_i'} \le \ell$ as $\abs{x_{k'+1}'} \le \ell$.
	Due to the modifications we made, we may have lost control of the lengths of 
	$u_i$, $v_i$, and $u_i'$; but $v_i'$ still belongs to $W_\mu$ and
	$r_i$ respects $L$.
	Let $\del_{i+1}$ such that $x_{i+1}' = \del_{i+1}x_i'$;
	hence, $u_i = u_{i+1}' \del_{i+1}$.
	The factor $\del_{i+1}$ is the the part of $x_{k+1}$ which is added by the $i$-th
	splicing and is not modified afterwards;
	$x_{k+1} = \del_{k'+1}\cdots\del_2x_1'$.
	Now, for $i = 2,\ldots, k'$,
	we replace $u_i'$ by a shortest word from $[u_i']_L$.
	(We also replace this prefix of $x_i$ and $u_{i-1}$.)
	Furthermore, we replace $v_i$ by a shortest word from $[v_i]_L$
	for $i = 1,\ldots,k'$.
	By Lemma~\ref{lem:smallest:words}, we have $\abs{u_i'}, \abs{v_i} < m$.
	We do not replace $u_1'$ yet, as this might affect 
	the word $w_0$ and the rule $s$
	in the splicing $(w_0,w_0')\splice s x_1zy_1$.
	
	Observe that the words $z_i$, $w_i$, and the rules $r_i$ 
	can still be used to create $z_{k+1}$ by splicing,
	in the way described in the claim.
	For $i=2,\ldots,k'$, we have $\abs{x_i} = \abs{u_i'x_i'} <\ell+m$,
	$\abs{w_i} \le \abs{x_{i+1}} +\abs{v_i} < \ell + 2m$, and
	$r_i \in \Sig^{<\ell+m}\times\Sig^{<m}\times\Sig^{<m}\times W_\mu$.
	We also have $\abs{w_1} < \ell+2m$ and
	$r_1 \in \Sig^{<\ell+m}\times\Sig^{<m}\times\Sig^*\times W_\mu$.	
	Note that, except for the length of $x_1$,
	and the third component of $r_1$,
	we have proven statement~2 (of the lemma) and
	we actually have proven a stronger bound than claimed.
	Symmetrically, we can consider statement~3 to be proven
	except for $y_1 = y_{k'+1}$
	and the second component of $r_{k'+1}$.

	Let $x_1 = u_1'x_1'$ as above and, symmetrically,
	let $y_1 = y_{k'+1}'v_{k'+1}'$ where $v_{k'+1}'$
	is the second component of $r_{k'+1}$.
	If $k' = 0$ (or $k'=k$), then $u_1'$ (\resp $v_{k'+1}'$)
	can be considered empty and $x_1' = x_{k+1}$
	(\resp $y_{k'+1}' = y_{k+1}$).
	If $z_1\in I$, we replace $u_1'$ and $v_{k'+1}'$ by
	shortest words from their syntactic classes, respectively,
	and the claim holds by Lemma~\ref{lem:smallest:words}.
	Otherwise, $(w_0,w_0') \splice s z_1$ where
	$u_1'$ is a prefix of $w_0$ and $v_{k'+1}'$ is a suffix of $w_0'$.
	
	Let $s = (u_0,v_0;u_0',v_0')$ and
	consider the overlap of the factor $u_0$ in the splicing
	$(w_0,w_0') \splice s z_1$ with the prefix $u_1'$ of $w_0$.
	In case when $u_0$ does not overlap with $u_1'$,
	replace $u_1'$ by a shortest word from its syntactic class.
	If $u_0$ and $u_1'$ overlap,
	let $u_1' = \del_1\del_2$ such that $\del_2$ is the overlap
	and replace $\del_1$ and $\del_2$ by shortest words from 
	their syntactic classes, respectively.
	Note that if we modified $u_1$, it got shorter;
	hence, $s$ still belongs to $R$.
	In any case, $\abs{u_1'} < 2m$, $\abs{x_1} < \ell+2m$
	(Lemma~\ref{lem:smallest:words}),
	and $r_1 \in \Sig^{<\ell+m}\times\Sig^{<m}\times\Sig^{<2m}\times W_\mu$;
	thus, the second statement.
	
	We may treat $v_{k'+1}'$ and $r_{k'+1}$ symmetrically
	in order to prove statement~3.
\end{proof}

%%%%%%%%%%%%%%%%%%%%%%%%%%%%%%%%%%%%%%%%%%%%%%%%%%%%%%%%%%%%%%%%%%%%%%%%%%%%%%%%
%%%%%%%%%%%%%%%%%%%%%%%%%%%%%%%%%%%%%%%%%%%%%%%%%%%%%%%%%%%%%%%%%%%%%%%%%%%%%%%%
\subsection{Proof of Theorem~\ref{thm:main:four}}\label{sec:proof:four}

Let $L$ be a splicing language and $m = \abs{M_L}$.
Throughout this section, by $\sim$ we denote the equivalence
relation $\sim_L$
and by $[\,\cdot\,]$ we denote the corresponding equivalence classes
$[\,\cdot\,]_L$.

Recall that Theorem~\ref{thm:main:four} claims that the
splicing system $(I,R)$ with $I = \Sig^{<m^2+ 6m}\cap L$ and
\[
	R = \set{r\in\Sig^{< m^2+10m}\times\Sig^{< 2m}\times
	\Sig^{< 2m}\times\Sig^{< m^2+10m}}
	{r\text{ respects }L}
\]
generates $L$.
The proof is divided in two parts.
In the first part, Lemma~\ref{lem:long:rules}, we prove that the set
of rules can be chosen as $\set{r\in (\Sig^{<m^2+10m})^4}{r\text{ respects }L}$
for some finite set of axioms.
The second part concludes the proof of Theorem~\ref{thm:main:four},
by employing the length bound $2m$ for the second and third component of rules
and by proving that the set of axioms can be chosen as
$I= \Sig^{<m^2+ 6m}\cap L$.

\begin{lemma}\label{lem:long:rules}
	Let $L$ and $m$ as above.
	There exists $n\in \N$ such that
	the splicing system $(I,R)$ with
	$I = \Sig^{\le n}\cap L$ and
	\[
		R = \set{r\in (\Sig^{<m^2+10m})^4}{r\text{ respects }L}
	\]
	generates the same language $L = L(I,R)$.
\end{lemma}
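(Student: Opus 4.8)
The inclusion $L(I,R)\sse L$ is immediate because $I\sse L$ and every rule of $R$ respects $L$, so the plan is to prove $L\sse L(I,R)$ by the minimal-bound method used for Theorem~\ref{thm:main}, carried out now over all four components of a quadruplet. Starting from $L=L(J,S)$ and its standard form, I fix a large $n$ (at least the longest axiom of $J$ and at least $m^2+6m$) and, for a word $\mu$, put
\[
	R_\mu=\set{r\in W_\mu^4}{r\text{ respects }L}, \qquad L_\mu=L(\Sig^{\le n}\cap L,\,R_\mu).
\]
Then $L_\mu=L$ whenever $\mu$ length-lexicographically dominates every component of $S$, and $R_\mu\sse R_v$ for $\mu\eLL v$, so there is a smallest $\mu$ with $L_\mu=L$. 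I assume for contradiction that $\abs\mu\ge m^2+10m$ and show that the next-smaller bound already suffices, which contradicts minimality; once $\abs\mu<m^2+10m$ we get $R_\mu\sse R$ and $L=L_\mu\sse L(I,R)$ as desired.

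Given the minimal $\mu$, Lemma~\ref{lem:pumping} factors it as $\mu=\del_1\abg\del_2$ with $\abs{\del_1},\abs{\del_2}\ge 5m$, $\abs{\abg}=m^2$, $\bet\ne\e$, $\alp\sim\alp\bet$, $\gam\sim\bet\gam$. Let $\nu$ be the next-smaller word and $S'=R_\nu$; then $L(I,S')\ssne L$, and since Lemma~\ref{lem:creation} rearranges any creation so that all rules have inner components shorter than $2m<\abs\mu$, a rule of $R_\mu\sm S'$ can be relevant only through a \emph{first or fourth} component equal to $\mu$. I choose a shortest $w\in L\sm L(I,S')$, write $w=xzy$ with $\abs x=\abs y=3m$, and apply the algorithm of Lemma~\ref{lem:pumping:algorithm} to the occurrences of $\abg$ inside $z$, obtaining $\ti z$ with $z\sim\ti z$ (so $x\ti zy\in L$) in which every surviving $\abg$ carries a block $\alp\bet^{j\slash2}$ to its left or $\bet^{j\slash2}\gam$ to its right. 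Feeding the creation of $x\ti zy$ into Lemma~\ref{lem:creation} with $\ell=3m$, I may assume $w_1,\dots,w_k\in I$, that every intermediate rule has inner components below $2m$ and its bounded outer component below $4m$, and that the initial splicing $(w_0,w_0')\splice s z_1$ has, by Lemma~\ref{lem:shorten}, inner components below $m$.

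The decisive step is to erase every outer component that equals $\mu$. Such a component is a factor of one of the words in the derivation whose embedded $\abg$, by the length of the guards $\del_1,\del_2\;(\ge 5m>\abs{x_1},\abs{y_1})$, sits strictly inside $\ti z$; the pumping property then forces $\alp$ to be followed by $\bet^{j\slash2}$ or $\gam$ to be preceded by $\bet^{j\slash2}$. In the former case $\gam\del_2$ is a prefix of a power of $\bet$, so I replace $\mu$ by $\del_1\alp\gam\del_2\LL\mu$ (which still respects $L$ by $\alp\sim\alp\bet$ and Lemma~\ref{lem:congruence}) and reabsorb the removed $\bet$ into the adjacent spliced word, which is possible precisely because that word already exposes the block $\bet^{j\slash2}$. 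After all these replacements every rule lies in $S'$. I then reverse the pump-up, reducing each block $\abjg$ back to $\abg$ exactly as in the proof of Theorem~\ref{thm:main}: the middle factor returns to $z$, any outer component that reached into $\ti z$ contracts with it and ends up $\LL\mu$, and Lemma~\ref{lem:extension} keeps every rule respecting $L$. The result is a creation of $w=xzy$ from strictly shorter words of $L$ using only rules of $S'$; by minimality of $w$ these words lie in $L(I,S')$, whence $w\in L(I,S')$ --- the contradiction sought.

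The step I expect to be the main obstacle is the one without a Pixton analogue. In Pixton's model the bridge is a fresh segment, absent from both spliced words and disjoint from the pumped region, so bounding it is a single local edit. Here the first and fourth components reappear verbatim in the splicing result and, moreover, the outer component left in $W_\mu$ by Lemma~\ref{lem:creation} is a prefix or suffix that reaches into $\ti z$; I must therefore contract such a component together with the two words it splices during the pump-down, check at each deletion of a $\bet$ that the neighbouring word still exposes enough of a $\bet$-block for the splicing to stay applicable, and ensure that \emph{both} outer positions across \emph{all} rules of the derivation --- not just a single bridge --- fall below $\mu$ simultaneously once the reduction to $\abg$ is complete.
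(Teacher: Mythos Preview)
Your plan is correct and is essentially the paper's own argument: take the minimal $\mu$ with $L_\mu=L$, factor $\mu=\del_1\abg\del_2$ via Lemma~\ref{lem:pumping}, pick a shortest $w\in L\setminus L(I,S')$, pump the middle $z$ to $\ti z$ by Lemma~\ref{lem:pumping:algorithm}, normalise the creation of $x\ti z y$ through Lemma~\ref{lem:creation}, kill the remaining $\mu$-components using the (a)/(b) dichotomy, pump back down, and contradict the choice of $w$.

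Two places where the paper is a little more careful than your sketch. First, $n$ must dominate not only the axioms of $J$ but also the component lengths of $S$ (the paper takes $n=n'+6m$ with $n'$ above every component of $R'$); otherwise $w\notin I$ does not force $\abs{\ti z}\ge\abs\mu$, and Lemma~\ref{lem:creation} cannot be invoked when $z$ happens to contain no factor $\abg$. Second, the paper handles the intermediate $r_i$ and the initial $s$ by different mechanisms rather than a single uniform $\mu\mapsto\del_1\alp\gam\del_2$ edit. For the $r_i$ it \emph{preemptively} shortens any outer component that extends past $\alp$ into the leftmost pumped block $\alp\bet^{j/2}$ (using $\alp\sim\alp\bet$ and Lemma~\ref{lem:congruence}); this one stroke both rules out the value $\mu$ and makes every $r_i$ directly applicable to the unpumped derivation $x_1zy_1\to\cdots\to w$, so no separate ``contract during pump-down'' pass is needed for these rules. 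For the initial $s=(u,v_1;u_2,v)$ with $u=\mu$ in case~(a), your replacement $\del_1\alp\gam\del_2$ does not sit as a suffix of $w_0$ ending at the splicing position; the paper instead writes $\bet\gam\del_2=\bet^{\ell_1}\bet_1$, extends $v$, and shifts exactly one $\bet$ across the cut to $w_0'$ --- which is what your ``reabsorb the removed $\bet$'' is pointing at, but with first component $\del_1\alp\bet^{\ell_1-1}\bet_1$ rather than $\del_1\alp\gam\del_2$.
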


\begin{proof}
	As every word in $I$ belongs to $L$ and every rule in $R$
	respects $L$, the inclusion $L(I,R) \sse L$ holds
	(for any $n$).
	
	Since $L$ is a splicing language, there exists a splicing system $(I',R')$
	which generates $L$.
	Let $n'$ be a number larger	than  any word in $I'$ and larger
	than any component of a rule in $R'$ and let $n = n' +6m$.
	Let $I = \Sig^{\le n}\cap L$ as in the claim and observe that $L(I,R') = L$.
	
	For a word $\mu$ we let $W_\mu = \set{w\in\Sig^*}{w\eLL \mu}$,
	as we did before.
	Define the set of rules where every component
	is length-lexicographically bounded by $\mu$
	\begin{align*}
		R_\mu &= \set{r \in W_\mu^4}{r\text{ respects } L}
	\end{align*}
	and the language $L_\mu = L(I,R_\mu)$; clearly, $L_\mu \sse L$.
	For two words $\mu \eLL v$ we see that $R_\mu\sse R_v$,
	and hence, $L_\mu\sse L_v$.	
	Thus, if $L_\mu = L$ for some word $\mu$, then
	for all words $v$ with $\mu \eLL v$, we have $L_v = L$.
	As $L = L(I,R')$,
	there exists a word $\mu$ such that $L_\mu = L$ and $\abs\mu+6m \le n$.
	Let $\mu$ be the smallest word, in the length-lexicographic order, such that
	$L_\mu = L$.
	Note that if $\abs\mu < m^2+10$, then $R_\mu \sse R$ and
	$L = L_\mu \sse L(I,R)$.
	For the sake of contradiction assume $\abs \mu \ge m^2+10m$.
	Let $\nu$ be the next-smaller word than $\mu$,
	in the length-lexicographic order,
	and let $S = R_{\nu}$.
	Note that $L(I,S) \ssne L$ and $R_\mu\sm S$ contains only rules
	which have a component that is equal to $\mu$.

	Choose $w$ from $L\sm L(I,S)$ as a shortest word, \ie
	for all $w'\in L$ with $\abs {w'} < \abs w$,
	we have $ w' \in L(I,S)$.
	Factorize $w = xzy$ with $\abs x = \abs y = 3m$
	and note that $\abs z \ge \abs{\mu}$, otherwise $w\in I$.
	Factorize $\mu = \del_1\abg\del_2$ with
	$\abs{\del_1},\abs{\del_2} \ge 5m$,
	$\abs{\abg} = m^2$, $\bet\neq \e$, $\alp\sim \alp\bet$,
	and $\gam \sim \bet\gam$ (Lemma~\ref{lem:pumping}).
	
	We will show that there is a series of splicings which creates
	$w$ from a set of shorter words and by using splicing rules from $S$.
	This yields a contradiction to the choice of $w$.
	In order to find this series of splicings we investigate the creation
	of a word $x\ti z y$ where $\ti z$ is derived by using a pumping argument on
	all factors $\abg$ in $z$.
	
	Let $j$ be a sufficiently large even number 
	($j > 4 \abs \mu+\abs z$ will suffice).
	We define a word $\ti z$ which is the result of applying the
	pumping algorithm from Lemma~\ref{lem:pumping:algorithm} on $z$,
	as discussed in Section~\ref{sec:pump:alg}.
	The pumping algorithm replaces the occurrences of $\abg$ in $z$ by $\abjg$
	such that for every factor $\ti z\pos{k,k+m^2} = \abg$, either
	\begin{enumerate}[(a)]
		\item $\alpha\beta^{j\slash 2}$ is a factor of $\ti z$ starting at position
			$\ti z\pos{k}$ or 
		\item $\beta^{j\slash2}\gamma$ is a factor of
			$\ti z$ ending at position $\ti z\pos{k+m^2}$
	\end{enumerate}
	holds.
	In particular, if $\del_1\abg\del_2$ is a factor of $\ti z$ either
	\A $\gamma\del_2$ is a prefix of a word in $\bet^+$ or
	\B $\delta_1\alpha$ is a suffix of a word in $\bet^+$.
	By induction and as $\abg \sim \abjg$, it is easy to see that
	$z\sim \ti z$ and $x\ti zy\in L$.
	
	Let us trace back the creation of $x\ti zy\in L$
	by splicing in $(I,R_\mu)$ to a word $x_1\ti zy_1$
	where either $x_1\ti zy_1\in I$
	or where $x_1\ti zy_1$ is created by a splicing
	that affects $\ti z$, \ie the splicing position lies within the factor $\ti z$.
	Let $z_{k+1} = x_{k+1}\ti zy_{k+1}$, where $x_{k+1} = x$ and $y_{k+1} = y$,
	be created by $k$ splicings from a word $z_1 = x_1\ti z y_1$
	where either $x_1\ti z y_1\in I$ or $x_1\ti zy_1$ is created
	by a splicing $(w_0,w_0')\splice{s} z_1$ with $w_0,w_0'\in L$,
	$s\in R_\mu$, and the splicing position lies in the factor $\ti z$.
	Furthermore, for $i = 1,\ldots,k$ the intermediate splicings are either
	\begin{enumerate}[(i)]
		\item $(w_i,z_i)\splice{r_i} x_{i+1}\ti zy_{i+1}=z_{i+1}$,
			$w_i\in L$, $r_i\in R_\mu$, $y_{i+1} = y_i$,
			and the splicing position lies at the left of the factor $\ti z$ or
		\item $(z_i,w_i)\splice{r_i} x_{i+1}\ti zy_{i+1}=z_{i+1}$,
			$w_i\in L$, $r_i\in R_\mu$, $x_{i+1} = x_i$,
			and the splicing position lies at the right of the factor $\ti z$.
	\end{enumerate}
	Note that $\abs{\ti z} \ge \abs z \ge \abs \mu$ and,
	therefore, we can apply Lemma~\ref{lem:creation} (with $\ell = 3m$).
	Thus, we may assume that $w_i \in I$
	and $\abs{x_i},\abs{y_i} < 5m$ for $i = 1,\ldots,k$.
	
	Consider a rule $r_i$ in a splicing of the form \I.
	By Lemma~\ref{lem:creation},
	$r_i\in \Sig^{<4m}\times\Sig^{<2m}\times\Sig^{<2m}\times W_\mu$.
	Suppose the fourth component of $r_i$ covers a prefix of
	the left-most factor $\alp\bet^{j/2}$ in $\ti z$
	which is longer than $\alp$
	(as $j$ is very large, it cannot fully cover $\alp\bet^{j/2}$).
	By extension (Lemma~\ref{lem:extension}),
	we may write $r_i = (u_1,v_1;u_2,v'\alp\bet^h)$
	for some $h\ge 1$.
	By Lemma~\ref{lem:congruence} and as $\alp \sim \alp\bet$,
	we may replace this rule by $(u_1,v_1;u_2,v'\alp)$.
	Note that, as the fourth component got shorter, now $r_i\in S$.
	
	After we symmetrically treated rules of form \II,
	these new rules $r_1,\ldots,r_k$ and the words $w_1,\ldots,w_k$
	can be used in order to create
	$w = x_{k+1}zy_{k+1}$ from $x_1zy_1$ by splicing.
	In order to see this, observe that,
	even though the factors $\abg$ in $z$, which
	we pumped up before, may overlap with each other,
	the left-most (and right-most) position where we replaced $\bet$
	by $\bet^j$ is preceded by the factor $\alp$ 
	(\resp succeeded by the factor $\gam$) in $\ti z$.

	Next, we show that all the rules $r_1,\ldots,r_k$ belong to $S$, now.
	By contradiction, suppose $r_i\notin S$ for some $i$ and, by symmetry,
	suppose this $i$-th splicing is of the form \I.
	Thus, the fourth component of $r_i$ has to be $\mu = \del_1\abg\del_2$.
	As $\abs{\del_1} \ge 5m > \abs{x_i}$,
	the factor $\abg$ in $\mu$ is covered by $\ti z$.
	Let $k$ such that $\abg = \ti z\pos{k;k+m^2}$ is this factor in $\ti z$.
	The pumping algorithm ensured that
	\A $\alp\bet^{j\slash2}$ is a factor of $\ti z$ starting at position
	$\ti z\pos k$ or 
	\B $\bet^{j\slash2}\gam$ is a factor of $\ti z$ ending at position
	$\ti z\pos{k+m^2}$.
	As $j/2$ is very large and the splicing position of
	$(w_i,z_i)\splice{r_i} z_{i+1}$ is too close
	to the left end of $z_{i+1}$, case \B is not possible.
	Thus, case \A holds, the fourth component of $r_i$ overlaps
	in more than $\abs{\alp}$ letters with the left-most factor $\alp\bet^{j/2}$
	in $\ti z$, and we used the replacement above which ensured $r_i\in S$
	--- contradiction.

	Let us summarize:
	if $x_1zy_1$ was in $L(I,S)$, then $w\in L(I,S)$ as well,
	which would contradict the choice of $w$.
	If $z_1 = x_1\ti z y_1 \in I$, then $x_1 zy_1$, which is
	at most as long as $z_1$, would belong to $I$ and we are done.
	We only have to consider the case when
	$(w_0,w_0')\splice{s} z_1 = x_1\ti z y_1$
	and the splicing position lies within the factor $\ti z$.
	We will show that, from this splicing, we derive another splicing
	$(\ti w_0,\ti w_0')\splice t x_1zy_1$ which respects $L(I,S)$
	and, therefore, yields the contradiction.
	
	Let $s = (u,v_1;u_2,v)$,
	$w_0 = x uv_1$ and $w_0' = u_2v y$
	where $\abs{v_1},\abs{u_2} < m$,
	by Lemma~\ref{lem:shorten}
	(here, $x$ and $y$ are newly chosen words).
	We have
	\[
		z_1 = x_1\ti zy_1 = xuvy
	\]
	where $xu$ is a proper prefix of $x_1\ti z$
	and $vy$ is a proper suffix of $\ti z y_1$.

	We will see next that if $s\notin S$,
	then we can use a rule $\ti s\in S$ and 
	maybe slightly modified words
	in order to obtain $z_1$ by splicing.
	If $s\notin S$, then $u = \mu$ or $v = \mu$.
	Suppose $u = \mu = \del_1\abg\del_2$.
	Thus, the factor $\abg$ of $\mu$ is covered by the factor $\ti z$ in $z_1$ as 
	$\abs{\del_1} \ge 5m > \abs{x_1}$.
	Let $\abg = \ti z\pos{k;k+m^2}$ be this factor.
	\A $\alp\bet^{j\slash2}$ is a factor of $\ti z$ starting at position
	$\ti z\pos k$ or 
	\B $\bet^{j\slash2}\gam$ is a factor of $\ti z$ ending at position
	$\ti z\pos{k+m^2}$.
	If \B holds, $\del_1\alp$ is a suffix of a word in $\bet^+$.
	We may write $\del_1\alp = \bet_2\bet^\ell$ where $\ell \ge 0$
	and $\bet_2$ is a suffix of $\bet$.
	Replace $u$ by $\bet_2\gam\del_1$
	and use this new rule $\ti s$ in order to splice
	$(w_0,w_0')\splice{\ti s} z_1$.
	Note that the first component is now shorter than $\mu$.	
	Otherwise, \A holds and $\gam\del_2 v$ is a prefix of
	a word in  $\bet^+$.	
	As $j$ is very large and $\gam$ is a prefix of a word in $\bet^+$,
	we may extend $v$ (Lemma~\ref{lem:extension})
	such that we can write $\bet\gam\del_2 = \bet^{\ell_1}\bet_1$
	and $v = \bet_2\bet^{\ell_2}\gam$
	where $\ell_1\ge 1$, $\ell_2 \ge 0$ and $\bet_1\bet_2 = \bet$.
	Now, we pump down one of the $\bet$ in the first component
	and $\bet^{\ell_2}$ in the fourth component and
	we let $\ti s = (\del_1\alp\bet^{\ell_1-1}\bet_1,v_1;u_2,\bet_2\gam)\sim s$.
	As all components are shorter than $\mu$, we see that $\ti s \in S$ and
	\[
		(x\del_1\alp\bet^{\ell_1-1}\bet_1v_1,u_2\bet_2\bet^{\ell_2+1}\gam y)
			\splice{\ti s} z_1,
	\]
	\ie we have shifted one of the occurrences of $\bet$ from $w_0$ to $w_0'$.
	Note that $\bet_2\gam$ is a prefix of $\bet_2\bet^{\ell_2+1}\gam$.
	Treating the fourth component analogously
	justifies the assumption that $s \in S$.

	Next, we will pump down the factors $\alp\bet^j\gam$ to $\abg$
	in $\ti z$ again.
	At every position where we pumped up before, we are now pumping down
	(in reverse order)
	in order to obtain the words $\ti x,\ti u,\ti v,\ti y$ from
	the words $x,u,v,y$, respectively.
	For each pumping step do:
	\begin{asparaenum}
		\item If $u$ is covered by the factor $\abjg$
	(which we pump down in this step),
	extend $u$ to the left such that it becomes a prefix of $\abjg$.
	Symmetrically, if $v$ is covered by the factor $\abjg$,
	extend $v$ to the right such that it becomes a suffix of $\abjg$
	(Lemma~\ref{lem:extension}).
	Observe that extension ensures that the factor $\abjg$ is covered
	by either $xu$, $uv$, or $vy$.

	\item If $\alp\bet^j$ or $\bet^j\gam$ is covered by one of
	$x$, $u$, $v$, or $y$, then replace this factor
	by $\alp\bet$ or $\bet\gam$, respectively.
	Otherwise, let us show how to pump when
	$\alp\bet^j\gam$ is covered by $xu$.
	The cases when $\abjg$ is covered by $uv$ or $vy$ can be treated
	analogously.
	We can factorize $x = x'\alp\bet^{j_1}\bet_1$ and
	$u = \bet_2\bet^{j_2}\gam u'$
	where $\bet_1\bet_2 = \bet$ and $j_1+j_2+1 = j$.
	The pumping results are the words $x'\alp\bet_1$
	and $\bet_2\gam  u'$, respectively.
	\end{asparaenum}
	
	Observe that, after reversing all pumping steps,
	$\ti x\ti u \sim xu$, $\ti v \ti y\sim vy$,
	$\ti x\ti u\ti v\ti y = x_1zy_1$,
	and the rule $t = (\ti u, v_1;u_2,\ti v)$ respects $L$.
	Furthermore, if we used extension for $u$ (or $v$) in one of the steps,
	then $\abs{\ti u} \le m^2$ (\resp $\abs{\ti v}\le m^2$);
	in any case $t\in S$.
	Recall that $w$ was chosen as the shortest word from $L\sm L(I,S)$.
	As $\abs{\ti x\ti u v_1},\abs{u_2\ti v\ti y} < \abs{z}+6m =\abs w$,
	the words $\ti w_0 = \ti x\ti u v_1$ and
	$\ti w_0' = u_2\ti v\ti y$ belong to $L(I,S)$,
	and as $(\ti w_0,\ti w_0')\splice{t} x_1 z y_1$,
	we conclude that $x_1 z y_1$ as well as $w$ belong to $L(I,S)$ ---
	the desired contradiction.
\end{proof}

Now, we can prove our main result.

\begin{proof}[Proof of Theorem~\ref{thm:main:four}]
	Recall that for a splicing language $L$ with $m = \abs{M_L}$
	we intend to prove that the
	splicing system $(I,R)$ with $I = \Sig^{<m^2+ 6m}\cap L$ and
	\[
		R = \set{r\in\Sig^{< m^2+10m}\times\Sig^{< 2m}\times
		\Sig^{< 2m}\times\Sig^{< m^2+10m}}
		{r\text{ respects }L}
	\]
	generates the language $L = L(I,R)$.
	
	Obviously, $L(I,R) \sse L$.
	By Lemma~\ref{lem:long:rules},
	we may assume that $L$ is generated by a splicing system
	$(J,S)$ where
	\[
		S = \set{r\in(\Sig^{< m^2+10m})^4}
		{r\text{ respects }L}.
	\]

	In order to prove $L\sse L(I,R)$,
	we use induction on the length of words in $L$.
	For $w\in L$ with $\abs w < m^2+6m$,
	by definition, $w\in I \sse L(I,R)$.

	Now, consider $w\in L$ with $\abs w \ge m^2+6m$.
	The induction hypothesis states that
	every word $w'\in L$ with $\abs{w'} < \abs {w}$ belongs to $L(I,R)$.
	Factorize $w = x\alp\bet\gam\del y$ such that
	$\abs{x} = \abs{y} = 3m$,
	$\abs{\alp\bet\gam} = m^2$, $\bet \ne \e$,
	$\alp \sim \alp\bet$, and $\gam \sim \bet\gam$ (Lemma~\ref{lem:pumping}).

	The proof idea is similar as in the proof of Lemma~\ref{lem:long:rules}.
	We use a pumping argument on $\bet$ in order to obtain
	a very long word.
	This word has to be created by a series of splicings in $(J,S)$.
	We show that these splicings can be
	modified in order to create $w$ by splicing from
	a set of strictly shorter words and with rules from $R$.
	Then, the induction hypothesis yields $w\in L(I,R)$.

	Choose $j$ sufficiently large ($j > \abs{w}+m^2+10m$ and 
	$J$ does not contain words of length $j$ or more).
	We let $z = \abjgd$ and
	investigate the creation of $xzy \in L$ by splicing in $(J,S)$.
	As $z$ is not a factor of a word in $J$, we can 
	trace back the creation of $xzy$ by splicing to the point
	where the factor $z$ is affected for the last time.
	Let $z_{k+1} = x_{k+1}zy_{k+1}$, where $x_{k+1} = x$ and $y_{k+1} = y$,
	be created by $k$ splicings from a word $z_1 = x_1z y_1$
	which is created
	by a splicing $(w_0,w_0')\splice{s} z_1$ with $w_0, w_0'\in L$,
	$s\in S$, and the splicing position lies in the factor $z$.
	Furthermore, for $i = 1,\ldots,k$ the intermediate splicings are either
	\begin{enumerate}[(i)]
		\item $(w_i,z_i)\splice{r_i} x_{i+1}zy_{i+1}=z_{i+1}$,
			$w_i\in L$, $r_i\in S$, $y_{i+1} = y_i$,
			and the splicing position lies at the left of the factor $z$ or
		\item $(z_i,w_i)\splice{r_i} x_{i+1}zy_{i+1}=z_{i+1}$,
			$w_i\in L$, $r_i\in S$, $x_{i+1} = x_i$,
			and the splicing position lies at the right of the factor $z$.
	\end{enumerate}
	As $\abs{z} \ge m^2+10m$ we can apply Lemma~\ref{lem:creation}.
	Thus, we may assume $w_1,\ldots,w_k \in I$, $r_1,\ldots,r_k\in R$,
	and $\abs{x_1},\abs{y_1} < 5m$.

	Consider a rule $r_i$ in a splicing of the form \I.
	Suppose the fourth component of $r_i$ covers a prefix of
	the factor $\alp\bet^j$ in $z$ which is longer than $\alp\bet$
	(as $j$ is very large, it cannot fully cover $\alp\bet^{j}$).
	By extension (Lemma~\ref{lem:extension}),
	we may write $r_i = (u_1,v_1;u_2,v'\alp\bet^\ell)$
	for some $\ell\ge 1$.
	By Lemma~\ref{lem:congruence} and as $\alp \sim \alp\bet$,
	we may replace this rule by $(u_1,v_1;u_2,v'\alp) \in R$.
	Moreover, after we symmetrically treated rules of form \II,
	these new rules $r_1,\ldots,r_k$ and the words $w_1,\ldots,w_k$
	can be used in order to create
	$w = x_{k+1}\abgd y_{k+1}$ from $x_1\abgd y_1$ by splicing.
	Thus, if $x_1\abgd y_1$ belongs to $L(I,R)$, so does $w$.

	Now, consider the first splicing
	$(w_0, w_0')\splice s z_1 = x_1zy_1$.
	By Lemma~\ref{lem:shorten},
	let $s = (u,v_1;u_2,v)$ such that
	$w_0 = xuv_1$, $w_0' = u_2v y$
	and $\abs{v_1},\abs{u_2} <m$
	(here, $x$ and $y$ are newly chosen words).
	Hence,
	\[
		z_1 = xuvy = x_1z y_1 = x_1\abjgd y_1
	\]
	where $xu$ is a proper prefix of $x_1z$ and $vy$ is a proper suffix of $z y_1$.

	Next, we will pump down the factor $\alp\bet^j\gam$ to $\abg$
	in $ z$ again in order to obtain the words $\ti x,\ti u,\ti v,\ti y$
	from the word $x,u,v,y$, respectively.
	The pumping is done as in the proof of Lemma~\ref{lem:long:rules}:

	\begin{asparaenum}
	\item If $u$ is covered by the factor $\abjg$,
	extend $u$ to the left such that it becomes a prefix of $\abjg$.
	Symmetrically, if $v$ is covered by the factor $\abjg$,
	extend $v$ to the right such that it becomes a suffix of $\abjg$
	(Lemma~\ref{lem:extension}).
	Observe that extension ensures that the factor $\abjg$ is covered
	by either $xu$, $uv$, or $vy$.
	\item If $\alp\bet^j$ or $\bet^j\gam$ is covered by one of
	$x$, $u$, $v$, or $y$, then replace this factor
	by $\alp\bet$ or $\bet\gam$, respectively.
	Otherwise, let us show how to pump when
	$\alp\bet^j\gam$ is covered by $xu$.
	The cases when $\abjg$ is covered by $uv$ or $vy$ can be treated
	analogously.
	We can factorize $x = x'\alp\bet^{j_1}\bet_1$ and
	$u = \bet_2\bet^{j_2}\gam u'$
	where $\bet_1\bet_2 = \bet$ and $j_1+j_2+1 = j$.
	The pumping result are the words $x'\alp\bet_1$
	and $\bet_2\gam  u'$, respectively.
	\end{asparaenum}

	Observe that,
	$\ti x\ti u \sim xu$, $\ti v \ti y\sim vy$,
	$\ti x\ti u\ti v\ti y = x_1\abgd y_1$,
	and the rule $t = (\ti u, v_1;u_2,\ti v)$ respects $L$.
	Furthermore, if we used extension for $u$ (or $v$),
	then $\abs{\ti u} \le m^2$ (\resp $\abs{\ti v}\le m^2$).
	No matter whether we used extension, $t\in R$.
	As $\abs{\ti x\ti u v_1},\abs{u_2\ti v\ti y} < \abs{z}+6m =\abs w$
	and by induction hypothesis,
	the words $\ti w_0 = \ti x\ti u v_1$ and $\ti w_0 = u_2\ti v\ti y$ belong 
	to $L(I,S)$.
	We conclude that $(\ti w_0,\ti w_0')\splice t x_1\abgd y_1\in L(I,R)$
	and, therefore, $w=x_{k+1}\abgd y_{k+1}\in L(I,R)$ as well.
\end{proof}

%%%%%%%%%%%%%%%%%%%%%%%%%%%%%%%%%%%%%%%%%%%%%%%%%%%%%%%%%%%%%%%%%%%%%%%%%%%%%%%%
%%%%%%%%%%%%%%%%%%%%%%%%%%%%%%%%%%%%%%%%%%%%%%%%%%%%%%%%%%%%%%%%%%%%%%%%%%%%%%%%
%%%%%%%%%%%%%%%%%%%%%%%%%%%%%%%%%%%%%%%%%%%%%%%%%%%%%%%%%%%%%%%%%%%%%%%%%%%%%%%%
\section{Decidability}\label{sec:decidability}

The main question we intended to answer when starting our investigation was, whether or not it is decidable if a given regular language $L$ is a splicing language.
If we can decide whether a splicing rule respects a regular language
and if we can construct a (non-deterministic) finite automaton accepting
the language generated by a given splicing system,
then we can decide whether $L$ is a classic splicing language
(Pixton splicing language) as follows.
We compute the splicing system $(I,R)$ as given in Theorem~\ref{thm:main:four}
(\resp Theorem~\ref{thm:main}) and we compute a finite automaton accepting
the splicing language $L(I,R)$.
Theorem~\ref{thm:main:four} (\resp Theorem~\ref{thm:main})
implies that $L$ is a splicing language if and only if $L = L(I,R)$.
Recall that equivalence of regular languages is decidable, \eg by constructing
and comparing the minimal deterministic finite automata of both languages.

It is known from \cite{HeadPG02,GoodePHD} that it is decidable
whether a classic splicing rule respects a regular language.
Furthermore, there is an effective construction of a
finite automaton which accepts the language generated by a Pixton splicing system
\cite{Pixton96}.
As mentioned earlier, Pixton splicing systems are more general than classic
splicing systems, which means the latter result applies to classic splicing systems, too.
Such a construction for classic splicing systems is also given in \cite{HeadP06}.

Let us prove that it is decidable whether a Pixton splicing rule $r$ respects
a regular language $L$.
Actually, we will decide whether the set $[r]_L$ respects $L$,
which is equivalent by Lemma~\ref{lem:congruence}.
The proof can easily be adapted in order to prove that it is decidable whether
a classic splicing rule respects $L$.

\begin{lemma}
	Let $L$ be a regular language and let $r$ be a Pixton splicing rule.
	It is decidable whether $r$ respects $L$.
\end{lemma}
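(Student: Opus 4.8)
The plan is to replace the quantification over all words by a finite computation inside the syntactic monoid $M_L$, which is finite because $L$ is regular. Write $r = (u_1,u_2;v)$. By the definition of the splicing operator, $r$ respects $L$ exactly when, for all words $x_1,y_1,x_2,y_2$,
\[
	x_1u_1y_1\in L \ \wedge\ x_2u_2y_2\in L \ \implies\ x_1vy_2\in L .
\]
The key observation is that membership in $L$ depends only on the syntactic classes of the words involved, and that the projection $w\mapsto[w]_L$ maps $\Sig^*$ \emph{onto} $M_L$. Hence letting $x_1,y_1,x_2,y_2$ range over all words is the same as letting $a=[x_1]_L,\,b=[y_1]_L,\,c=[x_2]_L,\,d=[y_2]_L$ range independently over all of $M_L$. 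Writing $P=\set{[w]_L}{w\in L}\sse M_L$ and fixing $U_1=[u_1]_L$, $U_2=[u_2]_L$, $V=[v]_L$, I would therefore prove that $r$ respects $L$ if and only if, for all $a,b,c,d\in M_L$,
\[
	aU_1b\in P \ \wedge\ cU_2d\in P \ \implies\ aVd\in P .
\]

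This last condition quantifies over the finite set $M_L^4$ and involves only monoid multiplication and membership in the computable set $P$, so it is decidable. To keep the check small, I would note that the two hypotheses decouple: define the computable subsets
\[
	A=\set{a\in M_L}{\exists\,b\in M_L\colon aU_1b\in P},\qquad
	B=\set{d\in M_L}{\exists\,c\in M_L\colon cU_2d\in P}
\]
of $M_L$. Then $r$ respects $L$ if and only if $aVd\in P$ holds for every $a\in A$ and every $d\in B$, a finite number of tests once $M_L$, its multiplication table, and $P$ have been computed from a finite automaton for $L$.

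Observe that the resulting criterion depends on $r$ only through the triple $(U_1,U_2,V)\in M_L^3$, that is, only through the syntactic class $[r]_L$; this is precisely the reduction announced before the statement and it is consistent with Lemma~\ref{lem:congruence}. The same scheme adapts verbatim to the classical (quadruple) case: for $r=(u_1,v_1;u_2,v_2)$ one fixes $U_1V_1$ and $U_2V_2$ together with the output element $U_1V_2$ in $M_L$, and asks, for all $a,b,c,d$, whether $a(U_1V_1)b\in P\wedge c(U_2V_2)d\in P$ implies $a(U_1V_2)d\in P$. I expect the only point needing care --- and hence the main obstacle --- to be the justification of the passage from words to monoid elements: one must check that both the universally quantified hypotheses and the conclusion are genuinely determined by the syntactic classes (so that surjectivity of $w\mapsto[w]_L$ really lets one quantify over $M_L$), and that the existential witnesses $y_1,x_2$ implicit in the definition of a splicing are faithfully captured by the sets $A$ and $B$.
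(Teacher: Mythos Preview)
Your proposal is correct and is essentially the paper's own proof: your sets $A$ and $B$ are exactly the paper's $S_1$ and $S_2$ (since $X[u_i]Y\sse L$ is the same as $XU_iY\in P$), and your criterion ``$aVd\in P$ for all $a\in A,\ d\in B$'' is precisely the paper's ``$X[v]Y\sse L$ for all $X\in S_1,\ Y\in S_2$.'' The only difference is cosmetic notation; your explicit remark on surjectivity of $w\mapsto[w]_L$ and on the decoupling of the two hypotheses makes the argument, if anything, a little more detailed than the paper's.
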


\begin{proof}
	Let $\sim$ denote the equivalence relation $\sim_L$
	and $[\,\cdot\,]$ denote the corresponding equivalence classes $[\,\cdot\,]_L$.

	Let $r = (u_1,u_2;v)$.
	We define the two sets $S_1,S_2 \sse M_L$ as
	\begin{align*}
		S_1 &= \set{X\in M_L}{\exists Y \colon X[u_1]Y \sse L}, &
		S_2 &= \set{Y\in M_L}{\exists X \colon X[u_2]Y \sse L},
	\end{align*}
	\ie $[x_1]$ belongs to $S_1$ if and only if $x_1u_1y_1\in L$ for some
	word $y_1$ and $[y_2]$ belongs to $S_2$ if and only if $x_2u_2y_2\in L$ for some
	word $x_2$.
	We claim that $r$ respects $L$ if and only if $X[v]Y\sse L$ for all
	$X\in S_1$ and $Y\in S_2$, which is a property that can easily be decided.
	
	Firstly, suppose $r$ respects $L$.
	For $X\in S_1$ and $Y\in S_2$ choose words $x_1\in X$ and $y_2\in Y$.
	By definition of $S_1$ and $S_2$, there is $y_1$ and $x_2$ such that
	$x_iu_iy_i \in L$ for $i = 1,2$ and,
	as $r$ respects $L$, $x_1vy_2\in L$.
	This implies $X[v]Y\sse L$.
	
	Vice verse, suppose $X[v]Y \sse L$ for all $X\in S_1$ and $Y\in S_2$.
	For all $x_iu_iy_i\in L$ with $i = 1,2$,
	we have $[x_1]\in S_1$ and $[y_2]\in S_2$.
	Therefore, $x_1vy_2 \in [x_1][v][y_2] \sse L$ and $r$ respects $L$.
\end{proof}

These observations lead to the decidability results.

\begin{corollary}\label{cor:decidability}\nopagebreak
\ \\ \vspace{-\baselineskip}\nopagebreak
\begin{enumerate}[i.)]	
	\item	
	For a given regular language $L$, it is decidable whether or not $L$
	is a classic splicing language.
	Moreover, if $L$ is a classic splicing language,
	a splicing system $(I,R)$ generating $L$ can be effectively constructed.
	\item	
	For a given regular language $L$, it is decidable whether or not $L$
	is a Pixton splicing language.
	Moreover, if $L$ is a Pixton splicing language,
	a splicing system $(I,R)$ generating $L$ can be effectively constructed.
\end{enumerate}
\end{corollary}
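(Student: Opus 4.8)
The plan is to assemble the corollary from a handful of effective subroutines, treating the two parts uniformly and only swapping the cited black boxes. Given a regular language $L$, I would first compute its syntactic monoid $M_L$ from any automaton for $L$ and set $m = \abs{M_L}$. I then explicitly construct the canonical splicing system $(I,R)$ prescribed by Theorem~\ref{thm:main:four} in the classic case (\resp Theorem~\ref{thm:main} in the Pixton case), build a finite automaton accepting $L(I,R)$, and decide whether $L = L(I,R)$. Since those theorems assert that $L$ is a splicing language if and only if $L = L(I,R)$, this decides membership in the respective class, and in the affirmative case the already-built $(I,R)$ witnesses the ``moreover'' statement.

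The first point to establish is that $(I,R)$ is genuinely \emph{constructible}, not merely existent. The axiom set $I = \Sig^{<m^2+6m}\cap L$ is finite and computable: one enumerates the finitely many words of length below the explicit bound and tests each for membership in the regular language $L$. The rule set $R$ is likewise a finite set of candidates, since every component is constrained to an explicit length bound (\eg $\Sig^{<m^2+10m}\times\Sig^{<2m}\times\Sig^{<2m}\times\Sig^{<m^2+10m}$ in the classic case). For each candidate we must decide whether it respects $L$; this is exactly the content of the preceding lemma for Pixton triplets, and for classic quadruplets it is guaranteed by \cite{HeadPG02,GoodePHD} (the proof of that lemma also adapts directly). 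As only finitely many candidates arise, $R$ is effectively computable.

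Next I would invoke the construction of a finite automaton for $L(I,R)$: for Pixton systems this is provided by \cite{Pixton96}, and for classic systems by \cite{HeadP06} (or by viewing a classic system as a Pixton system, as noted in Section~\ref{sec:pixton}). With automata for both $L$ and $L(I,R)$ in hand, equivalence is decidable by minimizing and comparing the deterministic automata. This yields the decision procedure for both parts.

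The content here is essentially assembly: all the hard work is already carried out in Theorems~\ref{thm:main} and~\ref{thm:main:four}, which reduce the \emph{a priori} unbounded search over all splicing systems to a single canonical system whose size is bounded in $\abs{M_L}$. The only genuine obstacle is therefore the constructibility of that canonical system, which rests on the decidability of whether a rule respects $L$ together with the finiteness of the candidate sets; both are secured above.
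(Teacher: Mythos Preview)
Your proposal is correct and follows essentially the same approach as the paper: construct the canonical system $(I,R)$ from Theorem~\ref{thm:main:four} (\resp Theorem~\ref{thm:main}), check that each candidate rule respects $L$ via the preceding lemma or \cite{HeadPG02,GoodePHD}, build an automaton for $L(I,R)$ using \cite{Pixton96,HeadP06}, and test equivalence with $L$. The paper's own argument is laid out in the first three paragraphs of Section~\ref{sec:decidability} and matches yours point for point.
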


%%%%%%%%%%%%%%%%%%%%%%%%%%%%%%%%%%%%%%%%%%%%%%%%%%%%%%%%%%%%%%%%%%%%%%%%%%%%%%%%
%%%%%%%%%%%%%%%%%%%%%%%%%%%%%%%%%%%%%%%%%%%%%%%%%%%%%%%%%%%%%%%%%%%%%%%%%%%%%%%%
\section*{Final Remarks}

It has been known since 1991 that the class \cS of languages that can be generated by a splicing system is a proper subclass of the class of regular languages.
However, to date, no other natural characterization for the class \cS exists.
The problem of deciding whether a regular language is generated by a splicing system is a fundamental problem in this context and has remained unsolved.
To the best of our knowledge, the problem was first stated in the literature in 1998 \cite{Head98}.
In this paper we solved this long standing open problem.

Regarding the complexity of the decision algorithm, let $L$ be a regular language given as syntactic monoid $M_L$ and $(I,R)$ be the splicing system described in Theorem~\ref{thm:main:four} (\resp Theorem~\ref{thm:main}).
An automaton which accepts $L(I,R)$ and is created as described in Section~\ref{sec:decidability} has a state set of size in $2^{\Oh(m^2)}$, where $m = \abs{M_L}$.
Deciding the equivalence of two regular languages, given as NFAs, is known to be \PSPACE-complete \cite{StockmeyerM73};
hence, the naive approach to decide whether or not $L = L(I,R)$ uses double exponential time $2^{2^{\Oh(m^2)}}$.
As there may be an exponential gap between an NFA accepting $L$ and the syntactic monoid $M_L$, the complexity, when considering an NFA as input, becomes triple exponential.
Improving the complexity of the algorithm is subject of future research.

%%%%%%%%%%%%%%%%%%%%%%%%%%%%%%%%%%%%%%%%%%%%%%%%%%%%%%%%%%%%%%%%%%%%%%%%%%%%%%%%
%%%%%%%%%%%%%%%%%%%%%%%%%%%%%%%%%%%%%%%%%%%%%%%%%%%%%%%%%%%%%%%%%%%%%%%%%%%%%%%%
%\bibliographystyle{abbrv}
%\bibliography{splicing}

\end{document}